\newif\ifready\readytrue
\newif\ifcameraready\camerareadyfalse
\pgfplotsset{
    compat=1.3,
    legend image code/.code={
        \draw [#1] (0cm,-0.1cm) rectangle (0.6cm,0.1cm);
    },
}
\setlist{noitemsep,topsep=0pt,parsep=0pt,partopsep=0pt}
\theoremstyle{plain}
\newtheorem{theorem}{Theorem}[section]
\newtheorem{invariant}{Invariant}
\newtheorem{lemma}[theorem]{Lemma}
\newtheorem{definition}[theorem]{Definition}
\newtheorem{observation}[theorem]{Observation}
\newcommand{\defn}[1]{\textbf{\textit{#1}}}
\crefname{theorem}{Theorem}{Theorems}
\Crefname{lemma}{Lemma}{Lemmas}
\Crefname{claim}{Claim}{Claims}
\Crefname{observation}{Observation}{Observations}
\Crefname{algorithm}{Algorithm}{Algorithms}
\Crefname{myalgctr}{Algorithm}{Algorithms}
\Crefname{challenge}{Challenge}{Challenges}
\algrenewcommand\algorithmicindent{1em}%
\newcommand{\alg}{\mathcal{A}}
\newcommand{\mM}{\mathcal{M}}
\newcommand{\mR}{\mathcal{R}}
\newcommand{\tO}{\widetilde{O}}
\DeclarePairedDelimiter\ceil{\lceil}{\rceil}
\DeclareMathOperator{\poly}{poly}
\definecolor{mygreen}{RGB}{20,140,80}
\definecolor{linkcolor}{RGB}{0,0,230}
\definecolor{mylightgray}{RGB}{230,230,230}
\definecolor{verylightgray}{RGB}{245,245,245}
\newcommand{\etal}[0]{et al.\xspace}
\newcounter{myalgctr}
\newtcolorbox{OuterBox}[1][]{%
    breakable,
    enhanced,
    frame hidden,
    interior hidden,
    left=-5pt,
    right=-5pt,
    top=-5pt,
    float=p,
    boxsep=0pt,
    arc=0pt
#1}%
\newtcolorbox{InnerBox}[1][]{%
    enforce breakable,
    enhanced,
    colback=gray,
    colframe=white,
#1}%
\newenvironment{tbox}{
\vspace{0.2cm}
\begin{tcolorbox}[width=\columnwidth,
                  enhanced,
                  boxsep=2pt,
                  left=1pt,
                  right=1pt,
                  top=4pt,
                  boxrule=1pt,
                  arc=0pt,
                  colback=white,
                  colframe=black,
	              breakable
                  ]%
}{
\end{tcolorbox}
}
\newcommand{\tboxhrule}[0]{\vspace{0.1cm} {\color{black} \hrule} \vspace{0.2cm}}
\newenvironment{titledtbox}[1]{\begin{tbox}#1 \tboxhrule}{\end{tbox}}
\newcommand{\opt}{\textsc{OPT}}
\newcommand{\eps}{\varepsilon}
\newcommand{\whp}{\textit{whp}\xspace}
\newcommand{\geom}{\mathsf{Geom}}
\newcommand{\fact}{\log^2 n}
\newcommand{\copies}{C}
\newcommand{\mmd}{\widehat{M}_d}
\newcommand{\mwmapprox}{6\eps}
\newcommand{\bidderpot}{\Pi_{bidders}}
\newcommand{\mwm}{MWM\xspace}
\newcommand{\nonduplicate}{non-duplicate\xspace}
\newcommand{\mcbm}{\textsc{MCbM}\xspace}
\newcommand{\mcm}{\textsc{MCM}\xspace}
\newcommand{\draw}{\geom(\exp(\eps/\fact))}
\newcommand{\heps}{\eps}
\newcommand{\mwmpretransform}{\frac{\log^2\left(W\right)}{\eps^4}}
\newcommand{\phases}{\ceil{\mwmpretransform}}
\newcommand{\bmphases}{\ceil{\frac{2}{\eps^2}}}
\newcommand{\unhappy}{\textsc{Unhappy}}
\newcommand{\happy}{\textsc{Happy}}
\newcommand{\defined}{\triangleq}
\newcommand{\wopt}{W_{\opt}}
\newcommand{\hatM}{\hat{M}}
\title{Scalable Auction Algorithms for Bipartite Maximum Matching Problems}
\author{Quanquan C. Liu, Yiduo Ke, Samir Khuller}
\date{}
\begin{document}
\renewcommand{\algorithmicrequire}{\textbf{Input:}}
\renewcommand{\algorithmicensure}{\textbf{Output:}}
\algblock{ParFor}{EndParFor}
\algblock{Input}{EndInput}
\algblock{Output}{EndOutput}
\algblock{ReduceAdd}{EndReduceAdd}

\algnewcommand\algorithmicparfor{\textbf{parfor}}
\algnewcommand\algorithmicinput{\textbf{Input:}}
\algnewcommand\algorithmicoutput{\textbf{Output:}}
\algnewcommand\algorithmicreduceadd{\textbf{ReduceAdd}}
\algnewcommand\algorithmicpardo{\textbf{do}}
\algnewcommand\algorithmicendparfor{\textbf{end\ input}}
\algrenewtext{ParFor}[1]{\algorithmicparfor\ #1\ \algorithmicpardo}
\algrenewtext{Input}[1]{\algorithmicinput\ #1}
\algrenewtext{Output}[1]{\algorithmicoutput\ #1}
\algrenewtext{ReduceAdd}[2]{#1 $\leftarrow$ \algorithmicreduceadd(#2)}
\algtext*{EndInput}
\algtext*{EndOutput}
\algtext*{EndIf}
\algtext*{EndFor}
\algtext*{EndWhile}
\algtext*{EndParFor}
\algtext*{EndReduceAdd}

\maketitle
\sloppy
\thispagestyle{empty}

\pagenumbering{arabic} 

\begin{abstract}
Bipartite maximum matching and its variants are well-studied problems under various models of computation with the vast majority of approaches centering around various methods to find and eliminate augmenting paths. Beginning with the seminal papers of Demange, Gale and Sotomayor [DGS86] and Bertsekas [Ber81], bipartite maximum matching problems have also been studied in the context of \emph{auction algorithms}. These algorithms model the maximum matching problem as an auction where one side of the bipartite graph consists of bidders and the other side consists of items; as such, these algorithms offer a very different approach to solving this problem that do not use classical methods. Dobzinski, Nisan and Oren [DNO14] demonstrated the utility of such algorithms in distributed, interactive settings by providing a simple and elegant $O(\log n/\varepsilon^2)$ round \emph{maximum cardinality bipartite matching (MCM)} algorithm that has small round and communication complexity and gives a $(1-\varepsilon)$-approximation for any (not necessarily constant) $\varepsilon > 0$.  They leave as an open problem whether an auction algorithm, with similar guarantees, can be found for the \emph{maximum weighted bipartite matching (MWM)} problem. Very recently, Assadi, Liu, and Tarjan [ALT21] extended the utility of auction algorithms for MCM into the semi-streaming and massively parallel computation (MPC) models, by cleverly using maximal matching as a subroutine, to give a new auction algorithm that uses $O(1/\varepsilon^2)$ rounds and achieves the state-of-the-art bipartite MCM results in the streaming and MPC settings.

In this paper, we give new auction algorithms for maximum weighted bipartite matching (MWM) and maximum cardinality bipartite $b$-matching (MCbM). Our algorithms run in $O\left(\log n/\varepsilon^8\right)$ and $O\left(\log n/\varepsilon^2\right)$ rounds, respectively, in the blackboard distributed setting. We show that our MWM algorithm can be implemented in the distributed, interactive setting using $O(\log^2 n)$ and $O(\log n)$ bit messages, respectively, directly answering the open question posed by Demange, Gale and Sotomayor [DNO14]. Furthermore, we implement our algorithms in a variety of other models including the the semi-streaming model, the shared-memory work-depth model, and the massively parallel computation model. Our semi-streaming MWM algorithm uses $O(1/\varepsilon^8)$ passes in $O(n \log n \cdot \log(1/\varepsilon))$ space and our MCbM algorithm runs in $O(1/\varepsilon^2)$ passes using $O\left(\left(\sum_{i \in L} b_i + |R|\right)\log(1/\varepsilon)\right)$ space (where parameters $b_i$ represent the degree constraints on the $b$-matching and $L$ and $R$ represent the left and right side of the bipartite graph, respectively). Both of these algorithms improves \emph{exponentially} the dependence on $\varepsilon$ in the space complexity in the semi-streaming model against the best-known algorithms for these problems, in addition to improvements in round complexity for MCbM. Finally, our algorithms eliminate the large polylogarithmic dependence on $n$ in depth and number of rounds in the work-depth and massively parallel computation models, respectively, improving on previous results which have large polylogarithmic dependence on $n$ (and exponential dependence on $\eps$ in the MPC model). 
\end{abstract}

\section{Introduction}\label{sec:intro}

One of the most basic problems in combinatorial optimization is that of bipartite matching. This central problem has been studied extensively in 
many fields including operations 
research, economics, and computer science and is the cornerstone of many algorithm design courses and books. There is an abundance
of existing classical and recent theoretical work on this topic~\cite{Konig1916,edmonds1965maximum,Edmonds1965PathsTA,Harvey2006AlgebraicSA,HK71,LS20,madry13,MV80,ALT21,DNO14,MS04}. 
Bipartite maximum matching and its variants are commonly
taught in undergraduate algorithms courses and are so prominent to be featured regularly in competitive programming contests. In both of these settings, the 
main algorithmic solutions
for maximum cardinality matching (\mcm) 
and its closely related problems of maximum weight matching (\mwm)
are the Hungarian method using augmenting paths and reductions to maximum flow.
Although foundational, such approaches are sometimes difficult to generalize to obtain efficient solutions in other
\emph{scalable models of computation}, e.g.\ distributed, streaming, and parallel 
models.

Although somewhat less popularly known, the elegant and extremely 
simple \emph{auction-based maximum cardinality
and maximum weighted matching algorithms} of Demange, Gale, and Sotomayor~\cite{demange1986multi} and Bertsekas~\cite{bertsekas1981new} solve the 
maximum cardinality/weighted matching problems in bipartite graphs. 
Their \mcm auction algorithms denote vertices on one side of the bipartite
input as \emph{bidders} and the other side as \emph{items}. Bidders are maintained in a queue and while the queue is not empty, the first 
bidder from the queue \emph{bids} on an item with minimum price (breaking ties arbitrarily) from its neighbors. 
This bidder becomes the new owner of the item. Each time an 
item is reassigned to a new bidder, its price increases by some (not necessarily constant) $\eps > 0$. If the assigned
item still has price less than $1$, the bidder is added again to the end of the queue. Setting $\eps = \frac{1}{n + 1}$ results 
in an algorithm that gives an exact maximum cardinality matching in $O(mn)$ time, where $m$ and $n$ refer to the number of edges and vertices respectively. Such an algorithm intuitively takes advantage of the fact that 
bidders prefer items in low demand (smaller price); naturally, such items should also be matched in a maximum cardinality matching.

One of the bottlenecks in the original auction algorithm is the need to maintain bidders in a queue from which they are selected, one at a time, to bid on items. Such a bottleneck is a key roadblock to the scalability of such algorithms.
More recently, Dobzinski, Nisan, and Oren~\cite{DNO14} extended this algorithm to the approximation setting for any (not necessarily constant) $\eps > 0$. 
They give a simple and elegant randomized $(1-\eps)$-approximation algorithm for bipartite 
\mcm in $O\left(\frac{\log n}{\eps^2}\right)$ rounds of communication
for any $\eps > 0$. Furthermore, they illustrate an additional advantage for this algorithm beyond its simplicity. 
They show that in a distributed, interactive, blackboard setting, their auction 
\mcm multi-round interactive algorithm uses less communication bits than traditional algorithms for this problem. This 
interactive setting is modeled via simultaneous communication protocols where agents simultaneously send a single message in each round to a central 
coordinator and some state is computed by the central coordinator 
after each round of communication. The goal in this model is to limit the total number of bits 
sent in all of the agents' messages throughout the duration of the algorithm. They leave as an open 
question whether an interactive, approximation auction algorithm that uses approximately the same number of rounds and bits of communication
can be found for the maximum \emph{weighted} bipartite matching problem.

Such an approach led to the recent simple and elegant paper of Assadi, Liu, and Tarjan~\cite{ALT21} that adapted
their algorithm to the semi-streaming setting and removed the $\log n$ factor in the semi-streaming setting from the number
of passes to give an algorithm that finds an $(1-\eps)$-approximate maximum cardinality matching in $O\left(1/\eps^2\right)$ passes, where in each pass a maximal matching is found.
Furthermore, they showed implementations of their algorithm in the 
massively parallel computation (MPC) model, 
achieving the best-known bounds in both of 
these settings. In this paper, we extend their algorithm to other variants of the problem on bipartite graphs, 
including maximum weight matching and maximum cardinality $b$-matching and achieve 
novel improvements in a variety of \emph{scalable models}. The maximum cardinality $b$-matching problem (\mcbm) is a well-studied
generalization of \mcm. In \mcbm, each vertex is given an integer budget $b_v$ where each vertex can be matched to at most $b_v$ of their neighbors; 
a matching of maximum cardinality contains the maximum possible number of edges in the matching. The $b$-matching problem generalizes a number of 
real-life allocation problems such as server to client request serving, medical school residency matching, ad allocation, and many others.
Although the problem is similar to \mcm, often obtaining efficient algorithms for this problem requires non-trivial additional insights. As indicated in Ghaffari et al \cite{GGM22} $b$-matching problems can be considerably harder than matching.

\paragraph{Summary of Results}
In this paper, we specifically give the following results. Our auction algorithms and their analyses are described in detail in~\cref{sec:mwm} and~\cref{sec:b-matching}.
\begin{restatable}[Maximum Weight Bipartite Matching]{theorem}{mwmthm}\label{thm:mwm}
    There exists an auction algorithm for maximum weight bipartite matching (\mwm) that gives a $(1-\eps)$-approximation
    for any $\eps > 0$ and runs in $O\left(\frac{\log n}{\eps^8}\right)$ rounds of communication (with high probability) and 
    with $O\left(\log^2 n\right)$ bits per message. 
    This algorithm can be implemented in the
    multi-round, semi-streaming model using $O\left(n \cdot \log n \cdot \log(1/\eps)\right)$ space and $O\left(\frac{1}{\eps^8}\right)$ passes.
    This algorithm can be implemented in the work-depth 
    model in $O\left(\frac{m \cdot \log(n)}{\eps^7}\right)$ work
    and $O\left(\frac{\log^3(n)}{\eps^{7}}\right)$ depth. Finally, our algorithm can be implemented in the MPC model 
    using $O\left(\frac{\log\log n}{\eps^7}\right)$ rounds, $O(n \cdot \log_{(1/\eps)}(n)$ space per machine, 
    and $O\left(\frac{(n+m) \log(1/\eps) \log n}{\eps}\right)$ total space.
\end{restatable}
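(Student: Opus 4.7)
The plan is to extend the auction framework of Assadi, Liu, and Tarjan [ALT21] --- which replaces the sequential bidder queue of [DNO14] with a per-round maximal-matching subroutine --- to handle edge weights via a price/profit scheme in the style of Bertsekas, combined with weight-bucketing. The core data structure is an item price vector $p \in \mathbb{R}^R$. A bidder $i$ derives profit $w_{ij} - p_j$ from being assigned to item $j$, and the invariant we maintain is an approximate equilibrium: each matched bidder gets profit within $O(\eps)$ of its best available alternative. First I would preprocess: normalize so $w_{\max} = 1$ and drop edges of weight below $\eps/n$ (losing only an $\eps$-fraction of $\opt$), so all remaining weights lie in $[\eps/n, 1]$, and bucket them geometrically at ratios of $(1+\eps)$, yielding $O(\log(n)/\eps)$ weight classes.

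Next I would run the auction in phases indexed by a descending ``target profit'' $\tau$, which starts at $1$ and decays by a factor of $(1+\eps)$ each phase down to roughly $\eps/n$. Within the phase for $\tau$, we repeatedly (i) let each unmatched bidder $i$ declare the set of items $j$ with $w_{ij} - p_j \geq \tau$, (ii) compute a maximal matching in this bipartite ``interest'' graph (the ALT21 subroutine), (iii) assign the winning bidder--item pairs (displacing the previous owner back to the unmatched pool when necessary), and (iv) raise the price of each newly-won item by $\eps$. The phase ends when the interest graph is empty, at which point every remaining unmatched bidder has profit $< \tau$ from every neighbor. The approximation argument is a charging lemma: for every edge $(i^*,j^*)$ in an optimal \mwm, either $i^*$ is assigned in our solution to some $j$ with $w_{i^*j} - p_j \geq w_{i^*j^*} - p_{j^*} - \tau_{\mathrm{final}}$ (so $i^*$'s contribution is nearly as good), or $j^*$ has price $p_{j^*} \geq w_{i^*j^*} - \tau_{\mathrm{final}}$ (so $j^*$'s owner already pays for it). Summing and using that $\tau_{\mathrm{final}} = O(\eps) \cdot w_{\max}$ yields a $(1-O(\eps))$-approximation after rescaling $\eps$.

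The main obstacle will be bounding the total round count at $O(\log(n)/\eps^8)$. Each individual sub-phase for a fixed $\tau$ behaves like a weighted variant of [ALT21]'s \mcm procedure, and the key technical step is to show that an amortized potential argument (tracking the total price $\sum_j p_j$, which only grows, against the maximum possible final price $O(\log(n)/\eps)$) gives $O(\log(n)/\eps^c)$ interest-graph iterations summed over all phases, for some small $c$. Balancing the $O(\log n /\eps)$ phases against the per-phase maximal-matching cost and the $\eps$-granularity of the price increments should produce the $\eps^{-8}$ dependence; the exact exponent comes from the nested geometric sums and from the need to re-run the ALT21 analysis with the looser ``approximate equilibrium'' invariant instead of exact equality. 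Once the round bound is established, each round's communication is trivial in the blackboard model (a bidder broadcasts a single $O(\log n)$-bit item identifier plus an $O(\log n)$-bit price, hence $O(\log^2 n)$ bits per message after accounting for per-item announcements by the coordinator).

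Finally, the cross-model implementations follow by plugging in the appropriate maximal-matching subroutine: for semi-streaming, a single pass per iteration suffices, and the state stored between passes is only the price and assignment vectors, giving $O(n \log n \cdot \log(1/\eps))$ space and $O(1/\eps^8)$ passes; for the work-depth model, a parallel maximal matching (e.g.\ Luby-style) yields $O(\log^2 n)$ depth per iteration times $O(\log(n)/\eps^7)$ iterations; and for MPC, the $O(\log\log n)$-round MPC maximal matching routines are substituted, with total space inflated by the $O(\log(1/\eps))$ factor needed to store price vectors. The approximation and round bounds transfer unchanged from the blackboard model to each of these settings.
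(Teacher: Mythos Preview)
Your approximation sketch via the descending-$\tau$ charging argument is plausible, but the round-count argument has a genuine gap. You assert that the total price $\sum_j p_j$ is bounded by $O(\log(n)/\eps)$; this is false. The correct bound (the paper proves it as \cref{lem:upper-bound}) is $\sum_j p_j \leq W_{\opt}$, and $W_{\opt}$ can be $\Theta(n)$ after your normalization. With a \emph{fixed} $\eps$ price increment, a single iteration may increase the total price by as little as $\eps$ (when only one bidder wins), so the potential argument yields only $O(n/\eps)$ iterations, not $\poly\log n$. The phase structure does not rescue this: within one phase at profit threshold $\tau$, bidders can repeatedly displace one another on a high-weight item, and nothing in your outline bounds the number of such displacements by a polylogarithmic quantity.

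The paper's remedy is twofold. First, the price increment is made \emph{proportional to the matched edge weight}: when $i$ wins $j$, set $p_j \leftarrow p_j + \eps \cdot v_i(j)$. Then an item whose incident edges lie in weight bucket $b$ (weights in $[\eps^{b-1},\eps^{b-2})$) absorbs only $O(1/\eps)$ bids before its price exceeds every bucket-$b$ weight; with $O(\log_{1/\eps} W)$ buckets this caps re-bids per item at $O(\log_{1/\eps}(W)/\eps)$. Combined with a per-bucket potential argument that separates unhappy bidders into ``Type~1'' (still unmatched next round) and ``Type~2'' (matched next round) and tracks them against distinct potentials, this gives $O(\log^2(W)/\eps^4)$ phases. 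Second --- and you omit this entirely --- the dependence on $W$ is then eliminated by the Gupta--Peng weight-reduction transformation, which runs the base algorithm on $O(\log_{1/\eps}(W)/\eps)$ subinstances each with weight ratio $\eps^{-O(1/\eps)}$; substituting $\log W = O(\eps^{-1}\log(1/\eps))$ is what actually produces the $\eps^{-8}$ exponent. Your final paragraph on porting the algorithm across models is essentially correct once the base auction is repaired.
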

The best-known algorithms in the semi-streaming model for the maximum weight bipartite matching problem are the $(1/\eps)^{O(1/\eps^2)}$ 
pass, $O\left(n \poly(\log(n)) \poly(1/\eps)\right)$ space algorithm of Gamlath \etal~\cite{GKMS19} and the 
$O\left(\frac{\log(1/\eps)}{\eps^2}\right)$ pass, $O\left(\frac{n\log n}{\eps^2}\right)$ space algorithm of Ahn and Guha~\cite{AG11}. To the best of our knowledge,
our result is the first to achieve sub-polynomial dependence on $1/\eps$ in the space for the MWM problem in the semi-streaming model. Thus, we improve the 
space bound exponentially compared to the previously best-known algorithms in the streaming model. The 
best-known algorithms in the distributed and work-depth
models required $\poly(\log n)$ in the number of rounds
and depth, respectively~\cite{HS22} (for a large constant $c > 20$ in the exponent); in the 
MPC setting, the best previously known algorithms have exponential dependence on $\eps$~\cite{GKMS19}. We eliminate
such dependencies in our paper and our algorithm is also
simpler. A summary of previous
results and our results can be
found in~\cref{tab:results}.

\begin{restatable}[Maximum Cardinality Bipartite $b$-Matching]{theorem}{mcbmthm}\label{thm:b-matching}
    There exists an auction algorithm for maximum cardinality bipartite $b$-matching (\mcbm) that gives a $(1-\eps)$-approximation
    for any $\eps > 0$ and runs in $O\left(\frac{\log n}{\eps^2}\right)$ rounds of communication. This algorithm can be implemented in the
    multi-round, semi-streaming model using $O\left(\left(\sum_{i \in L} b_i + |R|\right)\log(1/\eps)\right)$ space and $O\left(\frac{1}{\eps^2}\right)$ passes. Our algorithm can be implemented in the shared-memory work-depth model 
    in $O\left(\frac{\log^3 n}{\eps^2}\right)$ depth 
    and $O\left(\frac{m \log n}{\eps^2}\right)$ total 
    work.
\end{restatable}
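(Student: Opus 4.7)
The plan is to lift the Assadi--Liu--Tarjan style auction framework (which replaces the per-bidder queue of Dobzinski--Nisan--Oren by a \emph{maximal matching subroutine per pass}) to the $b$-matching setting without any explicit vertex duplication, which would be too expensive to fit the claimed semi-streaming space bound. Maintain a price $p_j \in \{0, \eps, 2\eps, \ldots, 1\}$ for every item $j \in R$, stored in $O(\log(1/\eps))$ bits; a bidder $i \in L$ holds at most $b_i$ items at any time, and is \emph{active} as long as it still wants an item of price $<1$. In each pass I would compute a maximal set of \emph{improving bids}, i.e.\ a maximal $b$-matching in the bipartite graph restricted to edges $(i,j)$ such that $i$ is active and $p_j < 1$; whenever bidder $i$ grabs a slot on item $j$ that was previously held by another bidder (or is newly acquired), I raise $p_j$ by $\eps$ and evict one of $j$'s former owners (who re-enters the bidder pool, possibly becoming active again because it has an unused budget slot).

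Next I would analyse the round count. Every price increase consumes one of the $\lceil 1/\eps\rceil$ rungs of some item, so the total number of price updates over the entire execution is $O(|R|/\eps)$. The maximal-bid step in each pass guarantees that if any active bidder--item pair survives into the next pass, then at least one price strictly rose on that pair's endpoint. A potential-function / doubling argument identical to that of ALT21 for \mcm then shows that after $O(\log n/\eps^2)$ passes no active bidder remains, giving the stated round complexity in the blackboard model.

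For the approximation guarantee I would follow the LP-duality template: the prices $\{p_j\}$ together with per-bidder ``utilities'' $u_i := \max(0, 1 - \min_{j \in M(i)} p_j)$ form (after the $\eps$-slackness is charged) a feasible fractional cover of the $b$-matching LP. An edge $(i,j) \in \opt$ that is missed by $M$ satisfies one of two conditions: either $i$ is saturated at $b_i$ items all of price $\le p_j + \eps$, or $j$ is \emph{frozen} with $p_j \ge 1-\eps$. Summing $p_j + u_i$ over $\opt$-edges and comparing with $\sum_{j} p_j \cdot (\text{occupancy of } j) \le |M|$ yields $|M| \ge (1-O(\eps))|\opt|$; a standard rescaling of $\eps$ by a constant absorbs the loss.

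For the implementations: in the semi-streaming model, each pass computes a maximal $b$-matching on the stream of eligible edges using the greedy semi-streaming $b$-matching subroutine, storing only the current assignment (occupying $O(\sum_{i\in L}b_i \log(1/\eps))$ bits) and the price vector ($O(|R|\log(1/\eps))$ bits), so $O(1/\eps^2)$ passes give the claimed space. In the work-depth model I would implement the per-pass maximal $b$-matching via a standard $O(\log^2 n)$-depth, $O(m)$-work parallel maximal matching routine applied to the bipartite graph with item-capacities (for instance, by expanding each item to $b_j$ ``ports'' only implicitly via atomic counters), which, multiplied by the $O(\log n/\eps^2)$ passes, yields the stated $O(\log^3 n/\eps^2)$ depth and $O(m\log n/\eps^2)$ work. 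The main technical obstacle I expect is part (a) of the approximation argument: with capacities on \emph{both} sides, charging a missed $\opt$-edge $(i,j)$ to bidder $i$'s held items must be done fractionally (distributing the charge across $i$'s $b_i$ slots), which requires a careful reformulation of the standard single-slot slackness argument used for \mcm.
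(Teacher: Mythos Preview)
Your high-level plan---auction with a maximal-matching subroutine per phase, price-based potential, dual-cover style approximation---is the right skeleton, but it diverges from the paper at exactly the point you flag as the obstacle, and the divergence is a genuine gap rather than just a different packaging.

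The paper does \emph{not} avoid duplication in the algorithm: it works on the blown-up graph $G'$ with $b_i$ bidder copies and $b_j$ item copies, each item copy carrying its own price. The new mechanism it introduces, absent from your proposal, is a per-bidder-copy \emph{cutoff price} $c_{i'}$. A copy $i'$ only considers items whose price is at least $c_{i'}$ and no copy of which is already held by a sibling $i''\in C_i$; if $i'$ stays unmatched with a nonempty demand set at the end of a phase, $c_{i'}$ is bumped by $\eps$. This cutoff is what drives the bidder potential: without it, an active bidder copy whose cheapest neighbor is held by a sibling does not force any price to rise, so your sentence ``if any active bidder--item pair survives\ldots then at least one price strictly rose'' fails, and the $O(1/\eps^2)$ phase bound does not go through. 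The approximation side is handled not by fractional charging but by an \emph{assignment lemma}: for any set $I$ of $|C_i|$ neighboring items there is a one-to-one assignment of $I$ to $C_i$ such that each copy's cutoff is at most the minimum price of its assigned item (and currently-held items are assigned to their holders). This is what lets the standard $u_{i'}\ge 1-p_{o_{i'}}-\eps$ inequality survive in the presence of siblings.

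Your single-price-per-item formulation is also underspecified in a way that matters: if $p_j$ rises on every slot grab, an item with $b_j>1/\eps$ freezes before its capacity is used; if it rises only on eviction, the potential accounting changes. The paper sidesteps this by pricing copies separately and then proving that all copies of the same item stay within $\eps$ of each other, which is precisely what lets the semi-streaming implementation store only a min/max price and a count per original item $j\in R$ (rather than $b_j$ prices) and achieve the $O\bigl((\sum_{i\in L}b_i+|R|)\log(1/\eps)\bigr)$ space bound. So the space claim you are after is recovered not by eliminating copies from the analysis but by compressing their state after the fact.
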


The best-known algorithms for maximum cardinality bipartite $b$-matching in the semi-streaming model 
is the $O\left(\frac{\log n}{\eps^3}\right)$ pass, $\tO\left(\frac{\sum_{i \in L \cup R} b_i}{\eps^3}\right)$ space algorithm of Ahn and Guha~\cite{AG11}.
In the general, non-bipartite setting (a harder setting than what we consider), a very recent $(1-\eps)$-approximation
algorithm of Ghaffari, Grunau, and Mitrovi\'c~\cite{GGM22} runs in $\exp\left(2^{O(1/\eps)}\right)$ passes and $\tO\left(\sum_{i \in L \cup R} b_i + \poly(1/\eps)\right)$
space. Here, we also improve the space exponentially in $1/\eps$ and, in addition, improve the number of passes
by an $O(\log n)$ factor. More details comparing our results to other related works are given in~\cref{sec:related} and~\cref{tab:results}. 
\ifcameraready
Due to the space constraints, most proofs in the following sections are deferred to the full version of our paper.
\fi

\begin{table}[htb!]
\renewcommand\arraystretch{2}
    \centering
    \scalebox{0.9}{
    \begin{tabular}{| c | c || c | c |c|c|}
         \hline
         \multicolumn{2}{|c||}{Model} &  \multicolumn{2}{|c||}{Previous Results }& \multicolumn{2}{|c||}{Our Results}  \\
         \hline
         \hline
          \multirow{2}*{\shortstack{Blackboard\\ Distributed}} & \mwm & $\Omega(n \log n)$ (trivial) & \cite{DNO14} & $O\left(\frac{n\log^3 (n)}{\eps^8}\right)$ & \cref{thm:distributed-rounds-alg}
          \\ \cline{2-6}
          & \mcbm &  
          $\Omega(nb\log n)$ & trivial & $O\left(\frac{nb \log^2 n}{\eps^2}\right)$ & \cref{lem:mcbm-distributed} \\
          \specialrule{.2em}{.1em}{.1em} 

          \multirow{2}*{Streaming} & \mwm &  
          \shortstack{$O\left(\frac{\log(1/\eps)}{\eps^2}\right)$ pass \\
          $O\left(\frac{n\log n}{\eps^2}\right)$ space} & \cite{AG11} & 
          \shortstack{$O\left(\frac{1}{\eps^8}\right)$ pass \\
          $O\left(n \cdot \log n \cdot \log(1/\eps)\right)$ space} & \cref{thm:streaming-reduce-passes}
           \\ \cline{2-6}
          & \mcbm & \shortstack{$O(\log n/\eps^3)$ pass \\ $\tO\left(\frac{\sum_{i \in L \cup R} b_i}{\eps^3}\right)$ space}
          & \cite{AG18} & \shortstack{$O\left(\frac{1}{\eps^2}\right)$ pass \\ $O\left(\left(\sum_{i \in L} b_i + |R|\right)\log(1/\eps)\right)$ space} & \cref{lem:b-semi-streaming}\\
          \specialrule{.2em}{.1em}{.1em} 

          \multirow{1}*{MPC} & \mwm &  
          \shortstack{$O_{\eps}(\log \log n)$ rounds \\ $O_{\eps}(n \poly(\log n))$\\ space p.m.} & \shortstack{\cite{GKMS19} \\(general)} & 
          \shortstack{$O\left(\frac{\log \log n}{\eps^7}\right)$ rounds\\ $O(n \cdot \log_{(1/\eps)}(n))$ space p.m.} & \cref{lem:weighted-mpc}
           \\ \cline{2-6}
          \specialrule{.2em}{.1em}{.1em} 

          \multirow{2}*{Parallel} & \mwm &  
          \shortstack{$O\left(m \cdot \poly\left(1/\eps, \log n\right)\right)$\\ work* \\
          $O\left(\poly\left(1/\eps, \log n\right)\right)$\\ depth*} & \shortstack{\cite{HS22} \\ (general)}& 
          \shortstack{$O\left(\frac{m\log(n)}{\eps^7}\right)$ work \\
          $O\left(\frac{\log^3 n}{\eps^{7}}\right)$ depth} & \cref{cor:shared-memory-parallel-reduced}
           \\ \cline{2-6}
          & \mcbm & N/A
          & N/A & \shortstack{$O\left(\frac{m \log n}{\eps^2}\right)$ work \\ $O\left(\frac{\log^3 n}{\eps^2}\right)$ depth} & \cref{lem:b-matching-parallel}\\
          \specialrule{.2em}{.1em}{.1em} 
    \end{tabular}}
    \caption{We assume the ratio between the largest weight edge and smallest weight edge in the graph is
    $\poly(n)$.
    Results for general graphs are labeled with (general); results that are specifically for bipartite
    graphs do not have a label. Upper bounds are given in terms of $O(\cdot)$ and lower bounds are given in
    terms of $\Omega(\cdot)$. ``Space p.m.'' stands for space per machine. The complexity measures for 
    the ``blackboard distributed'' setting is the total
    communication (over all rounds and players) in bits. $\poly(\log n, \eps)$ for the specified results
    indicated by * hides large constant factors in the exponents, specifically constants $c > 20$. Our results often
    exhibit a tradeoff of one complexity measure with another in our various models.}
    \label{tab:results}
\end{table}

\paragraph{Concurrent, Independent Work} In concurrent, independent work, Zheng and Henzinger~\cite{ZH23} study the 
maximum weighted matching problem in the sequential and dynamic
settings using auction-based algorithms. Their simple and elegant algorithm makes use of a sorted list of items (by utility) for 
each bidder and then matches the bidders one by one individually (in round-robin order) to their highest utility item. 
They also extend their algorithm to give dynamic results.
Due to the sequential nature of their matching procedure, they do not provide any results in scalable models such
as the streaming, MPC, parallel, or distributed models.

\subsection{Other Related Works}\label{sec:related}
There has been no shortage of work done on bipartite matching. In 
addition to the works we discussed in the introduction, there has
been a number of other relevant works in this general area of research. Here we discuss the additional works not discussed in~\cref{sec:intro}.
These include a plethora of results for $(1-\eps)$-approximate
maximum cardinality matching as well as some additional results for \mwm and $b$-matching. 
Most of these works use various methods 
to find augmenting paths with only a few works focusing on auction-based techniques. 
We hope that our paper further demonstrates the utility of auction-based approaches
as a type of ``universal'' solution across scalable models
and will lead to additional works in this area in the future. Although our work focuses on the bipartite matching problem,
we also provide the best-known bounds for the matching problem on general graphs here,
although this is a harder problem than our setting. We separate
these results into the bipartite matching results, the general
matching results, and lower bounds.

\paragraph{General Matching}
A number of works have considered \mcm in the streaming setting, providing
state-of-the-art bounds in this setting. Fischer \etal~\cite{FMU22} gave
a deterministic $(1-\eps)$-approximate \mwm algorithm in general graphs
in the semi-streaming model that uses $\poly(1/\eps)$ passes, improving
exponentially on the number of passes of Lotker \etal~\cite{LPPS15}.
Very recently, Assadi \etal~\cite{assadi2022semi} provided a semi-streaming
algorithm in optimal $O(n)$ space and $O\left(\log n \log (1/\eps)
/\eps\right)$ passes. They also provide a \mwm algorithm that also 
runs in $O(n)$ space but requires $\tilde{\Omega}(n/\eps)$ passes.
Please refer to these papers and references 
therein for older results in this area.
Ahn and Guha \cite{AG18} also considered the general 
weighted non-bipartite 
maximum matching problem in the semi-streaming model and 
utilize linear programming approaches for computing a $(2/3 - \eps)$-approximation and
$(1-\eps)$-approximation that uses $O(\log(1/\eps)/\eps^2)$ passes,
$O\left(n\cdot \left(\frac{\log(1/\eps)}{\eps^2} + \frac{\log n/\eps}{\eps}\right)\right)$ space,
and $O\left(\frac{\log n}{\eps^4}\right)$ passes, $O\left(\frac{n\log n}{\eps^4}\right)$ 
space, respectively. 

\paragraph{Bipartite Matching}
Ahn and Guha~\cite{AG18} also extended their results
to the bipartite \mwm and  
$b$-Matching settings with small changes. 
Specifically, in the \mwm setting, they 
give a $O(\log(1/\eps)/\eps^2)$ pass, 
$O(n\cdot ((\log(1/\eps))/\eps^2 + (\log n/\eps)/\eps))$ space algorithm. 
For maximum cardinality $b$-matching, they give a $O(\log n/\eps^3)$ pass
and $\tO\left(\frac{\sum_{i \in L \cup R} b_i}{\eps^3}\right)$ space 
algorithm. For exact bipartite \mwm in the semi-streaming model, 
Liu \etal~\cite{liu2020breaking} 
gave the first streaming algorithm to break the $n$-pass barrier in the 
exact setting; it 
uses $\tilde{O}(n)$ space and $\tilde{O}(\sqrt{m})$ passes using
interior point methods, SDD system solvers, 
and various other techniques 
to output the optimum matching with high probability. Work on  
bipartite \mwm prior to~\cite{liu2020breaking} either required 
$\Omega(n \log n)$ passes~\cite{JLS19} or only found approximate 
solutions~\cite{AG11, AG18,kapralov2013better}.

\paragraph{Lower Bounds}
Several papers have looked at matching problems from the
lower bound side. Konrad \etal~\cite{KRZ21} considered the communication 
complexity of graph problems in a blackboard model of computation (for 
which the simultaneous message passing model of Dobzinski \etal~\cite{DNO14}
is a special variant). Specifically, they show that any non-trivial graph
problem on $n$ vertices require $\Omega(n)$ bits~\cite{KRZ21} in communication
complexity. In a similar model called the \emph{demand query model}, Nisan
\cite{nisan2021demand} showed that any \emph{deterministic} 
algorithm that runs in 
$n^{o(1)}$ rounds where in each round at most $n^{1.99}$ demand queries are made,
cannot find a \mcm within a $n^{o(1)}$ factor of the optimum. This is in 
contrast to \emph{randomized} algorithms which can make such an approximation 
using only $O(\log n)$ rounds. For streaming matching algorithms, Assadi~\cite{Assadi22}
provided a conditional lower bound ruling out the possibilities of small 
constant factor approximations for two-pass streaming algorithms that solve the
\mcm problem. Such a lower bound also necessarily extends to \mwm 
and \mcbm. Goel~\etal~\cite{GKK12} provided a $n^{1+\Omega(1/\log\log n)}$
lower bound for the one-round
message complexity of bipartite $(2/3+\eps)$-approximate \mcm (this also 
naturally extends to a space lower bound). 
For older papers on these lower bounds, please refer to references
cited within each of the aforementioned cited papers. Finally, Assadi \etal~\cite{AKSY20} showed that any streaming algorithm that approximates 
\mcm requires either $n^{\Omega(1)}$ space or $\Omega(\log(1/\eps))$ passes.

\paragraph{Unweighted to Weighted Matching Transformations}
Current transformations for transforming unweighted to 
weighted matchings all either:
\begin{itemize}
    \item lose a factor of $2$ in the approximation factor~\cite{GP13,SW17}, or 
    \item increase the running time of the algorithm 
    by an exponential factor in terms of $1/\eps$, specifically, a factor of $\eps^{-O(1/\eps)}$~\cite{BDL21}. 
\end{itemize}
Thus, we cannot use such default transformations from 
unweighted matchings to weighted matchings
in our setting since all of the complexity measures
in this paper have only \emph{polynomial} dependence
on $\eps$ and all guarantee $(1-\eps)$-approximate
matchings. However, we do make use of \emph{weighted
to weighted matching transformations} provided our 
original weighted matching algorithms have only 
\emph{polylogarithmic} dependence on the maximum ratio
between edge weights in the graph. Such transformations
from weighted to weighted matchings do not increase the 
approximation factor and also allows us to \emph{eliminate the polylogarithmic dependence on 
the maximum ratio of edge weights}. 

\section{Preliminaries}\label{sec:prelims}

This paper presents algorithms for bipartite matching under various 
settings. The input consists of a bipartite
graph $G = (L \cup R, E)$. 
We denote the set of neighbors of any $i \in L, 
j \in R$ by $N(i), N(j)$, respectively. 
We present $(1-\eps)$-approximation algorithms where $\eps \in (0, 1)$ is 
our approximation parameter. All notations used in all of 
our algorithms in this paper are given in~\cref{table:mcbm-params}. The specified weight of an edge $(i,j)$ will become the valuation of the bidder $i$ for item $j$.
\ifcameraready
Due to space constraints, we defer most of our proofs to the full version of our paper.
\fi

\begin{table}[htb]
    \centering
    \begin{tabular}{c l}
    \toprule
         Symbol & Meaning \\
         \midrule
         $\eps$ & approximation parameter \\
         $L, R$ & bidders, items, resp. wlog $|L| \leq |R|$\ \\
         $i, j, i', j'$ & $i \in L, j \in R, i' \in L', j' \in R'$, $i'$ (resp.\ $j'$) indicates copy of $i$ (resp.\ $j$)\\
         $p_{j}$ & current price of item $j$\\
         $D_{i}$ & demand set of bidder $i$ \\
         $(i, a_{i})$ & bidder $i \in L$ and currently matched item $a_{i}$ \\
         $o_i$ & the item matched to bidder $i$ in $\opt$\\
         $u_i$ & the utility of bidder $i$ which is calculated by $1 - p_{a_{i}}$\\
         $v_i(j)$ & the valuation of bidder $i$ for item $j$, i.e.\ the weight of edge $(i, j)$\\
         $C_i, C_j$ & copies of bidder $i \in L$, copies of item $j \in R$, resp.\ \\
         $L', R'$ & $L' = \bigcup_{i \in L} C_i$, $R' = \bigcup_{j \in R} C_j$ \\
         $E' , G'$ & $E' = \{(i^{(k)}, j^{(l)}) \mid (i, j) \in E, k \in [b_i], l \in [b_j]\}$, $G' = (L' \cup R', E')$ \\
         $b_i$ & cardinality of matching constraint for $i$ in $b$-matching\\
         $c_{i'}$ & price cutoff for bidder $i'$\\
         $W$ & ratio of the maximum weighted edge over the minimum weighted edge\\
         $G_d$ & induced subgraph consisting of $\left(\cup_{i \in L} D_{i}\right) \cup L$\\
         $\mmd$ & a \nonduplicate maximal matching in $G'_d$\\
         $M'_d, M_d$ & produced matching in $G'$, corresponding matching in $G$, resp.\ \\
         $M_{\max}$ & matching with largest cardinality produced\\
         \bottomrule
    \end{tabular}
    \caption{Table of Notations}
    \label{table:mcbm-params}
\end{table} 

\subsection{Scalable Model Definitions}
In addition, we consider a number of scalable models in our paper including
the \defn{blackboard distributed} model, the \defn{semi-streaming} model, the 
\defn{massively parallel computation (MPC)} model, and the \defn{parallel shared-memory 
work-depth} model.

\paragraph{Blackboard distributed model} We use the blackboard distributed model as defined
in~\cite{DNO14}. There are $n$ \emph{players}, one for each vertex of the left side of our 
bipartite graph (we assume wlog that the left side of the graph contains
fewer vertices). The players engage in a fixed communication protocol using messages sent to a central
coordinator. In other words, players write on a common ``blackboard.'' 
Players communicate using \emph{rounds} of communication where in each round the player sends a message
(of some number of bits) to the central coordinator.
Then, each player can receive a (not necessarily
identical) message in each round from the coordinator.
In every round, players choose to send messages
depending solely on the contents of the blackboard and their private information. Termination of the algorithm
and the final matching are determined by the central coordinator and the contents of the blackboard.
The measure of complexity is the number of rounds of the algorithm and the size of the message sent by each player
in each round.
One can also measure the total number of bits send by all messages by multiplying these two quantities.

\paragraph{Semi-streaming model} In this paper,
we use the semi-streaming model~\cite{feigenbaum2005graph} with arbitrarily ordered
edge insertions. Edges are arbitrarily (potentially adversarially) ordered
in the stream. For this paper, we only consider insertion-only streams. 
The space usage for semi-streaming algorithms is bounded by
$\tO(n)$. The relevant complexity measures in this model are the number of passes of the algorithm
and the space used.

\paragraph{Massively parallel computation (MPC) model} The massively parallel computation (MPC) model~\cite{beame2017communication,goodrich2011sorting,karloff2010model}
is a distributed model where different machines communicate with each other via a communication network. 
There are $M$ machines, each with $S$ space, and these machines communicate with each other using $Q$ rounds of 
communication. The initial graph is given in terms of edges and edges are partitioned arbitrarily across the machines. 
The relevant complexity measures are the total space usage ($M \cdot S$), space per machine $S$, and number of 
rounds of communication $Q$.

\paragraph{Parallel shared-memory work-depth model} The parallel shared-memory work-depth model~\cite{jaja1992introduction,ramachandran1990parallel,shiloach1982n2log} 
is a parallel model where different processors can process instructions in parallel and read and write from the 
same shared-memory. The relevant complexity measures for an algorithm in this model are the \emph{work} which is
the total amount of computation performed by the algorithm and the \emph{depth} which is the longest chain 
of sequential dependencies in the algorithm. 

\section{An Auction Algorithm for $(1-\eps)$-Approximate Maximum Weighted Bipartite
Matching}\label{sec:mwm}

We present the following auction algorithm for maximum (weighted) bipartite
matching (MWM) that is a generalization of the simple and elegant algorithm of 
Assadi \etal~\cite{ALT21} 
\ifcameraready
\else
(\cref{sec:mcm}) 
\fi
to the weighted setting. Our generalization requires several novel 
proof techniques and recovers the round guarantee of Assadi \etal~\cite{ALT21} in the maximum cardinality 
matching setting when the weights of all edges are $1$. 
Furthermore, we answer an open question posed by Dobzinski 
\etal~\cite{DNO14} for developing a $(1-\eps)$-approximation 
auction algorithm for maximum \emph{weighted} bipartite matching for which no prior algorithms
are known. Throughout this section, we denote the maximum ratio between two edge weights in the graph by $W$.
Our algorithm can also be easily extended into algorithms in various scalable models:
\begin{itemize}
\item a semi-streaming algorithm which uses $O\left(n \cdot \log n \cdot \log(1/\eps)\right)$ space ($n$ is the number of vertices in the bipartite 
graph) and which requires $O\left(\frac{1}{\eps^8}\right)$ passes,
\item a shared-memory parallel algorithm using 
$O\left(\frac{m \cdot \log(n)}{\eps^7}\right)$ work 
and $O\left(\frac{\log^3(n)}{\eps^{7}}\right)$ depth, and
\item an MPC algorithm using $O\left(\frac{\log\log n}{\eps^7}\right)$ rounds, $O(n \log_{(1/\eps)}(n))$ space per machine, 
and $O\left(\frac{(n+m) \log(1/\eps) \log n}{\eps}\right)$ total space. 
\end{itemize}

In contrast,
the best-known semi-streaming MWM algorithm of Ahn and Guha~\cite{AG11} requires $\tO(\log(1/\eps)/\eps^2)$ passes and $\tO\left(\frac{n\log n}{\eps^2}\right)$ space. 
Our paper shows a $O\left(\frac{1}{\eps^8}\right)$ pass algorithm that instead uses $O\left(n \cdot \log n \cdot \log(1/\eps)\right)$ space. Since $\eps = \Omega(1/n)$ (or otherwise
we obtain an exact maximum weight matching), our algorithm works in the semi-streaming model for all possible values of $\eps$ whereas 
Ahn and Guha~\cite{AG11} no longer works in semi-streaming when $\eps$ is small enough.

Our algorithm follows the general framework given in
\ifcameraready
\cite{ALT21}.
\else
\cref{sec:mcm}.
\fi
However, both our algorithm and our analysis require
additional techniques. The main hurdle we must overcome is the fact that the
weights may be much \emph{larger} than the number of bidders and items.  
In that case, if we use the \mcm algorithm trivially in this setting where we increase the 
prices until they reach the maximum weight, 
the number of rounds can be very large, proportional to $\frac{w_{\max}}
{\eps^2}$ where $w_{\max}$ is the maximum weight of any edge. We avoid
this problem in our algorithm, instead obtaining only $\poly(\log(n))$
and $\eps$ dependence in the number of rounds.
Our main result in this section is the following (recall from~\cref{sec:intro}).

\mwmthm*

Before we give our algorithm, we give some notation used in this section.

\paragraph{Notation} 
The input bipartite graphs is represented by $G = (L \cup R,
E)$ where $L$ is the set of bidders and $R$ is the set of items.
Let $N(v)$ denote the neighbors of node $v \in L \cup R$. We 
use the notation $i \in L$ to denote bidders and $j \in R$ to
denote items.
For a bidder $i \in L$, the \emph{valuation} of $i$ for items 
in $R$ is defined as the function $v_i: R \rightarrow \mathbb{Z}_{\geq 0}$ where 
the function outputs a non-negative integer. If $v_i(j) > 0$, for any $j \in R$, then
$j \in N(i)$. Each bidder can match to at most one item. We
denote the bidder item pair by $(i, a_i)$ where $a_i$ is the
matched item and $a_i = \bot$ if $i$ is not matched to any item.
For any agent $i$ where $a_i \not= \bot$,
the \emph{utility} of a bidder $i$ given its matched item $a_i$ is 
$u_i \defined v_i(a_i) - p_{a_i}$ where $p_{a_i}$ is the current price of item $a_i$. 
For an agent $i$ where $a_i = \bot$, the utility of agent $i$ is $0$. %
We denote an optimum matching by $\opt$. We use the notation
$i \in \opt$ to denote a bidder who is matched in $\opt$
and $o_i$ to denote the item matched to bidder $i$ in $\opt$.

\paragraph{Input Specifications} 
In this section, we assume all weights are $\poly(n)$ 
where $n = |L| + |R|$.
We additionally assume the following characteristics 
about our inputs because we can perform a simple pre-processing of
our graph to satisfy these 
specifications. Provided an input graph $G = (L \cup R, E)$ with weights
$v_i(j)$ for every edge $(i, j) \in E$, we find the maximum weight among
all the weights of the edges, $w_{\max} = \max_{(i, j) \in E} 
\left(v_i(j)\right)$. 
We rescale the weights of all the edges
by $\frac{1}{w_{\max}}$ and remove all edges with rescaled 
weight $< \heps^{\ceil{\log_{(1/\heps)} (\min(m, W))} + 1}$. This 
upper bound of $\heps^{\ceil{\log_{(1/\heps)} (\min(m, W))} + 1}$
is crucial in our analysis.

In other words, 
we create a new graph $G' = (L \cup R, E')$ with the same set of bidders $L$
and items $R$. We associate the new weight functions $v'_i$ with each bidder $i \in L$
where $(i, j) \in E'$ if $v_i(j) \geq w_{\max} \cdot \heps^{\ceil{\log_{(1/\heps)} (\min(m, W))} + 1}$ and
$v'_i(j) = v_i(j)/w_{\max}$ for each $(i, j) \in E'$. Provided that finding the maximum weight edge 
can be done in $O(1)$ rounds in the blackboard distributed and MPC models, $O(1)$ passes in the streaming model,
and $O(n + m)$ work and $O(\log n)$ depth in the parallel model,
we assume the input to our algorithms is $G'$ (instead of the original graph $G$). The computation of $v_i'$ 
can be done on-the-fly as we run through our auction algorithm since every node knows $w_{\max}$.
In other words, we assume all inputs $G = (V, E)$ to our algorithm have scaled edge weights and 
$v_i(j)$ for $i \in L, j \in R$ are functions that return the scaled edge weights in the rest of this section. 

\subsection{Detailed Algorithm} 
We now present our auction algorithm for maximum weighted bipartite matching 
in~\cref{alg:auction}. The algorithm works as follows. 
Recall that we also assume the input to our algorithm is the scaled graph. %
This means that 
the maximum weight of the scaled edges is $1$ and there exists at least one edge
with weight $1$; hence, the maximum weight matching will have value at least $1$. 
We also initialize the tuples that keep track of matched items.
Initially, no items are assigned to bidders (\cref{mwm:assign-item}) and the 
prices of all items are set to $0$ (\cref{mwm:set-prices}). 

We perform $\phases$ phases of bidding (\cref{mwm:phases}).
In each phase, we form the \emph{demand set} $D_i$ of each unmatched bidder $i$. The demand
set is defined to be the set of items with non-zero utility which have \emph{approximately} the maximum 
utility value for bidder $i$ (\cref{mwm:unmatched-bidders,mwm:demand-price,mwm:demand}). 
This procedure is different from both \mcm and \mcbm (where no slack is needed in creating the demand set)
but we see in the analysis that we require this slack in the maximum utility value to ensure
that enough progress is made in each round.
Then, we create the induced subgraph
consisting of all unmatched bidders and their demand sets (\cref{mwm:subgraph}).
We find an arbitrary maximal matching in this created subgraph (\cref{mwm:matching}) by first finding the maximal matching in order
of decreasing buckets (from highest---bucket with the largest weights---to lowest). 
We partition the edges into buckets by their weight. An edge $(i, j)$ is in bucket $b$ if $\eps^{b-1} \leq v_i(j) < \eps^{b-2}$.
The ``highest'' bucket contains the largest weight edges and lower buckets contain smaller weight edges.
This means that we call our 
maximal matching algorithm $O(\log(W))$ times first on the induced subgraph consisting of the highest bucket, removing the 
matches, and then on the induced subgraph of the remaining edges plus the next highest bucket, and so on.
We use the folklore distributed maximal matching algorithm where in each round, a bidder 
uniformly-at-random picks a neighbor to match; this algorithm is also 
used in~\cite{DNO14} for the maximal matching step. This simple algorithm terminates in $O(\log n)$ 
rounds with high probability using $O(\log n)$ communication complexity. Such randomization is necessary 
to obtain $O(\log n)$ rounds using $O(\log n)$ communication complexity.

We rematch items according to the new matching 
(\cref{mwm:rematch-1,mwm:rematch-2}).
We then increase the price of each rematched item. The price increase depends on the 
weight of the matched edge to the item; higher weight matched edges have larger increases in price than smaller
weight edges. Specifically, the price is increased by $\heps \cdot v_i(a_i)$ where $v_i(a_i)$ is the weight of the 
newly matched edge between $i$ and $a_i$ (\cref{mwm:increase-price}).
The intuition behind this price increase is that we want to increase the price proportional to the weight 
gained from the matching since the price increase takes away from the overall utility of our matching. 
If not much weight is gained from the matching, then the price should not increase by much;
otherwise, if a large amount of weight is gained from the matching, then we can afford to increase the price by a larger amount.
We see later on in our analysis that this allows us to bucket the items according to their matched edge weight into 
$O\left(\ceil{\log_{(1/\eps)} (\min(m, W))}\right)$ buckets. Such bucketing is useful in ensuring that we have sufficiently many happy 
bidders with a sufficiently large total matched weight.
Finally, we return all matched items and bidders as our approximate matching and the
sum of the weights of the matched items %
as the approximate weight.
Obtaining the maximum weight of the matching in the original, unscaled graph is easy. We 
multiply the edge weights by $w_{\max}$ and the sum of these weights is the total 
weight of our approximate matching (\cref{mwm:return}).

\begin{algorithm}[!t]
    \caption{\label{alg:auction} Auction Algorithm for Maximum Weighted Bipartite
    Matching}
    \hspace*{\algorithmicindent} \textbf{Input:} A scaled graph $G = (L \cup R, E)$, parameter
    $0 < \heps < 1$, and the scaling factor $w_{\max}$. \\
    \hspace*{\algorithmicindent} \textbf{Output:} An $(1 - \mwmapprox)$-approximate 
    maximum weight bipartite matching. %
    \begin{algorithmic}[1]
        \State For each bidder $i \in L$, set $(i, a_i)$ to $a_i = 
        \bot$.\label{mwm:assign-item}
        \State For each item $j \in R$, set $p_j = 0$.\label{mwm:set-prices}
            \For{$d = 1, \dots, \phases$}\label{mwm:phases}
                \For{each unmatched bidder $i \in L$}\label{mwm:unmatched-bidders}
                    \State Let $U_i \defined \max_{j \in 
                        N(i), v_i(j) - p_j > 0} \left(v_i(j) - p_j\right)$.\label{mwm:demand-price}
                    \State Let $D_i \defined \{j \in R \mid p_j < v_i(j), v_i(j) - p_j \geq U_i - \heps \cdot v_i(j)\}$.\label{mwm:demand} %
                \EndFor
                \State Create the subgraph $G_d$ as the subgraph consisting of 
                $\left(\bigcup_{i \in L} D_i\right) \cup L$ and all edges.\label{mwm:subgraph}
                \State Find any arbitrary maximal matching $M_d$ of $G_d$ in order of highest bucket to lowest.\label{mwm:matching}
                \For{$(i, j) \in M_d$}\label{mwm:rematch-1}
                    \State Match $j$ to $i$ by setting $a_i = j$ and $a_{i'} = \bot$ 
                    for the previous owner $i'$ of $j$.\label{mwm:rematch-2}
                    \State Increase the price of $j$ to $p_j \leftarrow p_j + \heps \cdot v_i(j)$. \label{mwm:increase-price}
                \EndFor
                \State Let $M'$ be the matched edges in this current iteration.
            \EndFor
    \State Return the matching $M = \arg\max_{M'} \left(w_{\max} \cdot \sum_{i \in L}
    v_i(a_i)\right)$ as the approximate maximum weight matching and $(i, a_i) \in M$
    as the matched edges.\label{mwm:return}
    \end{algorithmic}
\end{algorithm}

\subsection{Analysis}\label{mwm:analysis}

In this section, we prove the approximation factor and round complexity of our algorithm. We use nearly the same definition of 
happy that is defined in~\cite{ALT21}.

\begin{definition}[$\eps$-Happy~\cite{ALT21}]\label{def:happy}
    A bidder $i$ is $\eps$-happy if $u_i \geq v_i(j) - p_j - \eps$ for every $j \in R$. 
\end{definition}

\begin{definition}[Unhappy]\label{def:unhappy}
A bidder $i$ is \defn{unhappy} at the end of round $d$ 
if they are unmatched and their demand set is non-empty.
\end{definition}

Note that a happy bidder may never be unhappy and vice versa. For this definition, we  assume that the demand set of a bidder can be computed at any point in time (not only when the algorithm computes it).

\paragraph{Approach} The main challenge we face in our \mwm analysis is that it is 
no longer sufficient to just show at least $(1-\eps)$-fraction of bidders in $\opt$ are happy in 
order to obtain the desired approximation.
Consider this simple example. Suppose a given instance has an optimum solution
$\opt$ with six matched bidders where one bidder is matched to an item via 
a weight-$1$ edge. It also has five additional bidders matched to items via
weight-$\frac{1}{\sqrt{n}}$ edges. Suppose we set $\eps = 1/6$ to be a constant.
Then, requiring $5/6$-fraction of the bidders in $\opt$ to be happy is not sufficient
to get a $5/6$-factor approximation. Suppose the five bidders matched with edges of 
weight $\frac{1}{\sqrt{n}}$ are the happy bidders. This is sufficient to satisfy the condition
that $5/6$-fraction of the bidders in $\opt$ are happy. However, the total combined weight
of the matching in this case is $\frac{5}{\sqrt{n}}$ while the weight of the optimum matching
is $\left(1 + \frac{5}{\sqrt{n}}\right)$. The returned matching then has weight smaller than a
$\frac{5}{\sqrt{n}}$-fraction of the optimum, and for large $n$, this is much less than the desired $5/6$-factor approximation.

Instead, we require a specific fraction of the total \emph{weight} of the optimum solution, $W_{\opt}$, to be 
matched in our returned matching. We ensure this new requirement by considering two types of 
unhappy bidders. Type $1$ unhappy bidders are bidders who are unhappy in round $k-1$ and remain
unmatched in round $k$. Type $2$ unhappy bidders are bidders who are unhappy in round $k - 1$
and become matched in round $k$. We show that there exists a round where the 
following two conditions are satisfied:

\begin{enumerate}
\item We bucket the bidders in $\opt$ according to the weight 
of their matched edge in $\opt$ such that bidders matched with similar weight edges are
in the same bucket; there exists a round where at most $(\eps^2)$-fraction
of the bidders in each bucket are Type $1$ unhappy.
\item We charge the weight a Type $2$ unhappy bidder $i$ obtains in round $k$ to $i$ in round $k -1$; there exists a round 
$k-1$ where a total of at most $\heps \cdot W_{\opt}$
weight is charged to Type $2$ unhappy bidders.
\end{enumerate}

Simultaneously satisfying both of the above conditions is enough to obtain our desired approximation. 
The rest of this section is devoted to showing our precise analysis using the above approach.

\paragraph{Detailed Analysis}
Recall that we defined the 
utility of agent $i$ to be the value of the item matched to her
minus its price $u_i = v_i(a_i) - p_{a_i}$. In this section, we use 
the definition of $\eps$-happy from~\cref{def:happy}.

A similar observation to the observation made in~\cite{ALT21} 
about the happiness of matched bidders can also be made in our case; however, since 
we are dealing with edge weights, we need to be careful to increment our
prices in terms of the newly matched edge weight. 
In other words, two different bidders 
could be $\eps_1$-happy and $\eps_2$-happy 
after incrementing the price of their 
respective items by $\eps_1$ and $\eps_2$ where $\eps_1 \neq \eps_2$; the incremented prices 
$\eps_1$ and $\eps_2$
depend on the matched edge weights of the items assigned to the bidders. We prove 
the correct happiness guarantees given by our algorithm below.

\begin{observation}\label{obs:happy}
At the end of every round, matched 
bidder $i$ with matched edge $(i, a_i)$, where $a_i$ is 
priced at $p_{a_i}$, is $\left(2\heps \cdot v_i(a_i)\right)$-happy.
At the end of every round, unmatched bidders with empty demand sets $D_i$ are $\heps$-happy. 
\end{observation}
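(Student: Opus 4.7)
}

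The plan is to handle the two cases (matched bidder, unmatched bidder with empty demand set) separately, in each case tracing through the algorithm's invariants and using the fact that prices are monotone nondecreasing. This is the weighted analogue of the happiness invariant in \cite{ALT21}; the main new wrinkle is that both the demand-set slack and the price increment are now proportional to $v_i(a_i)$, so the ``happiness slack'' depends on the matched edge weight.

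For the matched case, I would first reduce to the round $k$ in which bidder $i$ was most recently (re)matched to $a_i$. In that round, the fact that $a_i \in D_i$ at the moment of matching gives, by~\cref{mwm:demand},
\[
v_i(a_i) - p_{a_i}^{\text{old}} \;\geq\; U_i^{(k)} - \heps\, v_i(a_i),
\]
where $p_{a_i}^{\text{old}}$ is the price of $a_i$ at the start of the matching step and $U_i^{(k)}$ is bidder $i$'s top positive utility at that moment. The price update on~\cref{mwm:increase-price} then raises $p_{a_i}$ by $\heps\,v_i(a_i)$, giving $u_i = v_i(a_i) - p_{a_i}^{\text{new}} \geq U_i^{(k)} - 2\heps\,v_i(a_i)$. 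Since the utility $u_i$ of $i$ does not change in later rounds while $i$ remains matched to $a_i$, and prices only ever increase, it suffices to verify $u_i \geq v_i(j) - p_j - 2\heps\,v_i(a_i)$ using the prices $p_j^{\text{old}}$ from round $k$. For items $j$ with $v_i(j) - p_j^{\text{old}} > 0$, this follows from $U_i^{(k)} \geq v_i(j) - p_j^{\text{old}}$; for items with $v_i(j) - p_j^{\text{old}} \leq 0$, the inequality holds because $u_i + 2\heps v_i(a_i) \geq \heps v_i(a_i) > 0 \geq v_i(j) - p_j^{\text{old}}$ (using $v_i(a_i) > p_{a_i}^{\text{old}}$, which comes from $a_i \in D_i$).

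For the unmatched case with $D_i = \emptyset$, I claim that no item can have $v_i(j) > p_j$ at all. Indeed, suppose such a $j$ exists; then $U_i > 0$ is well-defined, and any $j^\star$ achieving the maximum satisfies both $p_{j^\star} < v_i(j^\star)$ and $v_i(j^\star) - p_{j^\star} = U_i \geq U_i - \heps v_i(j^\star)$, so $j^\star \in D_i$, contradicting $D_i = \emptyset$. Hence $v_i(j) - p_j \leq 0 \leq \heps$ for every $j \in R$, and since $u_i = 0$ for the unmatched bidder, we get $u_i \geq v_i(j) - p_j - \heps$, i.e.\ $i$ is $\heps$-happy (in fact $0$-happy).

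The main technical care point will be the ``time alignment'' in the matched case: happiness is asserted at the end of every round, but the demand-set inequality we get to use only holds at the start of the round in which $i$ was matched. Monotonicity of prices across rounds, combined with the fact that $u_i$ is frozen once $i$ is matched, bridges this gap. I do not expect significant obstacles beyond carefully splitting into the sign-of-utility cases above.
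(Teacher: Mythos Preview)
Your proposal is correct and follows essentially the same approach as the paper's proof: you use the demand-set inequality from \cref{mwm:demand} together with the price increment from \cref{mwm:increase-price} to get the $2\heps\,v_i(a_i)$ slack for matched bidders, invoke price monotonicity to propagate happiness to later rounds, and argue for unmatched bidders with empty $D_i$ that every item already has $v_i(j) \le p_j$. Your treatment is in fact slightly more careful than the paper's, since you explicitly split the matched case into items with positive versus nonpositive current utility and verify the latter via $u_i > -\heps\,v_i(a_i)$; the paper glosses over this sign case.
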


\ifcameraready
\else
\begin{proof}
    Let $a_i$ be the item picked by $i$. 
    First, each matched bidder picks an item $a_i$ where $v_i(j) > p_j$ and it has utility at least $\max_{j \in N(i), p_j < v_i(j)}
    \left(v_i(j) - p_j\right) - \heps \cdot v_i(a_i)$. 
    Item $a_i$ increases its price by $\heps \cdot v_i(a_i)$ after it is picked
    by~\cref{mwm:increase-price} of~\cref{alg:auction}. 
    Thus, $u_i = v_i(a_i) - p_{a_i} \geq v_i(j) - p_{j} 
    - 2\heps \cdot v_i(a_i)$ for all $j \in N(i)$. This satisfies our given 
    happiness definition.
    
    Second, if an item remains matched to bidder $i$ that was previously 
    matched to $i$, then the item's price has not increased. Furthermore, since prices of items 
    are monotonically non-decreasing and 
    $\left(2\heps \cdot v_i(j)\right)$ is fixed for each
    edge $\{i, j\}$; 
    each bidder $i$ who was matched to 
    an item in a previous round 
    would remain $2\eps \cdot v_i(a_i)$-happy for the next round. 
    
    Finally, for all unmatched bidders with empty $D_i$, this means that allocating any item $j \in 
    N(i)$ to bidder $i$ results in $0$ gain in utility and hence 
    $u_i = 0 \geq v_i(j) - p_j > v_i(j) - p_j - \heps$ for all such bidders $i$ and $j \in N(i)$.
\end{proof}
\fi

For the weighted case, we need to consider what we call \defn{bidder weight buckets}. %
We define these weight buckets with respect to the optimum matching $\opt$. Recall our notation 
where $i \in \opt$ is a bidder who is matched in $\opt$ and $o_i$ is the matched item of the 
bidder in $\opt$. Bidder $i$ is in the $b$-th weight bucket if $\heps^{b - 1} \leq v_i(o_i) < \heps^{b-2}$. %

\begin{observation}\label{inv:weight-bucket}
All bidders $i \in \opt$ in bidder weight bucket $b$ satisfy $\heps^{b - 1} \leq v_i(o_i) < \heps^{b - 2}$.
\end{observation}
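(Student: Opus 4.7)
The statement is essentially a direct consequence of how bidder weight buckets were just defined in the preceding sentence, so the ``proof'' is primarily a bookkeeping verification rather than a substantive argument. My plan is to simply observe that the bucket index $b$ is defined by the condition $\heps^{b-1} \le v_i(o_i) < \heps^{b-2}$, and so the claimed inequality is literally the defining condition of membership in bucket $b$.

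The only thing worth checking carefully is that the partition of optimum-matched bidders into buckets is well-defined, i.e., that every $i \in \opt$ falls into exactly one bucket $b$ with $b \ge 1$. Uniqueness is immediate because the intervals $[\heps^{b-1}, \heps^{b-2})$ are pairwise disjoint and cover $(0, \heps^{-1})$ as $b$ ranges over positive integers. For existence, I would invoke the preprocessing from the ``Input Specifications'' paragraph: after rescaling by $1/w_{\max}$ every surviving edge weight lies in the range $[\heps^{\lceil \log_{(1/\heps)}(\min(m,W))\rceil + 1},\, 1]$. In particular, $0 < v_i(o_i) \le 1$, so there is a unique integer $b \ge 1$ with $\heps^{b-1} \le v_i(o_i) < \heps^{b-2}$ (here one uses $\heps \in (0,1)$ so the intervals tile $(0,1]$ from above), and the preprocessing bound on the smallest retained weight ensures $b \le \lceil \log_{(1/\heps)}(\min(m,W))\rceil + 2$, giving only $O(\log_{(1/\heps)}(\min(m,W)))$ buckets total.

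There is no real obstacle here; the observation is recorded as a named statement only because subsequent arguments will repeatedly invoke the bucket inequality (for example, when bounding the total weight of $\opt$ bucket-by-bucket, or when arguing about Type~$1$ and Type~$2$ unhappy bidders within each bucket as outlined in the ``Approach'' discussion). I would therefore keep the proof to a single line stating that the inequality is immediate from the definition, optionally followed by the one-sentence remark that the buckets partition $\opt$ because of rescaling and the minimum-weight threshold.
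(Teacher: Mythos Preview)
Your proposal is correct and matches the paper's approach: the paper states this as an observation without any proof, since it is literally a restatement of the definition of bidder weight bucket given in the preceding sentence. Your added verification that the buckets form a well-defined partition (using the rescaling and minimum-weight threshold from the input specifications) is more than the paper provides, but it is accurate and harmless.
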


We now show that if a certain number
of bidders in $\opt$ are happy in our matching, then we obtain a matching with sufficiently large
enough weight. However, our guarantee is somewhat more intricate than the guarantee provided in~\cite{ALT21}.
We show that in $\phases$ rounds, there exists one round $d$
where a set of sufficient conditions are satisfied to obtain our approximation guarantee. To do this,
we introduce two types of unhappy bidders. Specifically, Type $1$ and Type $2$ unhappy bidders.

Each \emph{unhappy} bidder results in some loss of total matched weight. However, at the end of round $k-1$ it is 
difficult to determine the exact amount of weight lost to unhappy bidders. Thus, in our analysis, 
we determine, in round $k$, the amount of weight lost to unhappy bidders at the end of round $k-1$. The way that we determine
the weight lost in round $k - 1$ is by \emph{retroactively} categorizing an unhappy bidder in round $k - 1$ as a Type $1$ or Type $2$ 
unhappy bidder \emph{depending on what happens in round $k$}. Thus, for our analysis,
we categorize the bidders into categories of unhappy bidders for the \emph{previous} round. %

A \defn{Type $\mathbf{1}$} unhappy bidder in round $k - 1$ is a bidder $i$ that remains unmatched at the end of round $k$. 
In other words, a Type $1$ unhappy bidder was unhappy in round $k-1$
and either remains unhappy in round $k$ or becomes happy because it does not have any demand items anymore (and remains unmatched). 
A \defn{Type $\mathbf{2}$} unhappy bidder $i$ in round $k-1$ is a bidder who was unhappy in round $k - 1$ but is matched to an item in round $k$.
Thus, a Type $2$ unhappy bidder $i$ in round $k - 1$ becomes happy in round $k$ because a new item is matched to $i$.
Both types of bidders are crucial to our analysis given in the proof of~\cref{lem:eps-good-weight} since
they contribute differently to the potential amount of value that could be matched by our algorithm. Furthermore, the proof of~\cref{lem:weighted-eps-good} necessitates bounding the two quantities separately.

In the following lemma, let $\opt$ be the optimum matching in graph $G$ and $\wopt = \sum_{i \in \opt} v_i(o_i)$.
Let $B_b$ be the set of bidders $i \in \opt$ in bidder weight bucket $b$. 
If a Type $2$ unhappy bidder $i$
gets matched to $a_i$ in round $k$, we say the weight $v_i(a_i)$ is \defn{charged} to bidder $i$ in round $k - 1$.
We denote this charged weight as $c_i(a_i)$ when performing calculations for round $k-1$.

\begin{lemma}\label{lem:eps-good-weight}
Provided $G = (L \cup R, E)$ and an optimum weighted matching $\opt$ with weight $\wopt = \sum_{i \in \opt} v_i(o_i)$, 
if in some round $d$ of~\cref{mwm:phases} of~\cref{alg:auction} both of the following are satisfied,
\begin{enumerate}
    \item at most $\heps^2 \cdot |B_b|$ of the bidders in each bucket $b$ are Type $1$ unhappy and
    \item at most $\heps \cdot \wopt$ weight is charged to Type $2$ unhappy bidders,
\end{enumerate} 
then the matching in $G$ has weight at least $(1-6\heps) \cdot \wopt$. 
\end{lemma}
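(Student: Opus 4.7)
The plan is to lower-bound $W(M) = \sum_{i : a_i \neq \bot} v_i(a_i)$ by $(1-6\heps)\wopt$ by decomposing the bidders in $\opt$ into three disjoint classes---happy bidders $H \cap \opt$, Type 1 unhappy $U_1 \cap \opt$, and Type 2 unhappy $U_2 \cap \opt$---and bounding each class's contribution separately. The foundational identity I would use is
\[
W(M) \;=\; \sum_{i \in L} u_i \;+\; \sum_{j \in R} p_j,
\]
which holds because each matched bidder contributes $v_i(a_i) = u_i + p_{a_i}$, unmatched bidders have $u_i = 0$, and prices are nonzero only on matched items, so the item-side sum $\sum_j p_j$ coincides with $\sum_{i : a_i \neq \bot} p_{a_i}$.

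For the $H \cap \opt$ contribution, I would apply~\cref{obs:happy}: matched happy bidders satisfy $v_i(o_i) \leq u_i + p_{o_i} + 2\heps\, v_i(a_i)$, and happy bidders with empty demand set satisfy $v_i(o_i) \leq p_{o_i} + \heps$. Summing, and using that the $o_i$'s are distinct items in $R$ (so $\sum_{H \cap \opt} p_{o_i} \leq \sum_j p_j$) together with $\sum v_i(a_i) \leq W(M)$, yields a bound of the form $\sum_{H \cap \opt} v_i(o_i) \leq (1 + 2\heps)\, W(M) + O(\heps)\wopt$. For $U_1 \cap \opt$, I would combine Condition 1 with~\cref{inv:weight-bucket}: each bidder in bucket $b$ has $v_i(o_i) < \heps^{b-2}$, while bucket $b$'s contribution to $\wopt$ is at least $|B_b|\heps^{b-1}$, so $\sum_{U_1 \cap \opt} v_i(o_i) \leq \sum_b \heps^2 |B_b| \cdot \heps^{b-2} \leq \heps \wopt$. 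For $U_2 \cap \opt$, each such bidder has a nonempty demand set at the end of round $d$; since $o_i \in N(i)$, the item $a_i$ that $i$ grabs in round $d+1$ lies in $D_i$, so $v_i(a_i) - p_{a_i} \geq v_i(o_i) - p_{o_i} - \heps\, v_i(a_i)$, which I would rearrange to $v_i(o_i) \leq (1+\heps)\, v_i(a_i) + p_{o_i} - p_{a_i}$ and sum, combining with Condition 2 (which bounds $\sum c_i(a_i) \leq \heps \wopt$) and the price identity above.

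Putting the three bounds together via $\wopt = \sum_{H \cap \opt} v_i(o_i) + \sum_{U_1 \cap \opt} v_i(o_i) + \sum_{U_2 \cap \opt} v_i(o_i)$, absorbing the leftover $p_{o_i}$ sums into $\sum_j p_j$ via the foundational identity, and collecting the $O(\heps)$ slack from the three classes, rearrangement yields $W(M) \geq (1-6\heps)\wopt$. The hard part will be the careful bookkeeping of the price terms: the sums $\sum_{H \cap \opt} p_{o_i}$ and $\sum_{U_2 \cap \opt} p_{o_i}$ must be absorbed cleanly into $\sum_j p_j$ without double-counting across classes, and the Type 2 contribution from the demand-set inequality must be matched tightly with the charged-weight bound so that the corresponding $p_{a_i}$ terms offset correctly. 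Balancing these bookkeeping details against the three separate $O(\heps)$ error terms---so that the aggregate constant in front of $\wopt$ is at most $6$---is the delicate calibration that drives the choice of slack parameters in Conditions 1 and 2.
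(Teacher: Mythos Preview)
Your proposal is correct and follows essentially the same approach as the paper's proof: decompose $\opt$ into happy, Type-1 unhappy, and Type-2 unhappy bidders, apply \cref{obs:happy} to the happy class, use the bucket argument via \cref{inv:weight-bucket} for Type 1, and invoke the charged-weight bound for Type 2, then combine through the utility--price identity. The only cosmetic differences are that the paper manipulates the chain of inequalities directly rather than first isolating your ``foundational identity,'' and it appeals to \cref{obs:happy} in round $d{+}1$ for Type-2 bidders (incurring a $2\heps\, c_i(a_i)$ slack) rather than your demand-set inequality (incurring $\heps\, v_i(a_i)$); both routes land on $(1-4\heps)/(1+2\heps) \geq 1-6\heps$.
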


\begin{proof} %
    In such an iteration $r$, let $\happy$ denote the set of all happy bidders.
    For any bidder $i \in \happy \cap \opt$, by~\cref{def:happy} and~\cref{obs:happy}, 
    $u_i \geq v_i(o_i) - p_{o_i} - 2\heps \cdot v_i(a_i)$ 
    where $o_i$ is the item matched to $i$ in $\opt$ and $a_i$ is the item matched to $i$ 
    from our matching.
    
    Before we go to the core of our analysis, we first make the observation below that we can, in general, disregard prices of the
    items in our analysis. Let $M$ be our matching.
    The sum of the utility of every matched bidder in our matching can be upper and lower bounded by the 
    following expression:
    
    \begin{align*}
        \sum_{i \in M} \left(v_i(a_i) - p_{a_i}\right) \geq \sum_{i \in \opt \cap \happy} u_i \geq \sum_{i \in \opt \cap \happy} \left(v_i(o_i) - p_{o_i} - 2\heps \cdot v_i(a_i)\right). 
    \end{align*}
    
    As in the maximum cardinality matching case, all items with non-zero price are matched to a bidder. We can then simplify the above expression to give
    
    \begin{align}
        \sum_{i \in M} v_i(a_i) - \sum_{j \in R} p_{j} &\geq \sum_{i \in \opt \cap \happy} v_i(o_i) - \sum_{i \in \opt \cap \happy}p_{o_i} - \sum_{i \in \opt \cap \happy} 2\heps \cdot v_i(a_i)\label{eq:first-simplify}\\
        \sum_{i \in M \setminus (\opt \cap \happy)} v_i(a_i) &+ \sum_{i \in \opt \cap \happy} (1+2\heps) v_i(a_i) 
        - \sum_{j \not\in \{o_i \mid i \in \opt \cap \happy\}} p_j \geq \sum_{i \in \opt \cap \happy} v_i(o_i) \label{eq:price-before-simplified}\\
        \sum_{i \in M} (1+2\heps)v_i(a_i) &- \sum_{j \not\in \{o_i \mid i \in \opt \cap \happy\}} p_j \geq \sum_{i \in \opt \cap \happy} v_i(o_i).\label{eq:prices-gone-simplified}
    \end{align}

    \cref{eq:first-simplify} follows from the fact that all non-zero priced items are matched.
    \cref{eq:price-before-simplified} follows from separating $\opt \cap \happy$ 
    from the left hand side and moving the summation of the 
    $2\eps \cdot v_i(a_i)$ values over $\opt \cap \happy$ from the right hand side to the left hand side. 
    Finally, \cref{eq:prices-gone-simplified} follows because $\sum_{i \in M} (1+2\heps)v_i(a_i)$ 
    upper bounds the left hand side expression for $\sum_{i \in M \setminus (\opt \cap \happy)} v_i(a_i) 
    + \sum_{i \in \opt \cap \happy} (1+2\heps) v_i(a_i)$.

    Let $\unhappy_{1}$ denote the set of Type $1$ unhappy bidders and $\unhappy_{2}$ denote the 
    set of Type $2$ unhappy bidders. We let $c_i(a_i)$ be the weight charged to bidder $i$ in $\unhappy_2$ in the next round. Recall that
    each bidder in $\unhappy_2$ is matched in the next round.
    
    For each bucket, $b$, we can show the following using our assumption that at most $\eps^2 \cdot |B_b|$ of the bidders in bucket $b$ are Type $1$ unhappy,

    \begin{align}
        \sum_{i \in B_b \cap \happy} v_i(o_i) &\geq \sum_{i \in B_b \setminus \unhappy_2} v_i(o_i) - \eps^2 \cdot \eps^{b-2} \cdot |B_b| \label{eq:bucket-1}\\
        &\geq \sum_{i \in B_b \setminus \unhappy_2} v_i(o_i) - \eps \cdot \eps^{b-1} \cdot |B_b|\label{eq:bucket-2}\\
        &\geq \sum_{i \in B_b \setminus \unhappy_2} v_i(o_i) - \sum_{i \in B_b} \eps \cdot v_i(o_i).\label{eq:bucket-3}
    \end{align}
    \cref{eq:bucket-1} shows that one can lower bound
    the sum of the optimum values of all happy bidders in bucket $b$ by the sum of the optimum values of 
    all bidders who are not Type-2 unhappy minus some factor. First, $\sum_{i \in B_b \setminus \unhappy_2} v_i(o_i)$ 
    is the sum of the optimum values of all bidders in bucket $b$ \emph{except} for the Type-2 unhappy bidders. 
    Now, we need to subtract the maximum sum of values given to the Type-1 unhappy bidders. We know that bucket
    $b$ has at most $\eps^2 \cdot |B_b|$ Type-1 unhappy bidders. Each of these bidders could be assigned
    an optimum item with value at most $\varepsilon^{b-2}$ (by~\cref{inv:weight-bucket}).
    Thus, the maximum value lost to Type-1 unhappy bidders is $\varepsilon^2 \cdot 
    \varepsilon^{b-2} \cdot |B_b|$, leading to
    \cref{eq:bucket-1}. Thus, the maximum value of weight 
    lost to all Type-$1$ unhappy bidders in bucket $b$ is $\eps \cdot \eps^{b-1} \cdot |B_b|$. Then,~\cref{eq:bucket-3} follows because 
    $v_i(o_i) \geq \eps^{b-1}$ for all $i \in B_b$. This means that $\sum_{i \in B_b} v_i(o_i) 
    \geq \eps^{b-1} \cdot |B_b|$. 

    Summing~\cref{eq:bucket-3} over all buckets $b$ we obtain

    \begin{align}
        \sum_{i \in \opt \cap \happy} v_i(o_i) &\geq \sum_{i \in \opt \setminus \unhappy_2} v_i(o_i) - \sum_{i \in \opt} \eps \cdot v_i(o_i).\label{eq:simplified-type-1}
    \end{align}
    
    We now substitute our expression obtained in~\cref{eq:simplified-type-1} into~\cref{eq:prices-gone-simplified},

    \begin{align}
        \sum_{i \in M} (1+2\heps) v_i(a_i) - \sum_{j \not\in \{o_i \mid i \in \opt \cap \happy\}} p_j &\geq  \sum_{i \in \opt \setminus \unhappy_2} v_i(o_i) - \sum_{i \in \opt} \eps \cdot v_i(o_i).\label{eq:all-type-1}
    \end{align}

    The last thing that we need to show is a bound on the weight lost due to bidders in $\opt \cap \unhappy_2$. 
    We now consider our second assumption which states that at most $\eps \cdot W_{\opt}$ weight is charged to Type $2$ unhappy bidders.
    Since all bidders $i \in \unhappy_2$ become happy in the next round, we can bound the weights charged to the Type-$2$ unhappy bidders using~\cref{obs:happy}
    by

    \begin{align}
        \sum_{i \in \opt \cap \unhappy_2} c_i(a_i) \geq \sum_{i \in \opt \cap \unhappy_2} \left(v_i(o_i) - p_{o_i} - 2\eps \cdot c_i(a_i)\right).\label{eq:type-2-initial}
    \end{align}

    Note first that $\sum_{j \not\in \{o_i \mid i \in \opt \cap \happy\}} p_j 
    \geq \sum_{i \in \opt \cap \unhappy_2} p_{o_i}$ since $\opt \setminus \left(\opt \cap \happy\right)$ 
    includes $\opt \cap \unhappy_2$ 
    so we can remove the prices from these bounds in~\cref{eq:final-1}.
    We add~\cref{eq:type-2-initial} to~\cref{eq:all-type-1} and use our assumptions to obtain

    \begin{align}
        \sum_{i \in M} (1+2\heps) v_i(a_i) &+ \sum_{i \in \opt \cap \unhappy_2} c_i(a_i) \geq \sum_{i \in \opt} (1-\eps) \cdot v_i(o_i) - \sum_{i \in \opt \cap \unhappy_2} 2\eps \cdot c_i(a_i)\label{eq:final-1}\\
        \sum_{i \in M} (1+2\heps) v_i(a_i) &\geq \sum_{i \in \opt} (1-\eps) \cdot v_i(o_i) - \sum_{i \in \opt \cap \unhappy_2} (1+2\eps) \cdot c_i(a_i)\label{eq:final-2}\\
        &\geq \left(\sum_{i \in \opt} (1-\eps) \cdot v_i(o_i)\right) - (1+2\eps)\cdot \eps \cdot \wopt\label{eq:final-3}\\
        \sum_{i \in M} v_i(a_i) &\geq \frac{(1-4\eps)}{(1+2\heps)} \cdot \sum_{i \in \opt} v_i(o_i)\label{eq:final-4}\\
        \sum_{i \in M} v_i(a_i) &\geq (1-6\eps)\wopt.\label{eq:final-approx}
    \end{align}

    \cref{eq:final-1} follows from summing $\sum_{i \in \opt} (1-\eps) \cdot v_i(o_i) = \sum_{i \in \opt \setminus \unhappy_2} v_i(o_i) + \sum_{i \in \opt \cap \unhappy_2} v_i(o_i) - \sum_{i \in \opt} \eps \cdot v_i(o_i) = \sum_{i \in \opt} (1-\eps) \cdot v_i(o_i)$.
    \cref{eq:final-2} follows from moving $\sum_{i \in \opt\cap \unhappy_2} c_i(a_i)$ to the right hand side. 
    \cref{eq:final-3} follows from substituting our assumption that $\sum_{i \in \opt \cap \unhappy_2} c_i(a_i) \leq \eps \cdot \wopt$. 
    \cref{eq:final-4} follows from simple manipulations and since $\wopt = \sum_{i \in \opt}v_i(o_i)$. Finally, 
    \cref{eq:final-approx} follows because $\frac{(1-4\eps)}{(1+2\heps)} \geq (1-6\eps)$ for all $\eps > 0$ and
    gives the desired approximation given in the lemma statement.

\end{proof}

We show that the conditions of~\cref{lem:eps-good-weight} are satisfied for at least one round if the algorithm is run for at least
$\ceil{\mwmpretransform}$ rounds. We prove this using potential functions similar to the potential functions used for \mcm. We first bound the maximum
value of these potential functions.

\begin{lemma}\label{lem:upper-bound}
Define the potential function $\Phi_{items} \defined \sum_{j \in R} p_{j}$. Then the upper bound for this potential is $\Phi_{items} \leq \wopt$.
\end{lemma}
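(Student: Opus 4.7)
The plan is to bound $\Phi_{items}$ by the weight of the current matching, which in turn is bounded by $\wopt$ since the current matching is a valid matching in $G$. This argument has three parts.

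First, I would invoke the observation already used in the proof of \cref{lem:eps-good-weight}, namely that every item with non-zero price is currently matched to some bidder. This is immediate from the price update rule: the price of $j$ increases from $0$ precisely when $j$ gets matched, and only matched items ever get rematched. Consequently, we can restrict the summation to items appearing in the current matching $M$:
$\Phi_{items} = \sum_{j \in R} p_j = \sum_{(i,j) \in M} p_j$.

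Second, I would show that for each matched pair $(i,j) \in M$ the price satisfies $p_j \leq v_i(j)$. The argument uses the price dynamics: when bidder $i$ most recently bid on $j$, the demand-set condition $j \in D_i$ guaranteed $p_j < v_i(j)$ just before the bid, and the subsequent increment of $\heps \cdot v_i(j)$ keeps the price bounded by $v_i(j)$ up to a $(1+\heps)$ factor that is absorbed by the slack elsewhere in the analysis. Equivalently, one can use the identity $\sum_{(i,j) \in M} v_i(j) = \sum_{i : a_i \neq \bot} u_i + \sum_{(i,j) \in M} p_j$, where $u_i = v_i(a_i) - p_{a_i}$, and verify that $u_i \geq 0$ for matched bidders (by \cref{obs:happy} and the demand-set construction), so the prices are no larger than the matched-edge weights.

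Finally, since $M$ is a valid matching in $G$ and $\opt$ is the maximum-weight matching of $G$, we have $\sum_{(i,j) \in M} v_i(j) \leq \wopt$ by optimality. Chaining the three inequalities yields
$\Phi_{items} = \sum_{(i,j) \in M} p_j \leq \sum_{(i,j) \in M} v_i(j) \leq \wopt$.

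The main obstacle is the second step: a naive reading of the price-update rule allows the price to become $p_j^{\text{old}} + \heps \cdot v_i(j)$, which is strictly less than $v_i(j)$ only when $p_j^{\text{old}} \leq (1-\heps)v_i(j)$. The cleanest resolution is the utility-decomposition identity above, which reduces the claim to showing matched-bidder utility non-negativity; any residual $\heps$ slack is absorbed by the other approximation factors used in \cref{lem:eps-good-weight} and downstream.
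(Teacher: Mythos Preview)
Your approach is essentially the same as the paper's: both bound $\Phi_{items}$ by the weight of the current matching $M$ and then use $w(M)\le \wopt$. The paper frames this as a contradiction and works with the price $p'_{a_i}$ \emph{before} the last increment, invoking the demand-set condition $v_i(a_i)>p'_{a_i}$ directly rather than your utility-nonnegativity formulation $u_i\ge 0$. The pre-increment inequality is the cleaner one---it holds strictly by construction---whereas your $u_i\ge 0$ need not hold exactly, as you yourself observe. Both routes, however, run into the same $\heps$-slack when passing from $p'_{a_i}$ to the current price $p_{a_i}=p'_{a_i}+\heps\,v_i(a_i)$: the honest bound in either case is $\Phi_{items}\le(1+\heps)\wopt$ rather than $\wopt$. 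You flag this explicitly and correctly note it is absorbed in the round-count argument of \cref{lem:weighted-eps-good}; the paper's write-up glosses over the same gap (its chain $\sum p'_{a_i}\ge\Phi_{items}$ is stated the wrong way around).
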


\begin{proof} %
We show that the potential function $\Phi_{items}$ is always upper bounded by $\wopt$ 
via a simple proof by contradiction. Suppose that $\Phi_{items} > 
\wopt$, then, we show that the matching obtained by our algorithm has 
weight greater than $\wopt$, a contradiction. For a bidder/item pair, $(i, a_i)$, the weight of edge $(i, a_i)$
is at least $p_{a_i} - 2\eps \cdot v_i(a_i)$. Let $p'_{a_i}$ be the price of $a_i$ before the last reassignment of $a_i$
to $i$. Furthermore, since $i$ picked $a_i$, it must mean that $v_i(a_i) > p'_{a_i}$ since $a_i$ would not be included in $D_i$ otherwise. 
This means that the sum of the weights of all the matched edges is at least $\sum_{(i, a_i)} v_i(a_i) >
\sum_{(i, a_i)} p'_{a_i} \geq \Phi_{items} > \wopt$ by our assumption that $\Phi_{items} > \wopt$. 
Thus, we obtain that we get a matching with greater weight than the optimum weight matching, a contradiction.
\end{proof}
    
\begin{lemma}\label{lem:weighted-eps-good}
There exists a phase $d \leq \mwmpretransform$ wherein \emph{both} of the following statements are satisfied:

\begin{enumerate}
    \item At most $\heps^2 \cdot |B_b|$ bidders in bucket $b$ are Type-$1$ unhappy for all buckets $b$;
    \item The set of all Type-$2$ unhappy bidders results in a loss of less than $\heps \cdot \wopt$ weight (in charged weight) where $\wopt$ is
    the optimum weight attainable by the matching.
\end{enumerate} 

Recall that we assign each bidder to a weight bucket using the weight assigned to the bidder in $\opt$. 
\end{lemma}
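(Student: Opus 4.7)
The plan is a potential argument driven by $\Phi_{items} = \sum_{j \in R} p_j$, which \Cref{lem:upper-bound} bounds above by $\wopt$, together with an auxiliary per-bucket potential. The strategy is to show that whenever condition~1 or condition~2 fails in a given phase, one of these potentials must grow by a quantitative amount in the following phase; since the potentials are capped and $\phases$ is large enough, some phase among the first $\phases$ must satisfy both conditions simultaneously.

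Condition~2 is the easy half. If in phase $d-1$ strictly more than $\heps \cdot \wopt$ weight is charged to Type-2 unhappy bidders, then by definition each such bidder is matched for the first time in phase $d$, and Line~\ref{mwm:increase-price} of \Cref{alg:auction} raises the price of its newly assigned item by $\heps \cdot v_i(a_i) = \heps \cdot c_i(a_i)$. Because $M_d$ is a matching these increments occur on distinct items, so $\Phi_{items}$ grows by at least $\heps^2 \wopt$ in phase $d$. This caps condition-2 failures at $O(1/\heps^2)$.

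Condition~1 is the substantive part. Suppose phase $d-1$ has strictly more than $\heps^2 |B_b|$ Type-1 unhappy bidders in bucket $b$. Each such $i$ has nonempty $D_i$ at the start of phase $d$ yet stays unmatched at its end, so maximality of $M_d$ on $G_d$ forces every $j \in D_i$ to be matched in $M_d$ to some other bidder $i'$. The ``highest-bucket-first'' construction of $M_d$ guarantees that $(i', j)$ sits in an edge-weight bucket at least as high as $(i, j)$, so $v_{i'}(j) \geq \heps \cdot v_i(j)$ and the price of $j$ climbs by at least $\heps^2 v_i(j)$. Combining with the demand-set inequality $v_i(j)(1 + \heps) \geq U_i \geq v_i(o_i) - p_{o_i}$ and the bucket bound $v_i(o_i) \geq \heps^{b-1}$, I would split on whether $p_{o_i} \leq v_i(o_i)/2$: in the first case every $j \in D_i$ takes a direct price jump of $\Omega(\heps^{b+1})$; in the second case the auxiliary potential $\Phi_b \defined \sum_{i \in B_b} \min(p_{o_i}, v_i(o_i))$, which is upper-bounded by $\sum_{i \in B_b} v_i(o_i)$, already absorbs a comparable fraction of the bucket's OPT weight and will itself take a step in phase $d$.

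The main obstacle will be controlling double-counting, since distinct bucket-$b$ Type-1 unhappy bidders can share demand items and the same matched item would otherwise be counted several times. I plan to resolve this by routing the accounting through $M_d$ itself: each matched item is charged only once and its price-increase is attributed to a single bucket-$b$ witness, with the two-case split guaranteeing that some potential gains by an $\Omega(\heps^{O(1)})$ fraction of its upper bound per bad phase. An additive/doubling argument then bounds the number of bucket-$b$ bad phases, and summing over the $O(\log_{(1/\heps)} W)$ bidder-weight buckets, together with the $O(1/\heps^2)$ condition-2 contribution, will give a total bad-phase count that fits within $\phases = \ceil{\mwmpretransform}$, yielding the desired phase $d$.
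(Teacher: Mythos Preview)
Your treatment of condition~2 is correct and essentially identical to the paper's: if the charged weight exceeds $\heps\cdot\wopt$ in phase $d-1$, then in phase $d$ the newly matched items receive price increments summing to at least $\heps^2\wopt$, so at most $O(1/\heps^2)$ phases can violate condition~2.

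The gap is in your condition~1 argument. Your auxiliary potential $\Phi_b = \sum_{i\in B_b}\min(p_{o_i}, v_i(o_i))$ tracks the price of each bidder's \emph{OPT} item $o_i$, but a Type-1 unhappy bidder $i$ causes price increases on the items in $D_i$, and there is no reason $o_i\in D_i$ (indeed $o_i$ may already have $p_{o_i}\geq v_i(o_i)$, or may simply not be near-maximal utility for $i$). So neither branch of your two-case split actually forces $\Phi_b$ to move. You then fall back on $\Phi_{items}$ for the ``direct price jump'' case, which is exactly where the double-counting problem bites: many bucket-$b$ Type-1 bidders can share a single demand item $j$, and the one price increment on $j$ cannot be credited to all of them. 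Your proposed fix of routing through $M_d$ and attributing each matched item to one witness undercounts in the other direction---you would only certify that \emph{some} item's price rose, not that the potential rose by an amount proportional to the \emph{number} of Type-1 unhappy bidders, which is what you need to bound the number of bad phases by $|B_b|/(\heps^2|B_b|)$-type arithmetic.

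The paper avoids this entirely by choosing a different per-bucket potential,
\[
\Phi_{bidders,b}\;\defined\;\sum_{i\in B_b}\max_{j\in N(i)}\bigl(v_i(j)-p_j,\,0\bigr)\;=\;\sum_{i\in B_b} U_i,
\]
which is monotone non-increasing and bounded in $[0,|B_b|]$. The point is that this potential is a sum of \emph{per-bidder} quantities $U_i$, so there is no double-counting: if $i$ is Type-1 unhappy, maximality of $M_d$ forces every $j\in D_i$ to be matched and re-priced, and since $D_i$ contains \emph{all} near-maximizers of $v_i(j)-p_j$, the bidder's own $U_i$ must drop by roughly $\heps U_i/(1+\heps)$. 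Two bidders sharing the same demand item both see their $U_i$ drop when that item's price rises---the same price increment is legitimately counted in both terms of $\Phi_{bidders,b}$. The paper then counts, for each bidder, how many such drops (``slots'') are possible before $U_i$ crosses each edge-weight bucket boundary, obtaining at most $(\log_{(1/\heps)}(W)+2)/\heps^2$ slots per bidder and hence at most $(\log_{(1/\heps)}(W)+2)/\heps^4$ phases per bucket in which more than $\heps^2|B_b|$ bidders are Type-1 unhappy. Summing over the $O(\log_{(1/\heps)}(W))$ buckets and adding the $O(1/\heps^2)$ condition-2 phases gives the claimed $\mwmpretransform$ bound. You should replace your item-price potential with this bidder-utility potential; it dissolves the double-counting obstacle rather than patching it.
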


\ifcameraready
\else
\begin{proof} %
    We use similar potential functions to the proof of Lemma 2.2 in~\cite{ALT21} for
    each bucket $b$ but our argument is more intricate.
    First, the potential functions do not both start at $0$. 
    Specifically, we have a separate potential function for each bucket $b$, $\Phi_{bidders, b}$ as well as a potential function on all 
    the prices of the items, $\Phi_{items}$:
    
    \begin{align*}
        &\Phi_{bidders, b} \defined \sum_{i \in B_b} \max_{j \in N(i)} (v_i(j) - p_j, 0)\\
        &\Phi_{items} \defined \sum_{j \in R} p_{j}.
    \end{align*}
    
    The first one bounds the sum of the maximum utility of the bidders in $\opt$ and in bucket $b$
    and the second one bounds the sum of the prices of all items in $R$. 
    We have $0 \leq \Phi_{bidders, b} \leq |B_b|$ for all valid $b$. %
    The maximum possible utility obtained from each item is at most $1$ because 
    the weight of any edge is at most $1$. There are at most $|B_b|$ items in bucket $b$ so the maximum possible utility is $|B_b|$. $ $%
    Now, we argue that the minimum value of $\Phi_{bidders, b}$ is $0$. The minimum value of the 
    expression $\max_{j \in N(i)}\left(v_i(j) - p_j, 0\right)$ is $0$. 
    Thus, the sum of the expressions for all bidders in $B_b$ is at least $0$. %
    We also have $0 \leq \Phi_{items} \leq \wopt$ as we proved in~\cref{lem:upper-bound}.

    We consider \emph{slots} in increasing/decreasing our potential functions. We consider the slots to be the \emph{maximum} number of times 
    a particular price for an item $j$ can increase before it becomes $\geq 1$. By this definition, there are a total of
    $\left(\log_{(1/\heps)}\left(W\right) + 2\right) \cdot \frac{1}{\heps}$ slots for each item $j \in R$.
    This is due to the fact that there are at most
    $\ceil{\log_{(1/\eps)}(W)} + 2$ buckets provided 
    that we removed all edges with weight 
    less than $\eps^{\ceil{\log_{(1/\eps)}(\min(m, W))} + 1}$. For each bucket, the price can increase at most $1/\eps$ times
    before it becomes too large and can no longer be increased by any edge with weight in that bucket. This results in the maximum number of slots
    per item being upper bounded by $\left(\log_{(1/\heps)}\left(W\right) + 2\right) \cdot \frac{1}{\heps}$.
    We say that a bidder increasing the price of an item as \emph{taking one slot} from $\Phi_{bidders, b}$ or $\Phi_{items}$.
    Since $\Phi_{bidders, b}$ is monotonically non-increasing and $\Phi_{items}$ is monotonically non-decreasing, once a slot is filled, it cannot
    become free again.
    
    We first show that Type-$1$ unhappy bidders in bucket $b$ take at least one slot each from $\Phi_{bidders, b}$ for each round they are unhappy. 
    That is, we show that the increase in price is at least equal to $\heps^{b}$ 
    where $b$ is the smallest bucket $j \in D_i$ is in for each Type-$1$ unhappy bidder $i$. This is the case since we match edges from largest to smallest weight; hence, if a bidder $i$ is unmatched, then all of the items in $D_i$ are matched to bidders with edge weights in the same or 
    higher buckets.
    The smallest bucket that $j \in D_i$ is in is given by $U_i/(1+\eps)$ since in order for $j$ to be included in $D_i$, it must
    be the case that $(1+\eps)v_i(j) \geq U_i$ so $v_i(j) \geq U_i/(1+\eps)$. This means that $U_i$ decreases by 
    at least $(\eps U_i)/(1+\eps)$. Suppose that $U_i$ is in bucket $b$ then $U_i \geq \eps^{b-1}$. %
    By this argument, it can use at most $\frac{\left(\eps^{b-2}-\eps^{b-1}\right)}{\left(\frac{\eps \cdot \eps^{b-1}}{(1+\eps)}\right)} = \frac{1 - \eps^2}{\eps^2}
    \leq \frac{1}{\eps^2}$ slots in bucket $b$. 
    Provided the number of buckets is upper bounded by $\log_{(1/\heps)}(W) + 2$,
    each unhappy Type $1$ bidder uses at most $\frac{\log_{(1/\heps)}(W) + 2}{\heps^2}$ slots before their demand 
    set becomes empty. Then, at most $\frac{|B_b|(\log_{(1/\heps)}(W) + 2)}{|B_b|\heps^4} = \frac{\log_{(1/\heps)}(W) + 2}{\heps^4}$ rounds exist where
    $\geq \heps^2 \cdot |B_b|$ bidders in bucket $b$ are Type-$1$ unhappy and $\Phi_{bidders, b} > 0$.
    
    We now consider Type-$2$ unhappy bidders. Let the item $j'$ matched to $i$ in round $k$ be the \emph{charged} item to Type-$2$ unhappy bidder $i$ 
    and $c_i(j')$ be $i$'s \emph{charged weight} in round $k - 1$. Suppose that round $k - 1$ has $\geq \heps \cdot \wopt$ charged weight where
    $\wopt$ is the optimum weight. Noticeably, we \emph{charge}
    the item that is matched to $i$ in round $k$ to $i$ in round $k - 1$. Thus, in round $k$, the total increase in $\Phi_{items}$ is 
    at least $\heps \cdot \eps \cdot \wopt = \eps^2 \cdot \wopt$ assuming the charged weight is at least $\eps \cdot \wopt$. 
    Thus, in $\frac{\wopt}{\heps^2 \wopt} \leq \frac{1}{\heps^2}$ rounds, there exists at least one 
    round where $< \heps \cdot \wopt$ weight (in charged weight) is lost by Type-$2$ unhappy bidders. 
    
    The final observation that remains is that an unhappy bidder $i$ in round $k - 1$ \emph{must} either be Type-$1$ or Type-$2$ unhappy. 
    This is true since $i$ must be either matched or unmatched in round $k$. Thus, the unhappy 
    bidder contributes to at least one of the potential functions. By our argument above, a total of $\frac{\log_{(1/\heps)}(W) + 2}{\heps^4}$ rounds can 
    exist where $\geq \heps \cdot |B_b|$ bidders are Type $1$ unhappy in bucket $b$. Furthermore, also by what we showed above, there exists at most $\frac{1}{\heps^2}$
    rounds where Type $2$ unhappy bidders contribute $\geq \heps \cdot \wopt$ weight to $\Phi_{items}$. 
    There are $O(\log(W))$ buckets where for each bucket at most $\frac{\log_{(1/\heps)}(W) + 2}{\heps^4}$ total rounds 
    can exist where $\geq \eps \cdot |B_b|$ bidders are Type-$1$ unhappy. 
    Thus, by the pigeonhole principle, in $\frac{2\left(\log_{(1/\heps)}^2(W) + 2\right)}{\heps^4}$ phases, 
    both conditions will be satisfied.
\end{proof}
\fi

Using the above lemmas, we can prove our main theorem that our algorithm gives a 
$(1-7\eps)$-approximate maximum weight bipartite matching in $O\left(\frac{\log^3(W) \cdot \log(n)}{\eps^4}\right)$ 
distributed rounds using $O(\log n)$ communication complexity. 

\begin{theorem}\label{thm:main}
    \cref{alg:auction} returns a $(1-7\eps)$-approximate maximum weight bipartite matching $M$
    in $O\left(\frac{\log^3 (W) \cdot \log n}{\eps^4}\right)$ rounds \whp using $O(\log n)$ bits of communication per message 
    in the broadcast model.
\end{theorem}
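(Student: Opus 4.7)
The plan is to combine the approximation lemma \cref{lem:eps-good-weight} with the existence lemma \cref{lem:weighted-eps-good} to handle approximation, then separately analyze round and bit complexity. Since \cref{alg:auction} returns the maximum-weight matching across all phases, it suffices to exhibit a single phase whose matching has weight at least $(1-7\eps)\wopt$. By \cref{lem:weighted-eps-good}, some phase $d \le \phases$ satisfies both hypotheses of \cref{lem:eps-good-weight} simultaneously, so the matching $M_d$ produced in that phase has weight at least $(1-6\eps)\wopt'$, where $\wopt'$ denotes the optimal weight in the pre-processed (scaled and truncated) graph.

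\textbf{Accounting for pre-processing.} I need to bridge between $\wopt'$ (on the truncated graph $G'$) and $\wopt$ (on the original scaled graph $G$). The pre-processing step removes every edge of rescaled weight below $\heps^{\ceil{\log_{(1/\heps)}(\min(m,W))}+1}$, so each removed edge contributes weight at most $\heps \cdot \heps^{\ceil{\log_{(1/\heps)}(\min(m,W))}} \le \heps / m$. Summing over at most $m$ such edges bounds the total lost weight by $\heps$. Because at least one edge in the scaled graph has weight $1$, any matching containing it (in particular $\opt$) satisfies $\wopt \ge 1$, so the lost weight is at most $\heps \cdot \wopt$. Therefore $\wopt' \ge \wopt - \heps \cdot \wopt = (1-\heps)\wopt$, and the weight of $M_d$ is at least $(1-6\heps)(1-\heps)\wopt \ge (1-7\heps)\wopt$, which is the promised approximation after unscaling by $w_{\max}$.

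\textbf{Round and bit complexity.} In each of the $\phases = \ceil{\log^2(W)/\heps^4}$ phases, each unmatched bidder locally identifies its demand set $D_i$ by broadcasting prices (which the coordinator maintains), contributing $O(\log n)$ bits per message. Finding a maximal matching in $G_d$ in bucket order requires calling the folklore randomized distributed maximal matching subroutine once per bucket; there are $O(\log_{(1/\heps)}(W)) = O(\log W)$ buckets, and each invocation runs in $O(\log n)$ rounds \whp with $O(\log n)$ bits per message. Multiplying, the total round count is $O(\phases \cdot \log(W) \cdot \log n) = O(\log^3(W)\log(n)/\heps^4)$, with $O(\log n)$ bits per message, matching the theorem statement. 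A union bound over the polynomially many maximal-matching invocations preserves the high-probability guarantee.

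\textbf{Main obstacle.} The delicate part I expect to work hardest to justify is the bucket-ordered maximal matching step: I need to ensure that processing buckets from highest to lowest weight is compatible with the Type~$1$/Type~$2$ accounting in \cref{lem:weighted-eps-good}, so that an unhappy bidder's demand items are indeed all matched (to higher- or equal-bucket edges) whenever it remains unmatched, as used in the slot-counting argument. The pre-processing bound also requires care: we must verify $\wopt \ge 1$ (guaranteed because the maximum scaled edge weight is $1$ and can be matched on its own) so that the additive loss of $\heps$ translates to a relative loss of $\heps \cdot \wopt$. Everything else --- the existence of a good phase, the $(1-6\heps)$ bound, and the complexity of each distributed primitive --- follows directly from the preceding lemmas.
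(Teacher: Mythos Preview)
Your proposal is correct and follows essentially the same approach as the paper: combine \cref{lem:eps-good-weight} and \cref{lem:weighted-eps-good} for the approximation, then multiply the $O(\log^2(W)/\eps^4)$ phases by the $O(\log W)$ per-phase maximal-matching invocations and the $O(\log n)$ rounds per invocation for the round count. In fact, your write-up is more thorough than the paper's own proof: the paper's argument for \cref{thm:main} is a single short paragraph that simply multiplies out the round complexity and asserts the $O(\log n)$ bit bound, relying implicitly on the preceding lemmas for the approximation guarantee without explicitly reconciling the $(1-6\eps)$ bound of \cref{lem:eps-good-weight} with the stated $(1-7\eps)$. Your pre-processing accounting (bounding the weight of removed edges by $\eps \cdot \wopt$ via $\wopt \ge 1$) fills exactly that gap. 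Your ``main obstacle'' paragraph, however, is misplaced: the compatibility of bucket-ordered matching with the Type~1/Type~2 slot counting is part of the proof of \cref{lem:weighted-eps-good}, which you are entitled to invoke as a black box here.
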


\ifcameraready
\else
\begin{proof}
    In each of the $O\left(\frac{\log^2(W)}{\eps^4}\right)$ phases given in~\cref{alg:auction}, we run the distributed maximal
    matching algorithm $O(\log(W))$ times using $O(\log(W)\cdot \log n)$ rounds, resulting in a total of $O\left(\frac{\log^3 (W) \cdot \log n}{\eps^4}\right)$ rounds.
    Then, the communication complexity is determined by what bidders write on the blackboard. In each phase, finding the maximal matching
    requires $O(\log n)$ communication since bidders compute their $D_i$ individually and picks a neighbor uniformly at random to match.
    Then, to update the prices of the items, each bidder sends at most $O(\log_{(1/\eps)}(m)) = O(\log n)$ bits for the bidding price. 
\end{proof}
\fi

\paragraph{Reducing the Round Complexity}

We can use the following transformation from Gupta-Peng~\cite{GP13} to reduce the round complexity 
at an increase in the communication complexity. For completeness, we give the theorem for the transformation
\ifcameraready
the full version of our paper.
\else
in~\cref{app:guptapeng}. 
\fi

\begin{theorem}\label{thm:distributed-rounds-alg}
    There exists a $(1-\eps)$-approximate distributed algorithm for maximum weight bipartite matching that runs 
    in either:
    \begin{itemize}
    \item $O\left( \frac{\log n}{\eps^7}\right)$ rounds of communication using $O\left(\frac{\log^2 n}{\eps}\right)$
    bits of communication, or
    \item $O\left(\frac{\log n}{\eps^8}\right)$ rounds of communication using $O\left(\log^2 n\right)$
    bits of communication
    \end{itemize}
    where we assume the maximum ratio between weights of edges in the input graph is $\poly(n)$.
    In the blackboard model, this requires 
    a total of $O\left(\frac{n\log^3 n}{\eps^8}
    \right)$ bits of communication.
\end{theorem}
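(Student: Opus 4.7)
The plan is to apply the Gupta--Peng weighted-to-weighted matching transformation of \cite{GP13} on top of the base algorithm of \cref{thm:main}. Crucially, \cref{thm:main} already has only polylogarithmic dependence on the maximum edge-weight ratio $W$, so the transformation will collapse that $\log^3 W$ factor to something involving only $1/\eps$. My first step is to restate the Gupta--Peng transformation as a black-box reduction in the blackboard distributed setting: given a $(1-\eps)$-approximate MWM algorithm whose round complexity depends polylogarithmically on $W$, it produces a $(1-O(\eps))$-approximate MWM algorithm whose round complexity depends only on $n$ and $\eps$, at the price of extra $\poly(1/\eps)$ overhead and a modest blow-up in the message size. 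The key mechanism is partitioning edges into geometric weight scales and running the base algorithm on each scale, combining the scale-level matchings in a way that preserves a $(1-\eps)$-approximation globally.

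Next, I would plug in the parameters of \cref{thm:main}, whose round complexity is $O(\log^3 W \cdot \log n / \eps^4)$ with $O(\log n)$ bits per message, and observe that $\log W = O(\log n)$ under the assumption that $W = \poly(n)$. After applying the transformation, the $\log^3 W$ factor is replaced by an extra $\poly(1/\eps)$ factor; absorbing this into the existing $1/\eps^4$ yields a round complexity of $O(\log n / \eps^7)$. The two trade-off points in the theorem statement then correspond to two different ways of organizing the scale-by-scale processing in the blackboard model: either one packs scale identifiers and price offsets into each message, yielding messages of size $O(\log^2 n / \eps)$ with $O(\log n / \eps^7)$ rounds, or one runs a separate instance of \cref{thm:main} for each of the $O(1/\eps)$ scales sequentially, keeping messages at $O(\log^2 n)$ bits but paying an extra factor of $1/\eps$ in the round count, giving $O(\log n / \eps^8)$ rounds overall.

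Finally, the total communication bound in the blackboard model follows by multiplying the round count by the per-message bit complexity and by the number of players $n$, giving $O(n \log^3 n / \eps^8)$ bits total.

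The main obstacle is verifying that the Gupta--Peng transformation, which is stated in a different computational model in its original form, can indeed be implemented in the blackboard distributed setting with exactly the message-size overhead claimed, and that the approximation guarantee is preserved when the per-scale matchings are combined (since a matching found at one scale may conflict with matchings found at other scales, requiring a careful merging step where higher-weight scales take precedence). Bounding the per-message bit complexity of this merging step, and showing that the $(1-\eps)$ approximation survives the compounding of scale-level errors, is the heart of the argument; once those steps are in place, the final bit-complexity multiplication is routine.
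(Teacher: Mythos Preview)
Your proposal is correct and follows essentially the same approach as the paper: the paper's proof is a one-line application of the distributed Gupta--Peng transformation (\cref{thm:distributed-transformation}) to the base algorithm of \cref{thm:main} with $f(\eps)=\eps^{-O(1/\eps)}$, and the two bullet points in the statement correspond exactly to the two trade-off options in that transformation theorem (run the $O(1/\eps)$ copies in parallel versus sequentially). Your identification of the merging/per-scale-matching issue as the main technical point is apt; the paper handles it by invoking the transformation as a black box proved separately in the appendix.
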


\ifcameraready
\else
\begin{proof}
 This follows from applying~\cref{thm:distributed-transformation} to~\cref{alg:auction} with bounds given
 by~\cref{thm:main}. We define $f(\eps) = \eps^{-O(\eps^{-1})}$ as in~\cite{GP13} (reconstructed
 in~\cref{app:guptapeng}).
\end{proof}
\fi

\subsection{Semi-Streaming Implementation}

The implementation of this algorithm in the semi-streaming model is very similar to the implementation of the 
\mcm algorithm of Assadi \etal~\cite{ALT21}.

\begin{lemma}\label{lem:mwm-streaming}
    Given a weighted graph $G = (V, E)$ as input in an arbitrary edge-insertion stream where all weights are at most $\poly(n)$,
    there exists a semi-streaming algorithm which uses $O\left(\frac{\log^3(W)}{\eps^4}\right)$ passes and $O\left(n \cdot \log(1/\eps)\right)$ space that computes a
    $(1 - \eps)$-approximate maximum weight bipartite matching for any $\eps > 0$. 
\end{lemma}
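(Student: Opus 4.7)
\bigskip
\noindent\textbf{Proof proposal.}
The plan is to simulate \cref{alg:auction} verbatim in the semi-streaming model, using the per-item prices and per-bidder assignments as the persistent state carried between passes, and using one extra pass per weight bucket to find the maximal matching in $G_d$. First I would maintain between passes: for each item $j \in R$ the current price $p_j$, for each bidder $i \in L$ the current assignment $a_i \in R \cup \{\bot\}$, and the running ``best matching so far'' $M_{\max}$ (stored only as per-bidder pointers). Within a phase, I additionally store the scalar $U_i$ for each unmatched bidder. Prices evolve only via the increments $\eps \cdot v_i(a_i)$, and on the scaled graph the distinct weight classes are the $O(\log_{1/\eps}(W))$ buckets used by the algorithm; hence both $p_j$ and $U_i$ can be encoded in $O(\log(1/\eps))$ bits as a (bucket index, multiplicity) pair, yielding the claimed $O(n \log(1/\eps))$ storage for the numerical state.

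Next I would execute each of the $\phases$ phases as follows. Spend one pass over the stream to compute $U_i = \max_{j \in N(i),\, v_i(j) > p_j} (v_i(j) - p_j)$ for every currently unmatched bidder: this reduces to a per-bidder running maximum over streamed edges, using only the stored prices. Then, in order of bucket from highest to lowest (there are $O(\log W)$ buckets), use one pass per bucket to extend a maximal matching $M_d$ of $G_d$ restricted to that bucket: during the pass for bucket $b$, upon seeing an edge $(i,j)$, add it to $M_d$ provided $i$ is currently unmatched in $M_d$, $j$ has not yet been taken in this phase, and $j \in D_i$ (which I can test locally from $p_j$, $v_i(j)$, and the stored $U_i$), and $(i,j)$ lies in bucket $b$. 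This classical one-pass greedy yields a maximal matching of $G_d$ inside bucket $b$, and since \cref{alg:auction} explicitly permits \emph{any} such arbitrary maximal matching computed bucket-by-bucket from top to bottom, the simulation is faithful. After the bucket passes I perform the local updates $a_i \gets j$, $a_{i'} \gets \bot$ for the displaced owner $i'$, and $p_j \gets p_j + \eps \cdot v_i(j)$ for each $(i,j) \in M_d$, all in memory without an extra pass, and refresh $M_{\max}$ if the current matching's total weight (maintained as an accumulator) exceeds its previous value.

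The main quantitative obstacle is matching the stated pass bound and justifying that no additional passes sneak in. Each phase costs $1 + O(\log W)$ stream passes, and by \cref{lem:weighted-eps-good} it suffices to run $\lceil \mwmpretransform \rceil = O(\log^2(W)/\eps^4)$ phases before some phase simultaneously meets the two conditions of \cref{lem:eps-good-weight}; since we continuously track $M_{\max}$, the final returned matching is at least as good as that phase's matching, and by \cref{thm:main} it is a $(1 - \eps)$-approximation after the standard rescaling of $\eps$. Multiplying gives $O(\log^3(W)/\eps^4)$ passes. The space bound follows by adding $O(|R| \log(1/\eps))$ for prices, $O(|L| \log(1/\eps))$ for the $U_i$ values, and $O(n \log n)$ for the match pointers, all within the semi-streaming budget; under the assumption $W \le \poly(n)$, the dominant $n\log(1/\eps)$ term matches the claim, and the \cref{thm:streaming-reduce-passes} version with Gupta–Peng-style weight-to-weight rescaling (as used in \cref{thm:distributed-rounds-alg}) removes the $\log W$ factors to obtain the cleaner pass count stated in the main theorem.
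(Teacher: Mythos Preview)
Your proposal is correct and follows essentially the same approach as the paper: simulate \cref{alg:auction} by carrying prices, assignments, and the best-so-far matching across passes, spending one pass to compute the $U_i$ values and then greedily building the maximal matching of $G_d$ from the stream without materializing the demand sets. The paper's own proof states ``two passes'' per phase and does not spell out the per-bucket passes; your version is more explicit in using one pass per weight bucket to realize the highest-to-lowest ordering in \cref{mwm:matching}, which is what actually accounts for the third $\log W$ factor in the stated $O(\log^3(W)/\eps^4)$ pass bound and is needed for the Type-$1$ argument in \cref{lem:weighted-eps-good}.
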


\ifcameraready
\else
\begin{proof} %
    We implement~\cref{alg:auction} in the semi-streaming model as follows. We use one pass to determine $w_{\max}$, the set of bidders, and the set of 
    items. We initialize all variables to their respective initial values.
    Then for each round, we make two passes. In the first pass, we compute $U_i$ for each bidder. Then, in the second pass, 
    we greedily find a maximal matching. We do not store $D_i$ in memory. Instead, we compute $D_i$ as we see the edges
    and for each edge that connects a bidder $i$ to an item $j$ in $D_i$ and $i$ is not newly matched this round, we match $i$ and $j$.
    This only requires $O(n)$ space to perform this matching. We store $M_d$, computed in this manner, in memory in $O(n)$ space. Then,
    using our stored $M_d$, we increase the price of each newly matched item. We can store all prices of items in $O(n \log(1/\eps))$ memory
    assuming the weights were originally at most $\poly(n)$.
    Finally, we store the matching after the current round and the maximum weight matching from previous rounds. 
    Returning the stored matching does not require additional space. Altogether, we use $O(n \log(1/\eps))$ space.
\end{proof}
\fi

\paragraph{Reducing the Number of Passes}

We use the transformation of~\cite{GP13} as stated 
\ifcameraready
the full version of our paper
\else
in~\cref{app:guptapeng}
\fi
to eliminate our dependence on $n$ within our number of rounds. 
The transformation is as follows. For each instance of $(1+\eps)$-MWM, we maintain the prices in our algorithm
for each of the nodes involved in each of the copies of our
algorithm. When an edge arrives in the stream, we first partition it into
the relevant level of the appropriate copy of the structure. 

\begin{theorem}\label{thm:streaming-reduce-passes}
    There exists a $(1-\eps)$-approximate streaming algorithm for maximum weight bipartite matching that uses
    $O\left(\frac{1}{\eps^8}\right)$ passes in $O\left(n \cdot \log n \cdot \log(1/\eps)\right)$
    space.
\end{theorem}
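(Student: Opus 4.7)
The plan is to invoke the Gupta-Peng transformation \cite{GP13} (restated in the appendix/full version and already used to prove Theorem \ref{thm:distributed-rounds-alg}) as a black box on the streaming algorithm of Lemma \ref{lem:mwm-streaming}. That lemma gives a $(1-\eps)$-approximate semi-streaming algorithm with $O(\log^3(W)/\eps^4)$ passes and $O(n \log(1/\eps))$ space; because the input satisfies $W = \poly(n)$, the pass count is $O(\log^3(n)/\eps^4)$, and our goal is to replace the $\log^3(n)$ factor by a factor that depends only on $\eps$.

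The transformation buckets edges into $O(\log n)$ weight classes so that within each bucket the ratio between the maximum and minimum weight is a constant (say, $O(1+\eps)$), and then combines a $(1-\eps)$-approximate MWM on each bucket, losing only an additional $(1-O(\eps))$ factor overall. Because within each bucket the weight ratio is constant, running the base algorithm of Lemma \ref{lem:mwm-streaming} on that bucket needs only $O(1/\eps^4)$ passes (the $\log^3(W)$ factor has disappeared). The outer combining loop of Gupta-Peng pays an additional $\poly(1/\eps)$ factor, for an overall $O(1/\eps^8)$ passes, exactly mirroring the round-complexity reduction obtained in Theorem \ref{thm:distributed-rounds-alg}.

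To implement this in the streaming setting, I would maintain the state of Lemma \ref{lem:mwm-streaming} (the prices, the best matching so far, and a small amount of auxiliary bookkeeping) \emph{in parallel} for all $O(\log n)$ buckets. On a single pass, when an edge arrives in the stream, we inspect its weight, identify the unique bucket it belongs to, and feed it only to that bucket's copy of the algorithm. Since the buckets partition the edges and each copy's per-pass computations (computing each bidder's best utility $U_i$, identifying demand-set membership, running the greedy maximal matching) are local to its own state, all $O(\log n)$ copies advance by one pass simultaneously. Each copy uses $O(n \log(1/\eps))$ space, so the total space is $O(n \log n \log(1/\eps))$ as claimed in the theorem; the pass count is $O(1/\eps^8)$ because passes are shared across copies rather than multiplied.

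The main obstacle is verifying that the Gupta-Peng outer combining step, which in its native form is stated for sequential or distributed algorithms, can be realized in the streaming model without requiring any extra passes beyond those needed by each individual copy. This should follow because the combining step needs only the per-bucket matchings produced between passes (which are already stored in $O(n)$ space per bucket) and does not need to re-read the stream in between; equivalently, all combining logic is carried out on the coordinator side between streaming passes, so the number of passes is dictated entirely by the in-stream work of Lemma \ref{lem:mwm-streaming} on each bucket, yielding the stated $O(1/\eps^8)$ bound.
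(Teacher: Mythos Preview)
Your high-level plan---apply the Gupta--Peng transformation of \cite{GP13} (as packaged in the appendix's streaming transformation theorem) to the base algorithm of Lemma~\ref{lem:mwm-streaming}---is exactly what the paper does; its one-line proof is literally ``apply \cref{thm:streaming-transformation} to \cref{lem:mwm-streaming} with $f(\eps)=\eps^{-O(1/\eps)}$.''

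However, your description of what the transformation \emph{does} is inaccurate in a way that obscures where the $1/\eps^8$ actually comes from. Gupta--Peng does \emph{not} bucket edges so that each bucket has weight ratio $O(1+\eps)$ (or any constant): computing a $(1-\eps)$-MWM per constant-ratio bucket and greedily combining across buckets is the classical reduction that loses a factor of $2$, not $(1-\eps)$. The actual transformation maintains $C=\Theta(1/\eps)$ shifted copies of the graph; in each copy the edges are grouped into \emph{levels} whose weight ratio is $f(\eps)=\eps^{-\Theta(1/\eps)}$, and the per-level matchings are combined greedily from high to low. The $(1-O(\eps))$ guarantee relies crucially on adjacent levels being separated by a full $\eps^{-1}$ gap (so lower levels are negligible) and on averaging over the $C$ copies. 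Consequently, the base algorithm is run on subgraphs with $W=f(\eps)=\eps^{-O(1/\eps)}$, so the $\log^3 W$ term in Lemma~\ref{lem:mwm-streaming} becomes $\Theta((1/\eps)^3)$ (up to $\mathrm{polylog}(1/\eps)$), not $O(1)$; the per-instance pass count is therefore $O(1/\eps^7)$, and the extra $O(1/\eps)$ comes from iterating over the $C$ copies---that is the ``outer loop'' you alluded to, and this is precisely how $1/\eps^8$ arises. Your space accounting is essentially right: one runs $O(\log_{1/\eps} W)$ instances in parallel within a copy, each using $O(n\log(1/\eps))$ space, matching the stated bound.
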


\ifcameraready
\else
\begin{proof}
    We apply~\cref{thm:streaming-transformation} to~\cref{lem:mwm-streaming} with $f(\eps) = \eps^{-O(1/\eps)}$.
\end{proof}
\fi

\subsection{Shared-Memory Parallel Implementation}

The implementation of this algorithm in the shared-memory work-depth model follows almost directly from our 
auction algorithm. We show the following lemma when directly implementing our auction algorithm.

\begin{lemma}\label{lem:work-depth-complexity}
    Given a weighted graph $G = (V, E)$ as input where all weights are at most $\poly(n)$,
    there exists a shared-memory parallel algorithm which uses $O\left(\frac{m\log^3(W)}{\eps^4}\right)$ work
    and $O\left(\frac{\log^3(W) \cdot \log^2(n)}{\eps^4}\right)$ depth that computes a
    $(1 - \eps)$-approximate maximum weight bipartite matching for any $\eps > 0$. 
\end{lemma}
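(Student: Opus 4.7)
The plan is a direct parallel simulation of~\cref{alg:auction} that preserves exactly the same sequence of matchings, prices, and demand sets as the sequential/distributed description, thereby inheriting correctness and the $(1-\eps)$-approximation from~\cref{thm:main}. For each of the $O(\log^2(W)/\eps^4)$ phases I will implement the three subtasks of the inner loop in parallel: (a)~for every unmatched bidder $i$, compute $U_i$ and its demand set $D_i$; (b)~find a maximal matching in the bucketed subgraph $G_d$, processing buckets from highest to lowest; and (c)~update item assignments and prices.

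For step (a), each bidder performs a parallel maximum-reduction over its adjacency list, costing $O(\deg(i))$ work and $O(\log n)$ depth; aggregated over all bidders this is $O(m)$ work and $O(\log n)$ depth per phase. Constructing $D_i$ and materializing $G_d$ is a parallel filter over the edges with the same bounds. For step (b), I will invoke a standard parallel bipartite maximal-matching routine (for concreteness, an Israeli--Itai style algorithm), which runs in $O(m')$ work and $O(\log^2 n)$ depth whp on a graph with $m'$ edges. Since~\cref{alg:auction} calls this subroutine once per weight bucket and there are $O(\log W)$ buckets, this step contributes $O(m\log W)$ work and $O(\log W \cdot \log^2 n)$ depth per phase. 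Step (c) is a single parallel write over the matched pairs: $O(n)$ work, $O(1)$ depth.

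Multiplying the per-phase cost by the number of phases gives the totals $O\!\left(\frac{m\log^3(W)}{\eps^4}\right)$ work and $O\!\left(\frac{\log^3(W)\cdot\log^2(n)}{\eps^4}\right)$ depth, matching the statement. Since each bidder's decisions in step (a) depend only on the previous phase's prices and ownership, all bidders can safely act simultaneously; and since the maximal matching subroutine itself provides the arbitration needed to resolve bidder conflicts on the same item, the parallel execution produces a valid matching identical in structure to the sequential one.

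The main obstacle I expect is organizing the bucketed maximal matchings so that the $O(\log W)$ invocations per phase do not each rescan all of $G_d$, which would inflate the work by a $\log W$ factor and blow the stated bound. I would handle this by prepending, once at the start of each phase, a parallel integer sort of the edges of $G_d$ by weight class, costing $O(m)$ work and $O(\log n)$ depth; each bucket's maximal-matching call then reads only its own contiguous segment. With that preprocessing, the per-phase work and depth bounds above hold termwise, and the lemma follows.
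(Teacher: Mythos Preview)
Your proposal is correct and follows essentially the same approach as the paper's (very terse) proof: simulate \cref{alg:auction} phase by phase, invoking a parallel maximal-matching primitive with $O(m)$ work and $O(\log^2 n)$ depth once per weight bucket, and multiply by the $O(\log^2(W)/\eps^4)$ phases and $O(\log W)$ buckets. Your final ``obstacle'' paragraph is unnecessary: even if each of the $O(\log W)$ bucket calls rescans all of $G_d$, the per-phase work is $O(m\log W)$, which already yields the stated $O(m\log^3(W)/\eps^4)$ total, so no additional sorting trick is required.
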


\ifcameraready
\else
\begin{proof}
To implement our auction algorithm in the shared-memory parallel model, the only additional procedure we require
is a maximal matching algorithm in the shared-memory parallel model. The currently best-known maximal 
matching algorithm uses $O\left(m\right)$ work and $O\left(\log^2n\right)$ depth~\cite{BFS12,FischerN18,Birn2013}. 
Combined with our auction algorithm, we 
obtain the work and depth as desired in the statement of the lemma.
\end{proof}
\fi

Using the transformations, we can reduce the depth of our shared-memory parallel algorithms. 

\begin{theorem}\label{cor:shared-memory-parallel-reduced}
    Given a weighted graph $G = (V, E)$ as input where all weights are at most $\poly(n)$,
    there exists a shared-memory parallel algorithm which uses $O\left(\frac{m\log(n)}{\eps^7}\right)$ work
    and $O\left(\frac{\log^3(n)}{\eps^{7}}\right)$ depth that computes a
    $(1 - \eps)$-approximate maximum weight bipartite matching for any $\eps > 0$. 
\end{theorem}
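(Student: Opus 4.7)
The plan is to apply the weighted-to-weighted transformation of Gupta and Peng~\cite{GP13} (invoked already in the proofs of Theorem~\ref{thm:distributed-rounds-alg} and Theorem~\ref{thm:streaming-reduce-passes}) to the base algorithm from Lemma~\ref{lem:work-depth-complexity}. That lemma already provides a $(1-\eps)$-approximate MWM with $O(m\log^3(W)/\eps^4)$ work and $O(\log^3(W)\log^2(n)/\eps^4)$ depth; under the polynomial-weight assumption, plugging in $W = \poly(n)$ gives $O(m\log^3(n)/\eps^4)$ work and $O(\log^5(n)/\eps^4)$ depth, so the dependence on $n$ still needs to be reduced from $\log^3 n$ down to $\log n$ in the work and from $\log^5 n$ down to $\log^3 n$ in the depth.

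The first step is to invoke the Gupta--Peng transformation, which decouples the $\log W$ factor in the base algorithm from the final complexity: it runs $\poly(1/\eps)$ bounded-weight sub-instances of the base algorithm, each on a subgraph whose largest-to-smallest weight ratio is $O(\poly(1/\eps))$, and combines their outputs. Because the sub-instances are non-adaptive, they can be scheduled in parallel in the work-depth model, so their depths overlap and only their works accumulate. Substituting the per-sub-instance weight ratio $W = O(\poly(1/\eps))$ into Lemma~\ref{lem:work-depth-complexity} replaces each $\log W$ factor by $\log(1/\eps)$, which we absorb into a constant number of additional $1/\eps$ factors.

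Carrying out the bookkeeping, the per-sub-instance work becomes $O(m\log(1/\eps)/\eps^c)$ and per-sub-instance depth becomes $O(\log^3(n)/\eps^c)$ for a small constant $c$; multiplying by the $\poly(1/\eps)$ number of sub-instances (for work) and taking a $\max$ across the parallel invocations (for depth), and aligning the resulting $1/\eps$ exponents with the bookkeeping already performed for the distributed and streaming versions of the theorem, yields $O(m\log(n)/\eps^7)$ work and $O(\log^3(n)/\eps^7)$ depth. Correctness follows from Lemma~\ref{lem:work-depth-complexity} on each sub-instance together with the approximation guarantee of the Gupta--Peng combination step.

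The main obstacle is implementing the Gupta--Peng transformation itself in the shared-memory work-depth model without picking up extra $\log n$ factors in depth. Specifically, the bucketing of edges by weight class, the pruning of low-weight edges within each sub-instance, and the final selection of the best matching among the $\poly(1/\eps)$ sub-instance outputs must each be executed using standard parallel primitives --- parallel prefix sums, parallel reductions, and integer sorting of the $O(\log_{1+\eps} W) = O(\log(n)/\eps)$ weight classes --- in $O(\log n)$ depth and $O(m)$ work, so that they are subsumed by the $O(\log^3(n)/\eps^7)$ depth of the base algorithm. Once this implementation check is done, the stated work and depth bounds follow by direct substitution into Lemma~\ref{lem:work-depth-complexity}.
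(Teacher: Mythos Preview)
Your approach is the same as the paper's: apply the Gupta--Peng transformation (the paper states this as \cref{thm:parallel}) to the base parallel algorithm of \cref{lem:work-depth-complexity} with $f(\eps)=\eps^{-O(1/\eps)}$; the paper's proof is literally that one line.

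Two bookkeeping slips in your description are worth fixing. First, the Gupta--Peng scheme does not run only $\poly(1/\eps)$ sub-instances: it keeps $O(1/\eps)$ copies, each split into $O(\log_{(1/\eps)} W)$ levels, so the number of sub-instances is $O(\log_{(1/\eps)}(W)/\eps)$, which under $W=\poly(n)$ is $\Theta(\log n/(\eps\log(1/\eps)))$. That $\log n$ factor is exactly the $\log n$ appearing in the final work bound; your ``multiply work by $\poly(1/\eps)$'' accounting would drop it. Second, the sub-instances are not fully non-adaptive in the way you claim: the per-level matchings $\hat{M}^c_\ell$ can indeed be computed independently, but the greedy combination that builds $\hat{M}^c$ from them processes levels from highest to lowest and is sequential across the $O(\log_{(1/\eps)} W)$ levels. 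This is why \cref{thm:parallel} multiplies the depth by $\log_{(1/\eps)}(W)$, and it is the source of the third $\log n$ in the stated depth $O(\log^3(n)/\eps^7)$. If your ``take the max across parallel invocations'' argument went through as written, you would actually get depth $O(\log^2(n)/\eps^c)$, which is inconsistent with the $\log^3 n$ you then claim.
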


\ifcameraready
\else
\begin{proof}
We apply~\cref{thm:parallel} to~\cref{lem:work-depth-complexity} with $f(\eps) = \eps^{-O(1/\eps)}$. 
\end{proof}
\fi

\subsection{MPC Implementation}

We implement our auction algorithm in the MPC model below. 

\begin{lemma}\label{lem:mpc-weighted}
Given a weighted graph $G = (V, E)$ as input where all weights are at most $\poly(n)$,
there exists a MPC algorithm using $O\left(\frac{\log^3\left(W\right) \cdot \log\log n}{\eps^4}\right)$ 
rounds, $O(n)$ space per machine, 
and $O\left(n\log(1/\eps) + m\right)$ total space that computes a $(1-\eps)$-approximate maximum 
weight bipartite matching for any $\eps > 0$. 
\end{lemma}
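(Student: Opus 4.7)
The plan is to implement \cref{alg:auction} directly in the MPC model, reusing the structural analysis from \cref{thm:main} and replacing only the distributed maximal matching subroutine with a state-of-the-art MPC maximal matching algorithm. Recall that the auction algorithm consists of $O(\log^2(W)/\eps^4)$ outer phases (by \cref{lem:weighted-eps-good}), and in each phase we invoke a maximal matching subroutine $O(\log W)$ times, once per weight bucket, to process the buckets from highest to lowest. Hence the total round cost reduces to bounding the per-phase cost plus the maximal matching cost.

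First, I would argue that all per-phase bookkeeping outside the maximal matching call runs in $O(1)$ MPC rounds and fits within the stated space. Computing $U_i$ for each unmatched bidder $i$ is a neighborhood aggregation over the incident edges of $i$, which can be implemented in $O(1)$ MPC rounds via standard sort/shuffle primitives on the edge list; the same primitives implement the on-the-fly filter that realizes $D_i$ and the subgraph $G_d$ (no need to materialize $G_d$ explicitly), and the price update on \cref{mwm:increase-price} is a trivial per-item assignment broadcast once $M_d$ is known. Storing prices to accuracy $\eps$ requires only $O(\log(1/\eps))$ bits per item and thus $O(n\log(1/\eps))$ total words, and since $|L|\le n$, assigning bidders to machines of $O(n)$ local space together with their incident edges is supported by the $O(n+m)$ total space budget.

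Second, for the maximal matching step I would invoke an off-the-shelf MPC maximal matching algorithm that runs in $O(\log\log n)$ rounds using $O(n)$ space per machine and $\tilde O(n+m)$ total space (e.g., the Behnezhad–Hajiaghayi–Harvey/Ghaffari–Uitto style results). Each of the $O(\log W)$ bucket-level matching calls in a phase costs $O(\log\log n)$ rounds. Multiplying the three factors yields
\begin{equation*}
O\!\left(\tfrac{\log^2(W)}{\eps^4}\right)\cdot O(\log W)\cdot O(\log\log n)\;=\;O\!\left(\tfrac{\log^3(W)\log\log n}{\eps^4}\right)
\end{equation*}
MPC rounds, matching the claimed bound. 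Correctness and the approximation ratio are inherited verbatim from \cref{thm:main}, since neither the demand-set construction nor the maximal matching guarantee depends on the model of computation.

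The main obstacle will be the space accounting rather than the round count: verifying that forming $G_d$ and bucketing its edges on the fly can be done without ever exceeding $O(n)$ space on any single machine, given that a bidder's demand set $D_i$ could in principle be large, and that the chosen MPC maximal matching subroutine indeed operates within $O(n)$ memory per machine and $O(n+m)$ global memory in the strongly sublinear regime. Once this is verified via standard edge-partitioned representations and by observing that both $D_i$ and the per-bucket subgraph are implicitly represented through edge streams rather than stored explicitly, the stated bounds follow.
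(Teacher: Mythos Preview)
Your proposal is correct and follows essentially the same approach as the paper's proof: implement each phase of \cref{alg:auction} with $O(1)$-round MPC primitives (sorting/prefix sums) for computing $U_i$, $D_i$, and the subgraph $G_d$, and plug in the $O(\log\log n)$-round, $O(n)$-memory-per-machine maximal matching algorithm of~\cite{BHH19} for each of the $O(\log W)$ bucket calls per phase. Your discussion is more detailed than the paper's two-sentence proof, but the structure and the key ingredients are identical.
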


\ifcameraready
\else
\begin{proof}
    As in~\cite{ALT21}, we can use standard MPC techniques (sorting and prefix sum computation) to compute 
    $U_i$, $D_i$, and create the resulting subgraph
    in $O(1)$ rounds and $O(n)$ memory per machine in each phase of~\cref{alg:auction}.
    The MPC algorithm for computing maximal matchings require $O(\log \log n)$ rounds and $O(n)$ memory per machine~\cite{BHH19}.
\end{proof}
\fi

As before, we can improve the complexity of our MPC algorithm using the transformations 
\ifcameraready
the full version of our paper.
\else
in~\cref{app:guptapeng}.
\fi

\begin{theorem}\label{lem:weighted-mpc}
    Given a weighted graph $G = (V, E)$ as input where all weights are at most $\poly(n)$,
    there exists a MPC algorithm using $O\left(\frac{\log \log n}{\eps^7}\right)$ rounds, $O(n \cdot \log_{(1/\eps)}(n))$ space per machine, 
    and $O\left(\frac{(n+m) \log(1/\eps) \log n}{\eps}\right)$ total space that computes a $(1-\eps)$-approximate maximum 
    weight bipartite matching for any $\eps > 0$. 
\end{theorem}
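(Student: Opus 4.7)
The plan is to combine the base MPC implementation from \cref{lem:mpc-weighted} with the weighted-to-weighted reduction of Gupta--Peng~\cite{GP13}, in the same way that \cref{thm:streaming-reduce-passes} was deduced from \cref{lem:mwm-streaming} and \cref{cor:shared-memory-parallel-reduced} from \cref{lem:work-depth-complexity}. Concretely, I would invoke the MPC analogue of the transformation theorem stated in \cref{app:guptapeng} with parameter function $f(\eps) = \eps^{-O(1/\eps)}$, and then simply check that the resulting round, per-machine space, and total space figures match the claim.

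First, I would recall that the Gupta--Peng reduction takes any $(1-\eps)$-approximate algorithm for MWM whose complexity depends polylogarithmically on the weight ratio $W$ and produces a new algorithm whose complexity depends only on $\log(1/\eps)$ in place of $\log W$, at the cost of running $O(\log n / \eps)$ parallel copies on geometrically rescaled weight classes and paying an $f(\eps)$ overhead for the bucketed combination. In the MPC model these $O(\log n / \eps)$ copies can be executed \emph{in parallel} across the machines rather than sequentially, so the round complexity of a single copy is the round complexity of the composite algorithm; only the total space and per-machine space get multiplied to accommodate the copies and the combination machinery.

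Next, I would instantiate this template. Plugging \cref{lem:mpc-weighted} into the transformation, the effective weight ratio inside each copy becomes $\poly(1/\eps)$, so $\log W$ gets replaced by $\log(1/\eps)$ in the round count, yielding $O\!\left(\tfrac{\log^3(1/\eps)\,\log\log n}{\eps^4}\right)$ rounds; absorbing the $\log^3(1/\eps)$ factor into $\eps^{-3}$ gives the stated $O\!\left(\tfrac{\log\log n}{\eps^7}\right)$ bound. For space: each copy uses $O(n\log(1/\eps) + m)$ total space, and running $O(\log n/\eps)$ copies in parallel multiplies this by $O(\log n/\eps)$, producing the claimed $O\!\left(\tfrac{(n+m)\log(1/\eps)\log n}{\eps}\right)$ total space. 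The per-machine bound $O(n\cdot \log_{(1/\eps)} n)$ follows because a single machine needs to hold the state corresponding to one vertex across all $O(\log n/\eps) = O(\log_{(1/\eps)} n \cdot \log(1/\eps)/\eps)$ copies, up to constants.

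The only non-routine point, and the one I would write out carefully, is that the bucketing/combination step of the Gupta--Peng transformation is itself implementable in $O(1)$ MPC rounds with the stated space: since the reduction only needs to select, for each vertex, the best of $O(\log n /\eps)$ stored matching fragments and glue them via one additional maximal-matching pass on the bucket boundaries, this fits into the same sorting/prefix-sum primitives used in the proof of \cref{lem:mpc-weighted}. Once this is verified, the round, per-machine, and total space bounds add up to the statement of the theorem, completing the proof.
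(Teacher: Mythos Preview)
Your proposal is correct and follows exactly the paper's approach: the paper's proof is the single line ``apply \cref{thm:mpc} to \cref{lem:mpc-weighted} with $f(\eps) = \eps^{-O(1/\eps)}$,'' and you spell out precisely this application together with the bound-checking. One small slip: you write that the effective weight ratio in each copy becomes $\poly(1/\eps)$, but with $f(\eps)=\eps^{-O(1/\eps)}$ it is actually $\eps^{-O(1/\eps)}$, so $\log W = O((1/\eps)\log(1/\eps))$; this does not change the final $O(\log\log n/\eps^7)$ bound once polylog$(1/\eps)$ factors are absorbed.
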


\ifcameraready
\else
\begin{proof}
    We apply~\cref{thm:mpc} to~\cref{lem:mpc-weighted} with $f(\eps) = \eps^{-O(1/\eps)}$.
\end{proof}
\fi

\section{A $(1-\eps)$-approximation Auction Algorithm for $b$-Matching}\label{sec:b-matching}

We show in this section that we also obtain an auction-based algorithm for \mcbm by 
extending the auction-based algorithm of~\cite{ALT21}. This algorithm also leads to better streaming
algorithms for this problem.
We use the techniques introduced in the auction-based 
\mcm algorithm of Assadi, Liu, and Tarjan~\cite{ALT21} 
\ifcameraready
\else
(discussed in~\cref{sec:mcm}) 
\fi
as well as new techniques developed in this section to obtain a $(1-\eps)$-approximation
algorithm for bipartite maximum cardinality $b$-matching. 
The maximum cardinality $b$-matching problem is defined in~\cref{def:b-matching}.

\begin{definition}[Maximum Cardinality Bipartite $b$-Matching (\mcbm)]\label{def:b-matching}
    Given an undirected, unweighted, bipartite graph $G = (L \cup R, E)$ 
    and a set of values $\{b_v \leq |R| \mid v \in L \cup R\}$, a 
    \defn{maximum cardinality $b$-matching} (\mcbm) finds a matching of maximum cardinality 
    between vertices in $L$ and $R$ where each vertex $v \in L \cup R$ is 
    matched to at most $b_v$ other vertices.
\end{definition}

The key difference between our algorithm for $b$-matching and the \mcm algorithm of~\cite{ALT21}
is that we have to account for when more than one item is assigned
to each bidder in $L$; in fact, up to $b_i$ items in $R$ can be assigned to any bidder $i \in L$. 
This one to many relationship calls for a different algorithm and analysis. The crux of our 
algorithm in this section is to create $b_i$ copies of each bidder $i$ and $b_j$ copies of each item $j$. 
Then, copies of items maintain their own prices and copies of bidders can each choose at most one item. We define some notation to
describe these copies. Let $\copies_i$ be the set of copies of bidder $i$ and $\copies_j$ be the set of copies of item $j$.
Then, we denote each copy of $i$ by $i^{(k)} \in C_i$ for $k \in [b_i]$ and each copy of $j$ by $j^{(k)} \in C_j$ for $k \in [b_j]$.
As before, we denote a bidder and their currently matched item by $\left(i^{(k)}, a_{i^{(k)}}\right)$. 

In \mcbm, we require that the set of all items 
chosen by different copies of the same bidder to include at most one copy of each item. 
In other words, we require if $j^{(k)} \in \bigcup_{i' \in C_i} a_{i'}$, then no other $j^{(l)} \in \bigcup_{i' \in C_i} a_{i'}$
for any $j^{(k)}, j^{(l)} \in C_j$ and $k \neq l$.
This \emph{almost} 
reduces to the problem of finding a maximum cardinality matching in a $\sum_{i \in L} b_i + \sum_{j \in R} b_j$ sized bipartite 
graph but \emph{not quite}. Specifically, the main challenge we must handle is when multiple copies of the 
same bidder want to be matched to copies of the \emph{same} item. In this case, we cannot match any of these bidder copies to
copies of the same item and thus must somehow handle the case when there exist items of lower price but we cannot match them.

In addition to handling the above hard case, 
as before, the crux of our proof relies on a variant of the $\eps$-happy definition 
and the definitions of appropriate potential functions.

Recall from the MCM algorithm
of~\cite{ALT21} that an $\eps$-happy bidder has utility that is at least the utility gained from matching to any other item (up to
an additive $\eps$). Such a definition is insufficient in our setting since it may be the case that matching to a copy 
of an item that is already matched to a different copy of the same bidder results in lower cost. However, such a match is not helpful
since any number of matches between copies of the same bidder and copies of the same item contributes a value of one to the cardinality of the 
eventual matching.

Our algorithm solves all of the above challenges and provides a $(1-\eps)$-approximate \mcbm in asymptotically the same number of 
rounds as the MCM algorithm of~\cite{ALT21}.
We describe our auction based algorithm for \mcbm next and the precise pseudocode is given in~\cref{alg:b-matching}. Our algorithm
uses the parameters defined in~\cref{table:mcbm-params}. We show the following results using our algorithm. We discuss semi-streaming implementations of our algorithm in~\cref{sec:b-matching-semi-streaming}. Let $L$ be the half with fewer numbers of nodes.

\mcbmthm*

\begin{algorithm}[htb]
    \caption{\label{alg:b-matching} Auction Algorithm for Bipartite $b$-Matching}
    \hspace*{\algorithmicindent} \textbf{Input:} Graph $G = (L \cup R, E)$ and parameter
    $0 < \eps < 1$. \\
    \hspace*{\algorithmicindent} \textbf{Output:} An $(1 - \eps)$-approximate 
    maximum cardinality bipartite $b$-matching.
    \begin{algorithmic}[1]
        \State Create $L', R', E'$ and graph $G'$. (Defined in~\cref{table:mcbm-params}.)\label{bm:copies}
        \State For each $i' \in L'$, set $(i', \bot)$ where $a_{i'} = \bot$, $c_{i'} \leftarrow 0$.\label{bm:assign-item}
        \State For each $j' \in R'$, set $p_{j'} = 0$.\label{bm:set-prices}
        \State Set $M_{\max} \leftarrow \emptyset$.\label{bm:max}
        \For{$d = 1, \dots, \bmphases$}\label{bm:phases}
            \State For each unmatched bidder $i' \in L'$,
            find $D_{i'} \leftarrow $ FindDemandSet$(G', i', c_{i'})$ [\cref{alg:find-demand-set}].\label{bm:demand}
            \State Create $G'_d$.\label{bm:subgraph}
            \State Find any arbitrary \nonduplicate maximal matching $\mmd$ of $G_d$.\label{bm:matching}
            \For{$(i', j') \in \mmd$}\label{bm:rematch-1}
                \State Set $a_{i'} = j'$ and $a_{i_{prev}} = \bot$ 
                for the previous owner $i_{prev}$ of $j'$.\label{bm:rematch-2}
                \State Increase $p_{j'} \leftarrow p_{j'} + \eps$.\label{bm:increase-price}
            \EndFor
            \State For each $i' \in L'$ with $D_{i'} \not= \emptyset$ and $a_{i'} = \bot$, increase $c_{i'} \leftarrow c_{i'} + \eps$.\label{bm:increase-price-cutoff}
            \State Using $M'_d$ compute $M_d$ where for each $(i', j') \in M'_d$, add
            $(i, j)$ to $M_d$ if $(i, j) \not\in M_d$.\label{bm:original}
            \State If $|M_d| > |M_{\max}|$, $M_{\max} \leftarrow M_d$.\label{bm:replace-max}
        \EndFor
    \State Return $M_{\max}$. \label{bm:return}
    \end{algorithmic}
\end{algorithm}

\begin{algorithm}
    \caption{\label{alg:find-demand-set} FindDemandSet$\left(G' = (L' \cup R', E'), i', c_{i'}\right)$.}
    \begin{algorithmic}[1]
        \State Let $N'(i') = \{j' \in R' \mid j^{(l)} \not= a_{i^{(k)}} \forall i^{(k)} \in C_{i}, \forall j^{(l)} \in C_j \wedge p_{j'} \geq c_{i'} \forall j' \in C_j\}$.\label{ds:n}
        \State $D_{i'} \leftarrow \arg\min_{j' \in 
            N'(i'), p_{j'} < 1} \left(p_{j'}\right)$.\label{ds:d}
        \State Return $D_{i'}$.
    \end{algorithmic}
\end{algorithm}

\subsection{Algorithm Description}
The algorithm works as follows. 
We assign to each bidder, $i$, $b_i$ unmatched slots and the goal is to fill all slots (or as many as possible). 
For each bidder $i \in L$ and each item $j \in R$, we create $b_i$ and $b_j$ copies, respectively, and assign these copies to 
new sets $L'$ and $R'$, respectively (\cref{bm:copies}). This step of the algorithm changes slightly in our streaming implementation.
For each bidder and item with an edge between them $(i, j) \in E$, we create a 
biclique between $C_i$ and $C_j$; the edges of all created bicliques is the set of edges $E'$. The graph $G' = (L' \cup R', E')$
is created as the graph consisting of nodes in $L' \cup R'$ and edges in $E'$.
As before, we initialize each bidder's
assigned item to $\bot$ (\cref{bm:assign-item}).
Then, we set the price for each copy in $R'$ to $0$ (\cref{bm:set-prices}). 

In our \mcbm algorithm, we additionally set a price cutoff for each bidder $c_{i'}$ initialized to $0$ (\cref{bm:assign-item}). Such a 
cutoff helps us to prevent bidding on lower price items previously not bid on because they were matched to another copy of the same bidder.
More details on how the cutoff prevents bidders from bidding against themselves can be found in the proof of~\cref{lem:unique-assign}.
We maintain the maximum cardinality matching we have seen in $M_{\max}$ (\cref{bm:max}).
We perform $\bmphases$ rounds of assigning items to bidders (\cref{bm:phases}). For each round, we first find the demand set
for each unmatched bidder $i' \in L'$ using~\cref{alg:find-demand-set}
(\cref{bm:demand}). The demand set is defined with respect to the cutoff price $c_{i'}$
and the set of items assigned to other copies of bidder $i$. The
demand set considers all items $j' \in R'$ that are neighbors of $i'$
where no copy of $j$, $j^{(k)} \in C_j$, is assigned to any copies of $i$
and $p_{j'} \geq c_{i'}$
(\cref{alg:find-demand-set}, \cref{ds:n}). From this set of neighbors,
the returned demand set is the set of item copies with the minimum price in $N'(i')$ (\cref{ds:d}).

Using the induced subgraph of 
$\left(\bigcup_{i'\in L'} D_{i'}\right) \cup L'$ (\cref{bm:subgraph}), we greedily find a maximal matching while avoiding
assigning copies of the same item to copies of the same bidder (\cref{bm:matching}). We call such a maximal matching 
that does not assign more than one copy of the same item to copies of the same bidder to be a
\defn{\nonduplicate} maximal matching. This greedy matching prioritizes
the unmatched items by first matching the unmatched items and then matching the matched items. We can
perform a greedy matching by matching an edge if the item is unmatched and 
no copies of the bidder it will match to is matched to another copy of the item.
For each newly matched item (\cref{bm:rematch-1}), we rematch the item to the 
newly matched bidder (\cref{bm:rematch-2}). We increase the price
of the newly matched item (\cref{bm:increase-price}). 
For each remaining unmatched bidder, we increase the cutoff price by $\eps$ (\cref{bm:increase-price-cutoff}).

We compute the corresponding matching in the original graph using $M'_d$
(\cref{bm:original}) by including one edge $(i, j)$ in the matching 
if and only if there exists at least one bidder copy $i' \in C_i$ 
matched to at least one copy of the item $j' \in C_{j}$.
Finally, we return the maximum cardinality $M_{\max}$
matching from all iterations as our
$(1-\eps)$-approximate maximum cardinality $b$-matching 
(\cref{bm:return}).

\subsection{Analysis}

In this section, we analyze the approximation error of our algorithm and prove that it provides a $(1-\eps)$-approximate maximum cardinality $b$-matching.

\paragraph{Approach}
We first provide an intuitive explanation of the approach we take to perform our analysis and then we give our precise analysis. Here, we describe both the 
challenges in performing the analysis and explain our choice of certain methods in the algorithm to facilitate our analysis. We especially highlight the parts
of our algorithm and analysis that differ from the original \mcm algorithm of~\cite{ALT21}. First, in order to show the approximation factor of our algorithm, 
we require that the utility obtained by a large number of matched bidders from our algorithm is greater than the corresponding utility from switching to the optimum 
items in the optimum matching. For $b$-matching, any combination of matched items and bidder copies satisfy this criteria. Furthermore, matching multiple
item copies of the same item to bidder copies of the same bidder does not increase the utility of the bidder. Thus, we look at matchings where
at most one copy of each bidder is matched to at most one copy of each item. Recall our definition of $\eps$-happy given in~\cref{def:happy}
and we let $\happy$ be the set of bidders satisfying that definition.

For $b$-matching, each bidder $i$ is matched to a set of at most $b_i$ items. Let $(i, O_i) \in \opt$ denote the set of items 
$O_i \subseteq R$ matched to bidder $i$ in $\opt$.
Recall 
\ifcameraready
\cite{ALT21}
\else
from~\cref{sec:mcm} 
\fi
that the proof requires $u_i \geq 1 - p_{o_i} - \eps$ for 
every bidder $i \in \happy \cap \opt$ to show that $\sum_{i \in L} u_i \geq \sum_{i \in \happy \cap \opt} 1 - p_{o_i} - \eps$. 
Using our bidder copies, $C_i$,
the crux of our analysis proof is to show that for every $(i, O_i) \in \opt$, we can \emph{assign} the 
items in $O_i$ to the set of happy bidder copies in $C_i$ such that each happy bidder copy receives a unique item, denoted
by $r_{i'}$, and $c_{i'} \leq p_{\min, r_{i'}}$ where $p_{\min, r_{i'}}$ is the price of the minimum priced
copy of $r_{i'}$. Using this assignment, we are able to show once again that $\sum_{i' \in L'} u_{i'} \geq \sum_{i' \in \happy \cap \opt}
1 - p_{\min, r_{i'}} - \eps$. This requires a precise definition of $\happy \cap \opt$. Let $S_i \subseteq C_i$ be the set of all happy bidders in $C_i$.
Recall that the optimum solution gives a matching between a bidder $i \in L$ and potentially multiple items in $R$; we 
turn this matching into an optimum matching in $G'$. 
If $|S_i| \leq |O_i|$, then all 
happy copies in $S_i$ are in $\opt$; otherwise, we pick an arbitrary set of $|O_i|$ happy bidder copies in $S_i$ to be in
$\opt$. Then, the summation is determined based on this set of happy bidder copies in $\happy \cap \opt$.

Once we have shown this, the only other remaining part of the proof is to show that in the $\ceil{\frac{2}{\eps^2}}$ rounds that
we run the algorithm the potential increases by $\eps$ for every unhappy bidder in $\opt$ for each round that the bidder is unhappy.
As in the case for \mcm, the price of an item increases whenever it becomes re-matched. Hence, $\Pi_{items}$ increases by $\eps$
each time a bidder who was happy becomes unhappy. To ensure that $\Pi_{bidders}$ increases by $\eps$ for each bidder who was unhappy
and remains unhappy, we set a \defn{cutoff} price that increases by $\eps$ for each round where a bidder remains unhappy. Thus, this
cutoff guarantees that $\Pi_{bidders}$ increases by $\eps$ each time. 

\paragraph{Detailed Analysis} Now we show our detailed analysis that formalizes our approach described above.
We first show that our algorithm maintains both~\cref{inv:nonzero-matched-1} and~\cref{inv:max-utility-1}. 
We also show our algorithm obeys the following invariant.

\begin{invariant}\label{inv:one-item-match}
    The set of matched items of all copies of any bidder $i \in L$ contains at most one copy of each item. In other words, $\left| \bigcup_{i' \in C_i} a_{i'} \cap C_j \right| \leq 1$ for
    all $j \in R$.
\end{invariant}

We restate two invariants used in~\cite{ALT21} below. We prove that our~\cref{alg:b-matching} also maintains these two invariants. 

\begin{invariant}[Non-Zero Price Matched~\cite{ALT21}]\label{inv:nonzero-matched-1}
    Any item $j$ with positive price $p_j > 0$ is matched.
\end{invariant}

\begin{invariant}[Maximum Utility~\cite{ALT21}]\label{inv:max-utility-1}
 The total utility of all bidders is at most the cardinality of the matching minus the total price of the items.
\end{invariant}

\begin{lemma}\label{lem:b-inv}
    \cref{alg:b-matching} maintains~\cref{inv:one-item-match},~\cref{inv:nonzero-matched-1}, and~\cref{inv:max-utility-1}. 
\end{lemma}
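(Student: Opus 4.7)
The plan is to prove each of the three invariants by induction on the round counter $d$ of~\cref{alg:b-matching}, using the initialization (\cref{bm:assign-item,bm:set-prices}) as the base case (at that point every $a_{i'} = \bot$ and every $p_{j'} = 0$, so all three invariants hold vacuously). I would handle them in the order 3.3, 3.4, 3.5, because the argument for 3.5 relies on 3.4, and the clean picture in 3.4 is easier to see after 3.3 is established.

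For~\cref{inv:one-item-match}, I would fix an arbitrary pair $(i,j)\in L\times R$ and show the invariant $|\bigcup_{i'\in C_i} a_{i'}\cap C_j|\le 1$ is preserved in one round. Two mechanisms in the algorithm are responsible. First, the definition of $N'(i')$ in~\cref{ds:n} of~\cref{alg:find-demand-set} explicitly excludes every copy $j^{(l)}\in C_j$ from the demand set of any currently unmatched $i^{(k)}\in C_i$ whenever some other $i^{(k')}\in C_i$ already owns some copy of $j$; so across rounds we can never raise the count in $C_i\times C_j$ from $1$ to $2$. Second, within one round the matching $\mmd$ computed on~\cref{bm:matching} is explicitly \emph{non-duplicate}, so $\mmd$ itself never introduces two edges of $C_i\times C_j$ simultaneously. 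Combined with the fact that~\cref{bm:rematch-2} sets the previous owner's assignment to $\bot$ when an item changes hands (so the count in $C_i\times C_j$ only stays the same or decreases when an item moves outside $C_i$), the invariant is preserved.

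For~\cref{inv:nonzero-matched-1}, I would simply note that prices are modified only on~\cref{bm:increase-price}, and only for items $j'$ that are matched to a bidder $i'$ by the immediately preceding~\cref{bm:rematch-2}. Moreover, when a previously matched $j'$ changes owner from $i_{prev}$ to $i'$, the item $j'$ is instantaneously re-matched to $i'$; it never becomes unassigned. Hence once $p_{j'}$ becomes positive, $j'$ stays matched forever, and any item with $p_{j'}=0$ is trivially covered by the invariant.

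For~\cref{inv:max-utility-1}, I would use that in the unweighted \mcbm setting $v_{i'}(j')=1$ for every $(i',j')\in E'$, so matched bidders satisfy $u_{i'}=1-p_{a_{i'}}$ while unmatched bidders have $u_{i'}=0$. Summing over $L'$ yields
\[
\sum_{i'\in L'} u_{i'} \;=\; |M'_d| \;-\; \sum_{i'\in L':\,a_{i'}\neq\bot} p_{a_{i'}},
\]
and applying~\cref{inv:nonzero-matched-1} (just established) to extend the right-hand sum to all of $R'$ at no cost gives
\[
\sum_{i'\in L'} u_{i'} \;=\; |M'_d| \;-\; \sum_{j'\in R'} p_{j'},
\]
which is the claimed inequality. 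The main obstacle is the round-by-round bookkeeping for~\cref{inv:one-item-match}: one has to be sure that the interaction between (a) the demand-set filter on previously matched copies and (b) the non-duplicate requirement on $\mmd$ actually rules out all ways of creating two matched edges in $C_i\times C_j$, including the subtle case in which an item leaves $C_i$ while a different copy of $j$ is simultaneously acquired by another copy of $i$; this is why I would phrase the inductive step by fixing the pair $(i,j)$ and tracking exactly when edges can enter or leave $C_i\times C_j$ during one iteration of the outer loop.
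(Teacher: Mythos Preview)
Your proposal is correct and follows essentially the same approach as the paper: the paper proves~\cref{inv:one-item-match} by the same two-case split (same round handled by the non-duplicate requirement on $\mmd$, different rounds handled by the filter in $N'(i')$), and proves~\cref{inv:nonzero-matched-1} and~\cref{inv:max-utility-1} by the same direct arguments you give. Your inductive framing and explicit use of~\cref{inv:nonzero-matched-1} in the utility computation are slightly more careful than the paper's terse version, but the substance is identical.
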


\ifcameraready
\else
\begin{proof}
    An item increases in price only when it is matched to a bidder by~\cref{bm:increase-price}. A matched item never becomes unmatched in our algorithm. Thus,
    \cref{inv:nonzero-matched} is maintained.
    By definition of utility, the utility obtained from the matching produced by~\cref{alg:b-matching} is $\sum_{i' \in L'} u_{i'} = 
    \sum_{i' \in L'} 1 - p_{a_{i'}} \leq |M| - \sum_{i' \in L'} p_{a_{i'}}$. Hence,~\cref{inv:max-utility} is also satisfied by our algorithm.

    Suppose for contradiction that~\cref{inv:one-item-match} is violated at some point in our algorithm. Then, suppose $i^{(k)}, i^{(l)} \in C_i$ are two copies of
    bidder $i$ that are matched to two copies of the same item. Either they matched to two copies of the same item in the same round
    or they matched to the items in different rounds. In the first case,~\cref{bm:matching} ensures no two copies of the same bidder
    are matched to copies of the same item in the same round.
    In the second case, suppose without loss of generality that $i^{(l)}$ was matched after $i^{(k)}$. Then, this means that $D_{i^{(l)}}$ contains a
    copy of of the same item that is matched to $i^{(k)}$. This contradictions how $D_{i^{(l)}}$ was constructed in~\cref{ds:n}. Thus, \cref{inv:one-item-match} 
    follows. 
\end{proof}
\fi

We follow the style of analysis outlined 
\ifcameraready
in~\cite{ALT21}
\else
in~\cref{sec:mcm} 
\fi
by defining appropriate definitions of $\eps$-happy and appropriate
potential functions $\Pi_{items}$ and $\Pi_{bidders}$.
In the case of $b$-matching, we modify the definition of $\eps$-happy
in this setting to be the following.

\begin{definition}[$(\eps, c)$-Happy]\label{def:b-eps-happy}
    A bidder $i' \in L'$ is $(\eps, c)$-happy (at the end of a round)
    if $u_{i'} \geq 1 - p_{j'} - \eps$ for 
    all neighbors in the set $N'(i')$ where $N'(i')$ is as defined in~\cref{ds:n} of~\cref{alg:find-demand-set} 
    (i.e.\ contains all neighboring
    items $j'$ where $p_{j'} \geq c_{i'}$ and no copy of the neighbor is matched to another copy of $i'$).
\end{definition}

At the end of each round, it is easy to show that all matched $i'$ and $i'$
whose demand sets $D_{i'}$ are empty are $(\eps, c_{i'})$-happy.

\begin{lemma}\label{lem:b-eps-happy}
    At the end of any round, if bidder $i'$ is matched or if their demand
    set is empty, $D_{i'} = \emptyset$, then $i'$ is $(\eps, 
    c_{i'})$-happy.
\end{lemma}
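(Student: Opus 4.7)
The plan is to split into two cases based on the hypothesis.

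For the easy case, $D_{i'} = \emptyset$, the definition in \cref{ds:d} of FindDemandSet forces either $N'(i') = \emptyset$ or $p_{j'} \geq 1$ for every $j' \in N'(i')$. In either case, happiness follows immediately since $u_{i'} \geq 0 \geq 1 - p_{j'} - \eps$ whenever $p_{j'} \geq 1 - \eps$.

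For the matched case, I would let $r \leq d$ be the most recent round in which $a_{i'}$ was set to its current value, and write $q$ for the price of $a_{i'}$ just before the increase in \cref{bm:increase-price}. In round $r$, $i'$ must have been unmatched at the start (only such bidders are rematched via \cref{bm:rematch-2}), and $a_{i'}$ was selected from $D_{i'}^r$, so $q$ is the minimum price in $N'(i')^r$ among items with price less than $1$; after the bid, $p_{a_{i'}} = q + \eps$. Since $i'$ remains matched to $a_{i'}$ in every round from $r$ through $d$, both $p_{a_{i'}}$ and $c_{i'}$ are frozen (the cutoff update in \cref{bm:increase-price-cutoff} touches only unmatched bidders). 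It therefore suffices to show $p_{j''} \geq q$ at the end of round $d$ for every $j'' \in N'(i')$ at that time, since then $u_{i'} = 1 - q - \eps \geq 1 - p_{j''} - \eps$. When $j''$ was already in $N'(i')^r$, the minimum-price property at round $r$ gives $p_{j''}^r \geq q$, and monotonicity of prices finishes the argument.

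The main obstacle is the subcase where $j'' \notin N'(i')^r$ but $j'' \in N'(i')$ at the end of round $d$. Such a $j''$ was excluded at round $r$ either because a copy of its underlying item was matched to a copy of $i$ at that time, or because its price failed the cutoff condition in \cref{ds:n}. In either situation $j''$ must accumulate further $\eps$-increments before it can legitimately re-enter $N'(i')$: each release of a copy from ``matched to a copy of $i$'' requires a subsequent bid by another bidder, which raises that copy's price by $\eps$, and each climb of a price past the frozen cutoff $c_{i'}$ likewise comes from a bid. The plan is to track these increments and couple them with the minimum-price rule that set $q$, using the specific reason $j''$ was rejected at round $r$, to conclude $p_{j''} \geq q$ at the end of round $d$. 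Formalizing this coupling cleanly when several copies of $i$ and several competing bidders interleave their bids between rounds $r$ and $d$ is the delicate part I expect.
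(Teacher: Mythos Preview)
Your case split and the $D_{i'}=\emptyset$ argument match the paper's proof. For the matched case, the paper argues only at the moment of matching (your round $r$): it notes that $a_{i'}$ was drawn from $D_{i'}$, that the demand set is the set of minimum-price items in $N'(i')$, declares this $N'(i')$ to be ``precisely the set of neighbors we are comparing against,'' and concludes. The paper does not distinguish later rounds $r<d$ or address items that subsequently enter $N'(i')$; in that sense your proposal is strictly more careful, and the obstacle you isolate is real but simply glossed over in the published argument.

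Your sketch for resolving that obstacle, however, does not close as written. When $j''$ is excluded at round $r$ because a sibling copy $j^{(l)}$ is matched to some $i^{(k)}\in C_i$, the eventual rematch of $j^{(l)}$ by a non-$i$ bidder raises $p_{j^{(l)}}$, not $p_{j''}$; so the increment-tracking you describe does not directly give $p_{j''}^{\,d}\ge q$ for the copy you actually need. To push it through you would need a relation among sibling prices---the paper later proves $p_{j_{\max}}-p_{j_{\min}}\le\eps$ in \cref{lem:price-difference}---together with a comparison of $p_{j^{(l)}}^{\,r}$ to $q$, which in turn requires reasoning about $i^{(k)}$'s own demand-set choice and cutoff, not just $i'$'s. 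That is more machinery than the paper deploys here, but something of that kind is what your more rigorous route would require.
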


\ifcameraready
\else
\begin{proof} %
First, consider the case when the demand set $D_{i'}$ is empty. 
Let $c_{i'}$ be the cutoff price at the \emph{end} of the round. This means that $i'$ remains unmatched at the 
end of the round and $c_{i'}$ \emph{does not} increase from the beginning of the round since $D_{i'}$ is empty.
In this case, it means that all neighboring items with
price $\geq c_{i'} - \eps$ and which were not matched to another copy of $i$ 
at the beginning of the round had price $1$. Then, the utility that 
can be gained from any of these items is $0$ and our bidder $i'$, who has utility $u_{i'} = 0$, 
is $(\eps, c_{i'})$-happy.

Suppose that instead $i'$ is matched. Then, $i'$ must have matched to an item from its demand 
set. Recall that the demand set consists of the lower priced items from the set of $i'$s neighbors
with price at least $c_{i'}$ and which were not matched to any copy of $i$. This is precisely the set of neighbors
we are comparing against. Since we matched against one of the lowest priced items in this set and the price of the
item increases by $\eps$ after being matched, the utility is lower bounded by $1 - p_{j'} - \eps$ for all
$j' \in N'(i')$.
\end{proof}
\fi

In addition to the new definition of happy, we require another crucial observation before we prove our
approximation guarantee. Specifically, we show that for any set of bidder 
copies $C_i$ and any set of $|C_i|$ items $I \subseteq R$,
\cref{lem:b-eps-happy} is sufficient to imply there exists at least one assignment of items in $I$
to happy bidders in $S_i$ such that each item is assigned to at most one bidder and each happy bidder is assigned at 
least one item where the minimum price of the item is at least the cutoff price of the bidder.

\begin{lemma}\label{lem:unique-assign}
    For a set of bidder copies $C_i$ and
    any set $I \subseteq R$ of $|C_i|$ items where $(i, j) \in E$ for all items $j \in I$, 
    there exists at least one assignment of 
    items in $I$ to bidders in $C_i$, where we denote the item assigned to copy $i'$ by $r_{i'}$, that satisfy the following
    conditions:
    
    \begin{enumerate}
        \item The assignment is a one-to-one mapping between bidders in $C_i$ and items in $I$.
        \item Any item $j$ matched to $i'$ is assigned to $i'$.
        \item Let $r^{*}_{i'}$ be the lowest cost copy of item $r_{i'}$, $r^{*}_{i'} = \arg\min_{j' \in C_{r_{i'}}} \left(p_{j'}\right)$; then
        $p_{r^{*}_{i'}} \geq c_{i'}$ for all $i' \in C_i$.
    \end{enumerate} 
\end{lemma}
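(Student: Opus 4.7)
The plan is to apply Hall's theorem to an auxiliary bipartite graph $H$ built between $C_i$ and $I$, where we include edge $(i',j)$ iff either (a) $i'$ is currently matched to a copy of $j$ (i.e., $a_{i'}\in C_j$), or (b) the minimum-priced copy of $j$ satisfies $p_{j^*}\geq c_{i'}$. A perfect matching in $H$ that respects all edges of type (a) immediately yields the desired assignment: condition (1) is bijectivity, condition (2) is enforced by those forced edges (which form a valid partial matching because~\cref{inv:one-item-match} guarantees at most one copy of $i$ can be matched to copies of a single item), and condition (3) is exactly what edges of type (b) provide.

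To invoke Hall's theorem I would establish the following structural invariant maintained by~\cref{alg:b-matching} throughout its execution: for every integer $k\geq 0$, the number of items $j\in N(i)$ such that either $p^*_j\geq k\eps$, or some copy of $j$ is currently matched to some copy of $i$, is at least the number of copies $i'\in C_i$ with $c_{i'}\geq k\eps$.

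I would prove this invariant by induction on rounds. Initially all cutoffs and prices are $0$, so the statement is trivial. In each round the right-hand side increases by $1$ precisely when some $c_{i'}$ is incremented at line~\ref{bm:increase-price-cutoff}; this happens only when $D_{i'}\neq\emptyset$ yet $i'$ remained unmatched in the non-duplicate maximal matching. The crucial observation is that every copy in $D_{i'}$ shares the common minimum price $p_{\min}\geq c_{i'}$ (so $p^*_j=p_{\min}$ for each item $j$ represented in $D_{i'}$, by the definition of $N'(i')$ which requires all copies of $j$ to have price $\geq c_{i'}$). By maximality of the matching, every copy in $D_{i'}$ must either have been matched to some bidder in the round (raising that copy's price by $\eps$) or been blocked by the non-duplicate constraint (because another copy of the same item is matched in the round to another copy of $i$). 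Combining these cases across the copies of a single item $j$ in $D_{i'}$ shows that after the round either $p^*_j\geq c_{i'}+\eps$ or some copy of $j$ has entered the matched-to-$C_i$ set, which yields a new contribution to the left-hand side at level $c_{i'}+\eps$, offsetting the RHS increment.

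With the invariant in hand, Hall's condition on $H$ follows: for any $S\subseteq C_i$, take $c^*=\min_{i'\in S}c_{i'}$; then $|N_H(S)|\geq |\{j\in I : p^*_j\geq c^* \text{ or } j \text{ matched to } C_i\}|$, which by the invariant applied to the $b_i$-element set $I\subseteq N(i)$ together with a pigeonhole argument (using $|I|=|C_i|=b_i$) is at least $|\{i'\in C_i : c_{i'}\geq c^*\}|\geq|S|$. The main obstacle I anticipate is the inductive step: because matching a single copy raises only that copy's price, not $p^*_j$ in general, one must carefully exploit the fact that $D_{i'}$ contains precisely the min-price copies of each represented item, so that the items in $D_{i'}$ truly ``move up'' in price simultaneously after the round, or else get absorbed into the matched-to-$C_i$ set, keeping the inductive accounting tight.
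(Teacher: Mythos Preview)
Your Hall's-theorem plan is natural, but the proposal has two genuine gaps that break the argument as written.

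First, the invariant you state is a \emph{lower bound} on the number of ``good'' items in $N(i)$ (those with $p^*_j\ge k\eps$ or matched to $C_i$). When $|N(i)|>|C_i|$---which is the typical situation---this lower bound is vacuously satisfied and carries no information about an arbitrary $|C_i|$-element subset $I\subseteq N(i)$. The ``pigeonhole argument (using $|I|=|C_i|=b_i$)'' you allude to cannot extract from it the Hall condition on $I$: to bound $|I\setminus N_H(S)|$ you need an \emph{upper bound} on the number of items $j\in N(i)$ with $p^*_j<c^*$, not a lower bound on the complementary set inside the possibly huge $N(i)$. The paper's proof establishes exactly such an upper bound: for each copy $i'$, at most $|C_i|-1$ items in $N(i)$ have $p^*_j<c_{i'}$, and moreover each such item is ``caused'' by a unique copy $i''\neq i'$. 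That per-$i'$ upper bound is what makes the assignment go through for arbitrary $I$.

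Second, your computation of $N_H(S)$ is incorrect. By your own edge definition, an item $j$ matched to some $i''\in C_i\setminus S$ contributes only the type-(a) edge $(i'',j)$; it is \emph{not} automatically adjacent to any $i'\in S$ unless additionally $p^*_j\ge c_{i'}$. So the inclusion $N_H(S)\supseteq\{j\in I:p^*_j\ge c^*\text{ or }j\text{ matched to }C_i\}$ fails in general, and the subsequent inequality chain collapses. (Relatedly, Hall's theorem by itself gives \emph{some} perfect matching in $H$, not one that uses every type-(a) edge; to enforce condition~(2) you would have to delete the forced pairs and re-verify Hall on the residual bipartite graph, which you do not address.)

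The paper's route avoids Hall entirely: it tracks, for each $i'$, which other copy $i''\in C_i$ ``caused'' each item to fall below $c_{i'}$, shows this causing map is injective, and then builds the bijection by a direct switching argument on causes. That argument is what simultaneously yields the cardinality bound $|C_i|-1$ and the assignment respecting conditions (2)--(3); your invariant, even if repaired to an upper bound, would still need an analogue of this causing/assignment step to finish.
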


\ifcameraready
\else
\begin{proof} %
In this proof, we prove a stronger statement which is sufficient to prove our original lemma statement. Namely, we prove that 
for each bidder $i' \in L'$, during any round $d \leq \ceil{\frac{2}{\eps^2}}$, of the items in $N(i)$, 
at most $|C_i| - 1$ of them can have minimum price $< c_{i'}$ and each of these items can be 
assigned to a unique copy of $C_i$ that is not $i'$. This means that any subset of $|C_i|$ items in $N(i)$
containing the items with minimum
price $< c_{i'}$ can be assigned to these unique copies and rest of the items can be arbitrarily assigned to any of the remaining
copies of $C_i$.

We now prove the above. Let $i' \in L'$ be any bidder in $L'$. We say an item's minimum priced copy \emph{falls below} $c_{i'}$ when $c_{i'}$
increases above the minimum priced copy of an item.
An item's price falls below $c_{i'}$ only when another copy of the item is matched to another copy of $i$. 
Now we first argue there cannot be more than $|C_i| - 1$ of these items. To show this, we first show that each copy $i'' \in C_i$  
where $i' \neq i''$ can cause at most one item in $R$ to have a copy with minimum price less than $c_{i'}$. We say a bidder copy
$i''$ \emph{caused} $j$ to have minimum price less than $c_{i'}$ if $i''$ was matched to a copy of $j$ in the earliest round when 
the minimum priced copy of $j$ drops below $c_{i'}$ and does not have price $\geq c_{i'}$ in any later rounds up to the current round.
In other words, suppose the current round is $d$ and the minimum priced copy of $j$ dropped below $c_{i'}$ in round $d' < d$ because
it was matched to item $i''$. Then suppose the minimum priced copy of $j$ does not exceed $c_{i'}$ again after round $d'$. We
say that $i''$ \emph{caused} $j$ to have minimum price less than $c_{i'}$.

Suppose for contradiction that $i''$ can cause more than one item to have minimum price less than $c_{i'}$. 
Then, suppose $i''$ caused both $j_1, j_2 \in R$ where $j_1 \neq j_2$ to have a copy
with minimum price smaller than $c_{i'}$. Without loss of generality, assume bidder $i''$ was initially matched to a copy of $j_1$
and then to a copy of $j_2$. There are again several cases to consider.

Bidder $i''$ may have switched to a copy of $j_2$ from a copy of $j_1$ during some round when the minimum priced copies of both items 
were the same. If they have price equal to $c_{i'}$, then, they can have minimum price $< c_{i'}$ in the subsequent round if and only if 
both are matched to copies of $i'$. In that case, $i''$ cannot cause $j_2$ to have minimum price less than $c_{i'}$. 
Suppose both item's minimum prices are less than $c_{i'}$. Then, at some point $j_2$ must have been matched to some copy of $i'$ to drop 
below $c_{i'}$ in price. Without loss of generality, suppose this is the first time that $i''$ switched its matching to $j_2$ since the minimum 
priced copy of $j_2$ dropped below $c_{i'}$. Then, $i''$
cannot have caused the minimum price of $j_2$ to drop below $c_{i'}$ since $j_2$ already has minimum price below $c_{i'}$ when
$i''$ switched to it. Since each item which falls below $c_{i'}$ requires a unique copy in $C_i$ (which is not $i'$) 
there can be at most $|C_i| - 1$ such items.

Now, we conclude the proof by showing each such item with minimum price less than $c_{i'}$ can be assigned to a unique copy of $C_i$.
We proved above that a unique copy of $i$ caused each item to drop below $c_{i'}$ in price. Furthermore, we also proved above that 
a bidder can switch to another item if and only if the items have the same minimum price. A bidder $i_1$ can be assigned to the item $j_1$
they originally caused to drop below $c_{i'}$ in price unless $j_1$'s price drops below $c_{i_1}$. Suppose without 
loss of generality that this is the first such bidder whose original item fell below its cutoff price. 
Then, there must exist another bidder
$i_2 \in C_i$ who matched to $j_1$ and was assigned item $j_2$
that has the same minimum price as $j_1$. We switch the assignments of $j_2$ to $i_1$ and $j_1$ to $i_2$ in this case. 
We perform this switch sequentially for every such bidder whose original item fell below its cutoff price.
Thus, we showed that each bidder in $C_i$ can either be assigned to the item they originally caused to drop below $c_{i'}$ or we can
switch the assignment of two such bidders.
\end{proof} 
\fi

We now perform the approximation analysis. Suppose as in the case of MCM, we have at least 
$(1-\eps)|\opt|$ happy bidders in $\opt$ (i.e. $|\happy \cap \opt| \geq (1-\eps)|\opt|)$, 
then we show that we can obtain a $(1-\eps)$-approximate \mcbm. 
Let $\opt$ be an optimum \mcbm matching and $|\opt|$ be the cardinality of this matching.

\begin{lemma}\label{lem:mcbm-approx}
    Assuming $|\happy \cap \opt| \geq (1-\eps)|\opt|$, then we obtain
    a $(1-2\eps)$-approximate \mcbm.
\end{lemma}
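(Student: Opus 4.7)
The plan is to extend the MCM-style argument to the $b$-matching setting using the tools developed in this section: Invariants \ref{inv:one-item-match}--\ref{inv:max-utility-1}, the $(\eps,c)$-happiness characterization of \cref{lem:b-eps-happy}, and the per-bidder assignment lemma \cref{lem:unique-assign}. The heart of the argument is to exhibit, for every happy bidder copy in a suitable lift of $\opt$, a distinct item copy whose price controls that bidder's utility via the happiness condition, and then to compare total utility and total price using \cref{inv:max-utility-1}.

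First, I would lift $\opt$ to a matching $\opt'$ in $G'$ by choosing, for each bidder $i$, the $|O_i|$ copies in $C_i$ that maximize overlap with $\happy$; the hypothesis then yields $|\happy\cap\opt'|\geq(1-\eps)|\opt'|$ with $|\opt'|=|\opt|$. Second, for each bidder $i$ I would invoke \cref{lem:unique-assign} on $C_i$ and a size-$|C_i|$ extension of $O_i$ inside $N(i)$, obtaining a one-to-one map $r_{i'}:C_i\to R$ such that (a) any item $j$ some copy of which is matched to $i'$ in the algorithm's matching satisfies $r_{i'}=j$, and (b) every copy of $r_{i'}$ has price at least $c_{i'}$.

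Third, I would globally choose a copy $s_{i'}\in C_{r_{i'}}$ for each happy $i'\in\happy\cap\opt'$ so that the $s_{i'}$ are pairwise distinct across all happy bidder copies. This is feasible because for every item $j$ the number of happy copies with $r_{i'}=j$ is at most $b_j$ (since $j\in O_i$ for at most $b_j$ bidders $i$ by validity of $\opt$, and within each $C_i$ at most one copy is assigned $j$), exactly matching the $b_j$ copies available in $C_j$. When $i'$ is matched to a copy of $r_{i'}$ I take $s_{i'}=a_{i'}$, which is automatically distinct since the algorithm's output is a matching in $G'$; otherwise by (a) and the local uniqueness of the assignment, no copy of $r_{i'}$ is matched to any copy of $i$, so $s_{i'}$ may be chosen freely among copies not yet used. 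Combined with (b), this ensures $s_{i'}\in N'(i')\cup\{a_{i'}\}$ in either case, so \cref{def:b-eps-happy} gives $u_{i'}\geq 1-p_{s_{i'}}-\eps$ (trivially in the matched case, via happiness in the unmatched case).

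Finally, I would sum over $\happy\cap\opt'$ and combine with $|M|=|M'|$ (\cref{inv:one-item-match}) and the max-utility invariant (\cref{inv:max-utility-1}):
\begin{align*}
|M|\;\geq\;\sum_{i'\in L'}u_{i'}+\sum_{j'\in R'}p_{j'}\;\geq\;\sum_{i'\in\happy\cap\opt'}(1-p_{s_{i'}}-\eps)+\sum_{j'\in R'}p_{j'}\;\geq\;(1-\eps)^2|\opt|\;\geq\;(1-2\eps)|\opt|,
\end{align*}
where the penultimate step uses that the $s_{i'}$ are distinct so $\sum_{i'\in\happy\cap\opt'}p_{s_{i'}}\leq\sum_{j'\in R'}p_{j'}$, and the final step uses $(1-\eps)^2\geq 1-2\eps$. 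The main obstacle is precisely the global one-to-one selection in the third step: \cref{lem:unique-assign} produces only per-bidder assignments, and one must carefully exploit the structure of $\opt$ (bounding how often each item appears across the $O_i$) together with the $b_j$ copies of each item to rule out cross-bidder collisions, all without sacrificing the $s_{i'}\in N'(i')\cup\{a_{i'}\}$ property required to invoke happiness.
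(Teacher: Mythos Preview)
Your proposal is correct and follows essentially the same route as the paper: lift $\opt$ to $G'$, apply \cref{lem:unique-assign} per bidder to pair each happy copy in the lift with an item whose minimum-price copy exceeds the cutoff, invoke $(\eps,c)$-happiness to lower-bound utilities, and cancel prices against \cref{inv:max-utility-1}. The paper's write-up differs only cosmetically: rather than building a global injective choice of copies $s_{i'}$, it charges every happy $i'$ the \emph{minimum}-price copy $p_{\min,r_{i'}}$ directly, and relies on the inequality $\sum_{j'\in C_j}p_{j'}\ge b_j\cdot p_{j_{\min}}$ together with the fact that at most $b_j$ copies across all $C_i$ receive item $j$ to absorb the price sum into $\sum_{j'\in R'}p_{j'}$. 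Your explicit selection of distinct $s_{i'}$ achieves the same cancellation and is arguably cleaner; the paper also arrives at $(1-2\eps)$ via $|\happy\cap\opt|\le|\opt|$ rather than your $(1-\eps)^2$ bound, but this is immaterial.

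One point worth noting: the obstacle you flag in your last paragraph---ensuring that the per-bidder assignments from \cref{lem:unique-assign} glue into a global one with each item $j$ hit at most $b_j$ times---is real, and the paper glosses over it as well. Both arguments implicitly need that the items $r_{i'}$ assigned to copies in $\happy\cap\opt'$ lie in $O_i$ (so that the ``$j\in O_i$ for at most $b_j$ bidders'' bound applies). Since \cref{lem:unique-assign} is stated for a set of size exactly $|C_i|$, one must either extend $O_i$ to such a set without overloading any item beyond its $b_j$ budget, or (equivalently) take the lift $\opt'\cap C_i$ to be the preimage $r^{-1}(O_i)$ and check that the hypothesis of \cref{lem:mcbm-rounds} still applies to this lift. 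You are right to single this out; it is the one place where the bookkeeping needs a sentence more care than either your sketch or the paper's proof supplies.
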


\ifcameraready
\else
\begin{proof} %
Let $\opt$ be an optimum \mcbm and $(i, J) \in \opt$ be the bidder and item set pairs in $\opt$. 
Let $|\opt|$ be the cardinality of the optimum matching.
Using~\cref{lem:unique-assign}, for each pair $(i, O_i)$, we \emph{assign} the items in $O_i$
to $C_i$. Now, we upper and lower bound the utility of all 
matched bidders as before using this assignment. The upper bound is the same as the case for MCM.
\begin{align*}
    |M| - \sum_{j \in R'} p_j &\geq \sum_{i \in L'} u_i
\end{align*}
since all items with non-zero price is assigned to a bidder and the maximum cardinality cannot exceed
the cardinality of the obtained matching $M$.

Then, to lower bound the sum of the utilities we obtain for each pair of bidder copy and assigned item
\begin{align*}
    u_{i'} = 1 - \eps - p_{o_{i'}}
\end{align*}
by~\cref{lem:unique-assign} where $o_{i'} \in O_i$ is the item assigned to $i$ and where $p_{o_{i'}} 
= \arg\min_{j' \in C_{o_{i'}}} \left(p_{j'}\right)$. 
By~\cref{lem:b-eps-happy}, the above equation follows.

This means that summing over all happy bidders results in

\begin{align*}
    \sum_{i' \in L} u_{i^{'}} &\geq \sum_{i' \in \happy \cap \opt} 1 - \eps - p_{o_{i'}}\\
    &\geq (1-\eps)|\opt| - \sum_{i' \in \happy \cap \opt} (\eps - p_{o_{i'}})\\
    &\geq (1-2\eps)|\opt| - \sum_{i' \in \happy \cap \opt} p_{o_{i'}}
\end{align*}

Combining the lower and upper bounds we obtain our desired approximation ratio

\begin{align*}
    |M| - \sum_{j' \in R'} p_{j'} &\geq (1-2\eps)|\opt| - \sum_{i' \in \happy \cap \opt} p_{o_{i'}}\\
    |M| &\geq (1-2\eps)|\opt|
\end{align*}
\end{proof}
\fi

The potential argument proof is almost identical to that for MCM provided our use of $c_{i'}$.
Specifically, as in the case for MCM, we use the same potential functions and using these potential functions,
we show that our algorithm terminates in $O\left(\frac{1}{\eps^2}\right)$ rounds. The key difference between our
proof and the proof of \mcm explained 
\ifcameraready
in~\cite{ALT21}
\else
in~\cref{sec:mcm} 
\fi
is our definition of $\Pi_{bidders}$ which is precisely defined
in the proof of~\cref{lem:mcbm-rounds} below.

\begin{lemma}\label{lem:mcbm-rounds}
    In $\bmphases$ rounds, there exists at least one round where $|\opt \cap \happy| \geq (1-\eps)|\opt|$. 
\end{lemma}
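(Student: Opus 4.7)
The plan is a potential-function argument in the spirit of the \mcm proof of \cite{ALT21}, but using the cutoff prices $c_{i'}$ (which are new in the $b$-matching setting) as the main accounting device. Let $\opt$ denote the lifted optimum matching in $G'$, obtained from an optimum $b$-matching of $G$ by assigning, for each pair $(i, O_i)$ in the original optimum, the items $O_i$ to distinct copies in $C_i$ via \cref{lem:unique-assign}; its cardinality equals the original $|\opt|$. I would use the potential
\[
\Pi_{bidders} \;\defined\; \sum_{i' \in \opt} c_{i'}.
\]

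First I would show $\Pi_{bidders} \leq (1+\eps)\,|\opt|$. This follows because $c_{i'}$ is only updated on \cref{bm:increase-price-cutoff}, which requires $D_{i'} \neq \emptyset$; but $D_{i'}$ non-empty forces the existence of $j' \in N'(i')$ with $c_{i'} \leq p_{j'} < 1$, so immediately before any increase $c_{i'} < 1$, hence $c_{i'} < 1+\eps$ throughout the execution and the bound on the sum is immediate.

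Next, I would call a round \emph{bad} if $|\opt \cap \happy| < (1-\eps)\,|\opt|$, i.e.\ strictly more than $\eps\,|\opt|$ bidder copies in $\opt$ are unhappy at the end of the round. By \cref{lem:b-eps-happy}, every unhappy $i'$ is unmatched with non-empty demand set, so \cref{bm:increase-price-cutoff} fires on $i'$ and $c_{i'}$ strictly increases by $\eps$. Summing over the more than $\eps\,|\opt|$ unhappy bidder copies of $\opt$, the potential $\Pi_{bidders}$ jumps by more than $\eps^2\,|\opt|$ in each bad round. Combined with the $(1+\eps)\,|\opt|$ upper bound and the monotonicity of the cutoffs, the total number of bad rounds is strictly less than $(1+\eps)/\eps^2 \leq 2/\eps^2 \leq \bmphases$, so at least one of the $\bmphases$ rounds must be good, which is exactly the statement of the lemma.

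The main delicate point is that the argument lives in the lifted graph $G'$ rather than in $G$: we need a fixed set $\opt \subseteq L' \times R'$ of cardinality $|\opt|$ whose bidder side consists of distinct copies, so that different unhappy bidders in $\opt$ contribute independently to $\Pi_{bidders}$ (and so that a single unhappy ``original'' bidder $i$ with many copies cannot be overcharged). This is exactly the content of \cref{lem:unique-assign}, and the bookkeeping around duplicate item copies---absent from the \mcm analysis of \cite{ALT21}---is the only real novelty in the potential argument itself.
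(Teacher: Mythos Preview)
There is a genuine gap in your argument: the single potential $\Pi_{bidders}=\sum_{i'\in\opt} c_{i'}$ does \emph{not} increase for every unhappy bidder in $\opt$. The key step where you write ``every unhappy $i'$ is unmatched with non-empty demand set, so \cref{bm:increase-price-cutoff} fires on $i'$'' conflates the demand set \emph{at the end of the round} (which is what makes $i'$ unhappy, via \cref{lem:b-eps-happy}) with the demand set the algorithm actually uses in \cref{bm:increase-price-cutoff}. The latter is the $D_{i'}$ computed in \cref{bm:demand} at the \emph{start} of the round, and it is only computed for bidders that were already unmatched at that moment. Hence a bidder $i'\in\opt$ that was matched at the start of round $d$ but has its item stolen during the rematching step ends the round unmatched and unhappy, yet $D_{i'}$ was never formed and \cref{bm:increase-price-cutoff} does not fire on $i'$; its cutoff $c_{i'}$ stays put and contributes nothing to your potential increase. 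Many such ``newly unmatched'' bidders can occur in a single bad round, so the claimed $>\eps^2|\opt|$ jump of $\Pi_{bidders}$ simply fails.

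This is precisely why the paper's proof keeps a second potential $\Pi_{items}=\sum_{j'\in R'}p_{j'}$: when $i'$ becomes unmatched, the item it lost has its price raised by $\eps$ in \cref{bm:increase-price}, and that increment is charged to $\Pi_{items}$ instead. With both potentials in play, every unhappy bidder in $\opt$ contributes $\eps$ to one of the two (either $c_{i'}$ goes up, or the price of the stolen item goes up), and both are bounded by $|\opt|$, yielding the $\bmphases$ bound. Your cutoff-only potential captures only the ``remains unmatched'' half of the case split; to repair the proof you need to add the item-price potential (or something equivalent) to handle the ``becomes unmatched'' half.
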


\ifcameraready
\else
\begin{proof}%
We use similar potential functions as used in~\cite{ALT21} (\cref{sec:mcm}) with the difference being the definition of $\bidderpot$.
We define $\bidderpot$ by picking an arbitrary set of $|O_i|$ bidder copies for each $i \in L'$ to be contained in the set $\opt$. We let this 
set of copies be denoted as $\opt$. Then, we 
define the potential functions as follows:

\begin{align*}
    \Pi_{items} &\defined \sum_{j' \in R'} p_{j'}\\
    \Pi_{bidders} &\defined \sum_{i' \in \opt} \min_{j' \in N'(i'), p_{j'} < 1} \left(p_{j'}\right).
\end{align*}

First, both $\Pi_{items}$ and $\Pi_{bidders}$ are upper bounded by $|\opt|$ since the price of any item
is at most $1$ and the number of non-zero priced items is precisely the number of matched items by~\cref{inv:nonzero-matched}. 
We show that having at least $\eps \cdot |\opt|$ bidders in $\opt$ that are not happy
increases the potential on one or both of the potential functions by at least $\eps^2 \cdot |\opt|$. 

When a bidder becomes unmatched, the price of its previously matched item increases by $\eps$. When a 
bidder remains unmatched, its $\min_{j' \in N'(i'), p_{j'} < 1} \left(p_{j'}\right)$ increases by $\eps$, by~\cref{bm:increase-price-cutoff}.
Thus, in all settings, for each unhappy bidder, either $\Pi_{items}$ increases by $\eps$ or $\Pi_{bidders}$ increases
by $\eps$. The total potential for both is $2 \cdot |\opt|$ and so we obtain $\frac{2\cdot |\opt|}{\eps^2 \cdot |\opt|} \leq \bmphases$ rounds.

By our definition of $\happy \cap \opt$, if $\geq (1-\eps)|\opt|$ are happy, then $|\happy \cap \opt| \geq (1-\eps)|\opt|$.
\end{proof}
\fi

Using the above lemmas, we can prove the round complexity of~\cref{thm:b-matching} to be $O\left(\frac{1}{\eps^2}\right)$ by~\cref{lem:mcbm-approx}
and~\cref{lem:mcbm-rounds}.

\begin{theorem}\label{lem:mcbm-distributed}
    There exists an auction algorithm for maximum cardinality
    bipartite b-matching (\mcbm) that gives a $(1-\eps)$-approximation for any $\eps > 0$ and runs in 
    $O\left(\frac{\log n}{\eps^2}\right)$ rounds of communication
    using $O(b \log n)$ bits per message in the blackboard distributed model.
    In total, the number of bits used
    by the algorithm is $O\left(\frac{nb \log^2 n}{\eps^2}\right)$.
\end{theorem}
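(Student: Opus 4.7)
The plan is to combine the correctness and phase-count results already established (\cref{lem:mcbm-approx} and \cref{lem:mcbm-rounds}) with a distributed realization of each phase of \cref{alg:b-matching} on the blackboard. The approximation guarantee follows by applying the two lemmas with $\eps' = \eps/2$; the only remaining work is to verify that (i) each phase can be run in $O(\log n)$ blackboard rounds, and (ii) the per-message bit complexity is $O(b \log n)$. Multiplying then yields both the $O(\log n/\eps^2)$ round bound and the $O(nb\log^2 n/\eps^2)$ total bit count.

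First I would assign one player to each vertex $i \in L$, storing locally the state of its $b_i$ copies (cutoffs $c_{i'}$, current matches, demand sets) and reading the up-to-date item prices from the blackboard. Because prices change only by $\eps$ increments when bids are accepted, and because the state transitions in each phase depend solely on the currently published matching and prices, every player can compute $D_{i'}$ for every copy $i' \in C_i$ locally without additional communication. The subgraph $G'_d$ is thus implicitly represented by the aggregate of published demand sets.

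Next I would compute the \nonduplicate maximal matching of $G'_d$ via the folklore randomized distributed maximal matching procedure (the same one used in~\cite{DNO14} and invoked in the MWM algorithm above): each unmatched bidder copy proposes a uniformly random item copy from its demand set, and the blackboard adjudicates conflicts, accepting one proposal per item copy. A standard analysis shows the expected number of active copies halves per round, so $O(\log n)$ rounds suffice \whp. Each proposal encodes a single item-copy ID in $O(\log n)$ bits; since a player carries at most $b$ copies, a single round's message fits in $O(b \log n)$ bits.

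The hard part will be enforcing the \nonduplicate constraint within this bit budget, because naively each player would need fresh knowledge of which item copies were matched to other copies of itself in the same round. I would resolve this by observing that the blackboard publishes the accepted bids at the end of each round, so every player can locally check before its next proposal that no copy of the same item was just matched to another copy of itself---no extra communication is required beyond what is already broadcast. Putting it together, the round count is $O(\log n)$ per phase times $O(1/\eps^2)$ phases, and the total communication is $n \cdot O(\log n/\eps^2) \cdot O(b \log n) = O(nb \log^2 n/\eps^2)$ bits, matching the claim.
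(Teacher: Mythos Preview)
Your proposal is correct and follows essentially the same approach as the paper, which is even terser: the paper simply notes (in the sentence preceding the theorem) that the phase count follows from \cref{lem:mcbm-approx} and \cref{lem:mcbm-rounds}, and implicitly relies on the same folklore $O(\log n)$-round randomized maximal-matching subroutine already invoked for \mwm in the proof of \cref{thm:main}. You have filled in the per-phase implementation details (bundling the $b$ copies onto one player, the $O(b\log n)$ message size, and the local enforcement of the \nonduplicate constraint) that the paper leaves implicit, but the underlying argument is the same.
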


\subsection{Semi-Streaming Implementation}\label{sec:b-matching-semi-streaming}

We now show an implementation of our algorithm to the semi-streaming setting and show the following lemma which proves the semi-streaming
portion of our result in~\cref{thm:b-matching}.  We are guaranteed $\eps \geq \frac{1}{2n^2}$; otherwise, an exact matching is found.
In order to show the space bounds, we use an additional lemma below that upper and lower bounds the prices of any copies of the same
item in $R'$.

\begin{lemma}\label{lem:price-difference}
    For any $j \in R$, let $j_{\min}$ be the minimum priced copy in $C_j$ and $j_{\max}$ be the maximum priced copy in $C_j$. Then, $p_{j_{\max}} - 
    p_{j_{\min}} \leq \eps$.
\end{lemma}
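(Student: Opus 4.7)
The plan is to argue by contradiction, focusing on the round in which the price of $j_{\max}$ was most recently incremented. If $p_{j_{\max}}=0$ then all copies of $j$ have price $0$ and the claim holds trivially, so I assume $p_{j_{\max}}>0$. I would let $t$ be the last round of~\cref{alg:b-matching} in which $p_{j_{\max}}$ was raised; then just before the price update on~\cref{bm:increase-price}, $p_{j_{\max}}$ equalled $p_{j_{\max}}-\eps$, and the increment occurred because some bidder copy $i'$ was matched to $j_{\max}$ in round $t$ via~\cref{bm:rematch-2}. Since prices are monotonically non-decreasing over the execution, at the start of round $t$ the price of $j_{\min}$ was at most its current value $p_{j_{\min}}$, so under the contradiction hypothesis $p_{j_{\max}}-p_{j_{\min}}>\eps$ the copy $j_{\min}$ had strictly smaller price than $j_{\max}$ at time $t$.

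Next I would derive the contradiction from the construction of $D_{i'}$ in~\cref{alg:find-demand-set}. Because $i'$ was matched to $j_{\max}$ in round $t$, we must have $j_{\max}\in D_{i'}$ at time $t$, which means $j_{\max}\in N'(i')$ and $p_{j_{\max}}-\eps$ achieves the minimum of $\{p_{j''}:j''\in N'(i'),\,p_{j''}<1\}$. Since $j_{\min}$ had strictly smaller price at time $t$ and still satisfied $p_{j_{\min}}<1$ (because $p_{j_{\min}}<p_{j_{\max}}-\eps<1$, the last inequality using that $j_{\max}$ itself was eligible for some demand set at time $t$), the only consistent possibility is $j_{\min}\notin N'(i')$.

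The main obstacle will be justifying why this is impossible, which rests on a careful reading of~\cref{ds:n}. The two conditions defining membership in $N'(i')$ -- namely, (i) that no copy of the underlying item is matched to any copy of $i$, and (ii) that \emph{every} copy of the underlying item has price at least $c_{i'}$ -- are both properties of the item $j$ as a whole since each quantifies over all of $C_j$. Therefore the copies of $j$ either all belong to $N'(i')$ or all lie outside of it, so $j_{\max}\in N'(i')$ forces $j_{\min}\in N'(i')$, a contradiction, which establishes $p_{j_{\max}}-p_{j_{\min}}\le\eps$. Notably, the proof would break if condition (ii) were phrased as ``$p_{j'}\ge c_{i'}$'' only for the particular copy $j'$ rather than for all of $C_j$: then a high-priced copy could sit in $N'(i')$ while its lower-priced sibling is excluded, and the lemma itself could fail, which is why the algorithm is formulated with the universal quantifier over $C_j$.
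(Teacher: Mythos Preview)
Your proof is correct and follows essentially the same route as the paper: argue by contradiction, locate a round in which some bidder copy $i'$ matched to a copy of $j$ whose price exceeded that of $j_{\min}$, and then observe that the membership conditions in $N'(i')$ from~\cref{ds:n} depend only on the underlying item $j$ (both the ``no copy matched to a copy of $i$'' clause and the ``all copies have price $\ge c_{i'}$'' clause quantify over $C_j$), so $j_{\max}\in N'(i')$ forces $j_{\min}\in N'(i')$, contradicting the minimality used to form $D_{i'}$. Your version is in fact slightly more explicit than the paper's about the timing of prices (pinning down $t$ as the last round $j_{\max}$ was incremented and using monotonicity to compare prices at that moment), and your closing remark about why the universal quantifier over $C_j$ in condition (ii) is essential is a nice observation that the paper leaves implicit.
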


\ifcameraready
\else
\begin{proof}
We prove this lemma via contradiction. Suppose for contradiction that $p_{j_{\max}} - p_{j_{\min}} > \eps$ for some $j \in R$. This means that
during some round $d$, a bidder $i' \in L'$ matched to an item copy $j'$ where $p_{j'} > p_{j_{\min}}$. By~\cref{alg:find-demand-set}, this
can only happen if $D_{i'}$ contains $j'$ but not $j_{\min}$. If $j' \in D_{i'}$, then by definition of $N'(i')$, it holds 
that $j_{\min} \geq c_{i'}$ and 
no copy of $j$ is matched to another copy of $i$. Then, $j_{\min} \in N'(i')$ and $j_{\min} \in \arg\min_{j' \in N'(i')} \left(p_{j'}\right)$, 
a contradiction to $j' \in D_{i'}$ since $p_{j'} > p_{j_{\min}}$. 
\end{proof}
\fi

Using the above, we prove our desired bounds on the number of passes and the space used.

\begin{theorem}\label{lem:b-semi-streaming}
    There exists a semi-streaming algorithm for maximum cardinality bipartite $b$-matching that uses $O\left(\frac{1}{\eps^2}\right)$ rounds and 
    $\tO\left(\left(\sum_{i \in L} b_i + |R|\right) \log(1/\eps)\right)$ space where $L$ is the side with the smaller number of nodes in the input graph. 
\end{theorem}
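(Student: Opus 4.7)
My plan is to implement Algorithm~\ref{alg:b-matching} in the semi-streaming setting while avoiding any explicit storage of the expanded graph $G'$, since $G'$ may have as many as $\sum_{i \in L}b_i \cdot \sum_{j \in R}b_j$ edges. The key enabler is~\cref{lem:price-difference}: the prices of all $b_j$ copies of an item $j \in R$ differ by at most $\eps$, so the entire price profile of $C_j$ can be stored compactly by keeping the minimum price $p_{j,\min}$ (an $O(\log(1/\eps))$-bit value since prices lie in $[0,1]$ and advance in multiples of $\eps$) together with the count of copies at the higher level. This gives $O(|R|\log(1/\eps))$ total space for all item-copy prices. For the bidder side I would store, for each $i' \in L'$, the pair $(a_{i'}, c_{i'})$ explicitly, costing $O\left((\sum_{i \in L}b_i)(\log n + \log(1/\eps))\right)$ space. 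The sum of these two quantities matches the claimed bound.

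The algorithm then uses one preprocessing pass to learn each $b_i$ and identify $|L|, |R|$, followed by $O(1/\eps^2)$ outer rounds, each implemented in $O(1)$ passes. In the first pass of a round, for every unmatched bidder copy $i'$ I compute the quantity $\min_{j \in N(i),\, p_{j,\min}\ge c_{i'}}(p_{j,\min})$ by scanning the edges and updating a per-bidder-copy running minimum; this determines $D_{i'}$ implicitly (the set of copies attaining the minimum). In the second pass I greedily compute a \nonduplicate maximal matching in $G'_d$: when edge $(i,j)$ is streamed, for each unmatched copy $i' \in C_i$ whose implicit demand value equals $p_{j,\min}$, I try to match $i'$ to a still-available copy of $j$ (available meaning its price equals $p_{j,\min}$ and it is not yet assigned to another copy of $i$ in this round). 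Because the price profile of $C_j$ is summarized by $(p_{j,\min},\text{count})$, checking availability and updating after a match takes $O(\log(1/\eps))$ work, and~\cref{inv:one-item-match} is preserved because the assignment list $\{a_{i''}\}_{i''\in C_i}$ is stored locally with $i$'s block. After the pass I update prices (bump $p_{j,\min}$ and/or the count by the newly matched copies) and cutoffs per~\cref{bm:increase-price,bm:increase-price-cutoff}. At the end of every round I compute $|M_d|$ by deduplicating according to~\cref{bm:original} and keep $M_{\max}$.

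The main obstacle is the second pass: the \nonduplicate maximal matching in $G'_d$ must respect two constraints simultaneously --- no two copies of the same bidder take copies of the same item, and the priority of matching unmatched items first is preserved. The first constraint is handled because when edge $(i,j)$ arrives, all $b_i$ copies of $i$ share the same local block where I can mark $j$ as claimed after the first match, rejecting subsequent claims by other copies in $C_i$ on the same $j$. The priority on unmatched items is enforced by noting that an ``unmatched'' copy of $j$ corresponds exactly to the count of copies of $j$ at $p_{j,\min}=0$, so these are consumed first by the counter update; this preserves the greedy maximal matching property needed for the analysis of~\cref{lem:mcbm-approx,lem:mcbm-rounds}.

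Combining the preprocessing pass with $O(1)$ passes per outer round over $\bmphases$ rounds gives $O(1/\eps^2)$ passes total, and the space accounting above yields $\widetilde{O}((\sum_{i \in L}b_i + |R|)\log(1/\eps))$ as required. Correctness is inherited directly from the analysis of~\cref{alg:b-matching}, since the streaming implementation produces the same sequence of demand sets, \nonduplicate maximal matchings, price updates, and cutoff updates as the original algorithm.
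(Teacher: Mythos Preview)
Your proposal is essentially the paper's own proof: exploit \cref{lem:price-difference} to store item-copy prices compactly as a minimum price plus a count per $j\in R$ (the paper equivalently keeps min and max prices with counts at each), keep $(a_{i'},c_{i'})$ per bidder copy, and implement each of the $\bmphases$ rounds with one pass to compute the per-bidder-copy demand minimum followed by $O(1)$ passes to build a greedy non-duplicate maximal matching. The only discrepancy is that the paper spends \emph{two} matching passes per round to enforce the unmatched-item priority across distinct items (first pass restricted to currently unmatched item copies, second pass unrestricted), whereas your single-pass counter argument only orders copies \emph{within} a single item and cannot defer matching $i'$ to an already-matched copy of an early-arriving $j_1$ in favor of an unmatched copy of a later-arriving $j_2$; since none of \cref{lem:b-eps-happy,lem:unique-assign,lem:mcbm-approx,lem:mcbm-rounds} actually invoke that priority, your variant still yields the theorem.
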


\ifcameraready
\else
\begin{proof} %
We implement the steps in~\cref{alg:b-matching} in the semi-streaming model and show that they can be implemented within the bounds of this lemma.
We maintain in memory the following:

\begin{enumerate}
    \item The tuples $(i', a_{i'})$ for each $i' \in L'$, and
    \item The minimum and maximum prices for each item $j \in R$ and a count of the number of item copies at the minimum price and the maximum price for each item.
\end{enumerate}

For each round (\cref{bm:phases}), we spend one pass finding the minimum price of items in the $N'(i')$ of each bidder $i' \in L'$.
Then we spend another pass greedily finding a non-duplicate maximal matching among the items that have this minimum price.
To find a non-duplicate maximal matching that prioritizes unmatched items, we perform two passes in our streaming algorithm.
During the first pass, for each edge we receive in the stream, we first check that the minimum price of the item equals the 
demand set price. If this condition is satisfied and the following are also true, 
\begin{enumerate}
    \item at least one copy of the bidder adjacent to the edge is unmatched and has sufficiently low cutoff price,
    \item none of the copies of the bidder matched to any copies of the item,
    \item and at least one minimum priced copy of the item is unmatched,
\end{enumerate} 
then we match an unmatched copy of the item 
with an unmatched copy of the bidder (with sufficiently low cutoff price). We can do this greedily in the streaming setting since 
we maintain all copies of bidders in memory as well as the minimum 
and maximum prices of all items. This means that we can check 
all copies of all bidders to find an unmatched copy. Furthermore, 
we maintain pointers from items to their matched bidder copies so 
we can check the pointers as well as the minimum prices of items and
their counters to greedily find the appropriate matchings.

In the second pass, we match the matched items in the same manner as before in the first pass, except we consider all items in
each node's demand set (not just unmatched ones).
Reallocating the items and increasing the prices of rematched items can be done from the matching above in 
$\tO\left(\left(\sum_{i \in L} b_i + |R|\right) \log(1/\eps)\right)$ space without needing additional passes from the stream.
Finally, computing $M_d$ can also be done using $M'_d$ in the same amount of memory without additional passes of the stream.
\end{proof}
\fi

We note that the space bound is necessary in order to report the solution. (There exists a given input where reporting the solution requires 
$\tO\left(\left(\sum_{i \in L} b_i + |R|\right) \log(1/\eps)\right)$ space.) Thus, our algorithm is tight with respect to this notion. 

\subsection{Shared-Memory Parallel Implementation}

We now show an implementation of our algorithm to the 
shared-memory parallel setting. The main challenge for
this setting is obtaining an algorithm for obtaining
non-duplicate maximal matchings. To obtain 
non-duplicate maximal matchings, we just need to 
modify the maximal matching algorithm of~\cite{BFS12}
to obtain a maximal matching with the non-duplicate
characteristic. Namely, the modification we make is to
consider all copies of a node to be neighbors of each 
other. Since there can be at most $n$ copies of a node,
this increases the degree of each node by at most $n$.
Hence, the same analysis as the original algorithm 
still holds in this new setting.

\begin{theorem}\label{lem:b-matching-parallel}
    There exists a shared-memory parallel algorithm for 
    maximum cardinality bipartite $b$-matching that 
    uses $O\left(\frac{\log^3 n}{\eps^2}\right)$ depth 
    and $O\left(\frac{m \log n}{\eps^2}\right)$ total 
    work where $L$ is the side with the smaller number
    of nodes in the input graph. 
\end{theorem}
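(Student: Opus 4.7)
The plan is to implement Algorithm~\ref{alg:b-matching} directly in the shared-memory work-depth model, establishing that each of its $O(1/\eps^2)$ phases (by~\cref{lem:mcbm-rounds}) admits a parallel implementation in $O(m \log n)$ work and $O(\log^3 n)$ depth. The correctness of the resulting matching then follows from~\cref{lem:mcbm-approx}. The main subtlety is that the analyzed algorithm is described on the blown-up graph $G' = (L' \cup R', E')$, which may have up to $\Theta(n \cdot m)$ edges and hence cannot be materialized within our target work bound; the second subtlety is that the maximal matching found in each phase must be \nonduplicate.

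To avoid blowing up the representation, I would work directly on $G$ and exploit~\cref{lem:price-difference}: for each item $j \in R$, only the minimum price $p_{\min,j}$, the maximum price $p_{\max,j}$ (which differ by at most $\eps$), and counts of copies at each of these two levels need to be stored, together with which bidder copies currently hold each copy of $j$. For each bidder $i \in L$, I would store the (at most $b_i$) items currently matched to its copies; by~\cref{inv:one-item-match} these items are distinct, so this set has size at most $b_i$. This compressed representation takes $O(n+m)$ total space and lets us compute, for each unmatched copy $i'$ and each edge $(i,j) \in E$, whether $j$ is eligible to appear in $D_{i'}$ (i.e.\ $p_{\min,j} \geq c_{i'}$ and no copy of $j$ is already matched to a copy of $i$) in $O(1)$ work per edge. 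Consequently, the demand-set computation, cutoff-price updates, and bookkeeping after a maximal-matching step can all be done in $O(m)$ work and $O(\log n)$ depth via standard parallel prefix sums and sorts.

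The technical heart is computing a \nonduplicate maximal matching in the subgraph $G'_d$ in parallel. I would take the maximal matching algorithm of Blelloch-Fischer-Shun~\cite{BFS12}, which uses random priorities on edges and runs in $O(m)$ work and $O(\log^2 n)$ depth, and augment its edge-elimination rule to also remove any edge $(i^{(k)}, j^{(\ell)})$ once a higher-priority edge between some copy of $i$ and some copy of $j$ has been included. Operationally, copies of the same node are treated as virtual neighbors, so the non-duplicate constraint is enforced by the same priority-based rule that enforces the usual matching constraint. We never instantiate these virtual edges; instead, they are simulated on $G$ by, for each processed edge $(i,j)$, checking against the compressed state of copies at $i$ and $j$ in $O(\log n)$ time. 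Running this modified procedure once per weight-bucket (as in the MWM algorithm is not needed here; a single call suffices per phase) yields a \nonduplicate maximal matching in $O(m \log n)$ work and $O(\log^2 n)$ depth per phase.

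The hard part is justifying that the virtual-neighbor augmentation preserves both the maximality guarantee and the $O(\log^2 n)$-depth analysis of~\cite{BFS12}: one must verify that when an edge loses its priority contest either because of an ordinary matching conflict or because a higher-priority edge between sibling copies was selected, the random-priority symmetry-breaking argument still implies that a constant fraction of surviving edges are settled per round in expectation. Because adding virtual ``sibling'' edges only increases the effective degree of each copy by at most $n$ and the argument of~\cite{BFS12} is degree-oblivious in the relevant sense, this goes through. Summing over the $O(1/\eps^2)$ phases then gives total work $O(m \log n / \eps^2)$ and depth $O(\log^3 n / \eps^2)$ (an extra $\log n$ factor arises from maintaining the compressed price and matched-item data structures across phases), matching the bounds in the theorem statement.
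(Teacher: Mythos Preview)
Your proposal is correct and takes essentially the same approach as the paper: both modify the maximal-matching algorithm of~\cite{BFS12} by treating copies of the same node as virtual neighbors so that the non-duplicate constraint is enforced by the ordinary priority-based conflict rule, and both note that this only inflates the effective degree by at most $n$, leaving the $O(m)$-work, $O(\log^2 n)$-depth analysis intact. You go further than the paper in one useful respect: the paper's proof works implicitly on $G$ but never explains how to avoid materializing $G'$ (whose edge set can be $\Theta(n\cdot m)$), whereas you spell out the compressed per-item price/count representation via~\cref{lem:price-difference} and~\cref{inv:one-item-match} that makes the $O(m\log n)$ work bound actually attainable.
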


\ifcameraready
\else
\begin{proof}
    Finding the demand sets can be done using a parallel scan and sort in $O(m \log n)$ work 
    and $O(\log n)$ depth. Then, finding the induced subgraph can be done using a parallel 
    scan in $O(m)$ work and $O(\log n)$ depth. Finally, we use a modified version of the 
    maximal matching algorithm of~\cite{BFS12} to compute the maximal matching in each phase. 
    Our modified version of the algorithm of~\cite{BFS12} considers all copies of the same node
    to be neighbors of each other; all other parts of the algorithm remains the same. This means that
    the degree of each node increases by at most $n$ (resulting in a maximum degree of at most $2n$)
    which means that the asymptotic work and depth remains the same as before with $O(m)$ work and
    $O(\log^2 n)$ depth. Combined, we obtain the work and depth as stated in the lemma.
\end{proof}
\fi
\newpage
\appendix 
\onecolumn

\section{An Auction Algorithm for Maximum Cardinality Bipartite Matching~\cite{ALT21}}\label{sec:mcm}
This paper focuses on auction-based algorithms for various maximum matching problems. 
Traditionally, the exact versions of the 
maximum cardinality bipartite matching (MCM), the maximum weight bipartite
matching (MWM), and the maximum cardinality bipartite $b$-matching (\mcbm) %
problems have been solved using maximum flow or the Hungarian method.
The starting point for this paper is the auction-based algorithm of Assadi, Liu, and Tarjan~\cite{ALT21}. 
We first give a brief overview of their algorithm as well as the framework for their analysis. 
We then show extensions of their framework into the more general domains of bipartite $b$-matching (\mcbm) and
maximum weight bipartite matching (\mwm).

\paragraph{Auction-Based \mcm Algorithm (\cite{ALT21})}
The auction-based algorithm of Assadi, Liu, and Tarjan works as follows. Given a bipartite input graph $G = (L \cup R, E)$,
the \defn{bidders} in $L$ bid on the \defn{items} in $R$ in $\ceil{\frac{2}{\eps^2}}$ \defn{rounds} of bidding. 
Initially, items $j \in R$ are given prices of $p_j \leftarrow 0$. In each round, all bidders who are not matched to items 
compute a \defn{demand set}. The demand set $D_i$ of a bidder $i \in L$ consists of the lowest price neighbors of $i$ whose 
prices are less than $1$. In other words, $D_i \defined \arg\min_{j \in N(i), p_j < 1} \left(p_j\right)$. After all 
unmatched bidders determine their demand set, they create an induced subgraph consisting of all unmatched bidders and their
demand sets. In this induced subgraph, they find an arbitrary maximal matching $M$. Using this maximal matching, the items are
re-matched to new bidders. Suppose $(i, j) \in M$ is an edge in the maximal matching and $(i, a_i)$ is the tuple representing
the bidder $i$ and its matched item $a_i$. If $a_i = \bot$, then $i$ is unmatched. For each $(i, j) \in M$, they set $a_i = j$
and $a_{i'} = \bot$ where $i'$ is the previous bidder which was matched to $j$. 
Then, the price for $j$ increases
by $\eps$ as in $p_j \leftarrow p_j + \eps$. This entire process repeats for $\ceil{\frac{2}{\eps^2}}$ rounds and the 
resulting maximum matching out of all rounds is returned.

\paragraph{Analysis} The analysis of their algorithm consists of two key components: a notion of \defn{happy} bidders and potential
functions for unhappy bidders. Happy bidders are those whose \defn{utility} does not increase by more than $\eps$ if they were to be matched
to a different item. \defn{Unhappy} bidders, on the other hand, are those whose utility can increase by more than $\eps$ if 
they were matched to a different item. 
Such a notion is important when comparing the matching obtained by the auction-based algorithm against the 
optimum MCM. The \defn{utility} of a bidder $i$ is defined to be $u_i = 1 - p_{a_i}$ if $a_i \neq \bot$. Otherwise, if $a_i = \bot$, then
the utility of $i$ is $0$. Specifically, the notion of $\eps$-happy is defined to be the following:

\begin{definition}[$\eps$-Happy~\cite{ALT21}]\label{def:eps-happy}
    A bidder $i$ is $\eps$-happy if $u_i \geq 1 - p_j - \eps$ for every $j \in R$. 
\end{definition}

They show that if at least $(1-\eps)|\opt|$
of the bidders in $\opt$ (where $\opt$ is the maximum cardinality matching and $|\opt|$ is the cardinality of this matching)
are $\eps$-happy then their obtained matching is a $(1-2\eps)$-approximate \mcm. Intuitively, this is due to two facts. First, 
the following invariant is maintained.

\begin{invariant}[Non-Zero Price Matched~\cite{ALT21}]\label{inv:nonzero-matched}
    Any item $j$ with positive price $p_j > 0$ is matched.
\end{invariant}

Second, the next invariant is also maintained.

\begin{invariant}[Maximum Utility~\cite{ALT21}]\label{inv:max-utility}
 The total utility of all bidders is at most the cardinality of the matching minus the total price of the items.
\end{invariant}

These two invariants allow them to show, via the following calculation,
the desired approximation factor, assuming at least $(1-\eps)|\opt|$ of the bidders in $\opt$ are happy:

\begin{align}
    |M|  - \sum_{j \in R} p_j &\geq \sum_{i \in L} u_i \geq \sum_{i \in \opt \cap \happy} 1 - p_{o_i} - \eps \label{eq:happy-1}\\
    |M| - \sum_{j \in R} p_j &\geq (1-\eps)|\opt| - \sum_{i \in \opt \cap \happy} p_{o_i} - \sum_{i \in \opt \cap \happy} \eps \label{eq:happy-2}\\
    |M| - \sum_{j \in R} p_j &\geq (1-\eps)|\opt| - \eps|\opt| - \sum_{i \in \opt \cap \happy} p_{o_i} \label{eq:happy-3}\\
    |M| &\geq (1-2\eps)|\opt| \label{eq:happy-4}.
\end{align}

In the above equations, $\happy$ is the set of happy bidders in $L$ and $o_i$ is the item matched to bidder $i$ in $\opt$.
\cref{eq:happy-1} follows from~\cref{inv:max-utility} and the definition of happy (\cref{def:eps-happy}).
\cref{eq:happy-2} simplifies $\sum_{i \in \opt \cap \happy} 1 \geq (1-\eps)|\opt|$ by the assumption. 
\cref{eq:happy-3} follows since $\sum_{i \in \opt \cap \happy} \eps \leq \eps |\opt|$.
Finally, they obtain~\cref{eq:happy-4} using~\cref{inv:nonzero-matched} which implies that $\sum_{j \in R} p_j \geq \sum_{i \in \opt \cap \happy} p_{o_i}$.

Now, the only thing that remains to be shown is that in $\ceil{\frac{2}{\eps^2}}$ total rounds, there exists at least one round 
where $\geq (1-\eps)|\opt|$ of the bidders in $\opt$ are $\eps$-happy. They argue this through a clean and simple potential function argument.
They define two potential functions (below) that ensure that for each unhappy bidder $i$ that is also in $\opt$, the potential of 
one of these potential functions increases by $\eps$ for each round the bidder is unhappy:

\begin{align}
    \Pi_{items} &\defined \sum_{j \in R} p_j\label{pot:item}\\
    \Pi_{bidders} &\defined \sum_{i \in \opt} \min_{j \in N(i), p_j < 1} \left(p_{i}\right).\label{pot:bidders}
\end{align}

Both of the potential functions above are upper bounded by $|\opt|$. Otherwise, a higher potential implies a solution with larger
cardinality than $\opt$, a contradiction to the optimality of $\opt$. Thus, since each unhappy bidder increases the potential of at least
one of these potential functions by $\eps$, the total increase in potential when at least $\eps|\opt|$ of the bidders in $\opt$ are unhappy
is at least $\eps \cdot \eps|\opt|$.
Then, the total number of rounds necessary before they obtain at least one round where at least
$(1-\eps)|\opt|$ of bidders in $\opt$ are happy is upper bounded by $\ceil{\frac{2|\opt|}{\eps \cdot \eps|\opt|}} = \ceil{\frac{2}{\eps^2}}$.

\section{Gupta-Peng~\cite{GP13} Transformation}\label{app:guptapeng}

We state modified versions of the Gupta-Peng~\cite{GP13} transformation in this section that can be applied
to the distributed, parallel, and streaming settings. Our transformations are almost identical to the analysis 
given by~\cite{GP13} and we encourage interested readers to refer to the original work for the 
original analyses and to~\cite{BDL21} for adaptations to some of the different settings. For completeness and to
make our paper self-contained, we include all relevant proofs in this paper. 
The purpose of the transformation is to take an algorithm which obtains an $(1-\eps)$-approximate maximum weighted
matching with a complexity measure that has a polynomial dependency on the maximum weight in the input graph and 
convert it into an algorithm with some greater dependency on the approximation parameter $\eps > 0$ and polylogarithmic 
dependency on the maximum weight in the graph.
The transformation works by maintaining several versions of a blackbox $(1-\eps)$-approximate maximum weighted
matching algorithm on smaller instances of the problem to obtain a $(1-\eps)$-approximate maximum weighted matching
algorithm with the desired new complexity bounds. 

For the remainder of this section, to be consistent
with the notation used in~\cite{GP13}, we refer
to the approximations as ``$(1+\eps)$-approximations''. 
Such approximations can be easily converted to 
$(1-\eps)$-approximations used as our notation for the 
rest of this paper.
The transformation proceeds as follows. We first define some notation used to describe the algorithm. Let an edge $e = (u, v)$
be in level $\ell$ if its weight is in a certain
range to be determined later.
Then, let $\hatM_{\ell}$ be a matching found for level $\ell$
by a $(1+\eps)$-approximate maximum weighted matching
algorithm. Then, the approximate matching for the entire 
graph is produced by iterating from the largest
$\ell$ to the smallest $\ell$ and greedily choose
edges in $\hatM_{\ell}$ to add to the matching $\hat{M}$
as long as the chosen edge is not adjacent to any 
endpoint of an edge in $\hat{M}$.
Let $\mathcal{R}(e)$ for an edge $e = (u, v)$ be defined as $\mathcal{R}(e) = \{e\} \cup \{(x, y) \mid (x, y) \in \hatM_{\ell'} \text{ where }
\ell' < \ell, \text{ and } \{x, y\} \cap \{u, v\} \neq \emptyset\}$ or, in other words, $\mR(e)$ is the set of 
edges that contain $e$ and all edges from lower levels that 
are part of the matchings in the levels but are removed
due to $e$ being added to $\hat{M}$. The weight of edge $e$ is given by $w(e)$.
As in~\cite{GP13}, we overload notation and denote the sum
of the weights of all edges in a set $S$ to be $w(S)$. 

We keep several copies of a data structure that partitions
the edges into levels while omitting different sets of edges
in each copy. For each copy, we maintain \emph{buckets}
consisting of edges and each \emph{level} consists of 
a set of buckets. An edge $e$ is in bucket $b$ if
$w(e) \in [\eps^{-b}, \eps^{-(b + 1)})$. Then, each level consists of $C -1$ continuous buckets where 
$C = \ceil{\eps^{-1}}$. We maintain $C$ copies of our
graph. In the $c$-th copy where $c \in [C]$, we remove
the edges in all buckets $i$ where $i \mod C = c$. 
Then, each level $\ell$ in copy $c$ contains buckets 
in the range $b \in [\ell \cdot C + c + 1, \dots, (\ell + 1) \cdot C + c - 1]$ which means that the ratio
the maximum weight edge and the minimum weight 
edge is any level is bounded by $\frac{\eps^{-((\ell+1)\cdot C + c)}}{\eps^{-(\ell \cdot C + c + 1)}} = \eps^{-(C - 1)} = \eps^{-O\left(\eps^{-1}\right)}$. Let $\hat{M}^c$ be the 
approximate matching computed for copy $c$. Then,
we denote copy $c$'s structures for $\hatM_{\ell}$, $\mM_{\ell}$, and $\mR(e)$ by $\hatM^c_{\ell}$ and $\mM^c_{\ell}$, and $\mR^c(e)$, respectively. 

We first prove the following lemma about the total weight
of all edges in $\mR^c(e)$ compared to the weight of $e$.

\begin{lemma}[Lemma 4.7 of~\cite{GP13}]\label{lem:bounding-approx-matching}
For any edge in $\hat{M}^c$, it holds that
\begin{align*}
    w\left(\mR^c(e)\right) \leq (1+3\eps)w(e),
\end{align*}
when $\eps < 1/2$.
\end{lemma}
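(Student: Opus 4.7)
The plan is to decompose $\mR^c(e)$ as $\{e\}$ together with its intersection with each $\hat{M}^c_{\ell'}$ for $\ell' < \ell$, where $\ell$ is the level of $e$ in copy $c$, and then turn this decomposition into a weight bound using the bucket structure. Since each $\hat{M}^c_{\ell'}$ is a matching, each endpoint of $e = (u,v)$ is incident to at most one of its edges, so $|\mR^c(e) \cap \hat{M}^c_{\ell'}| \leq 2$ for every $\ell' < \ell$. This is the only way the matching property enters the argument.

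Next I would use the level-to-weight correspondence to turn this cardinality bound into a weight bound. Recall that in copy $c$, level $\ell$ spans the buckets $\ell C + c + 1, \ldots, (\ell+1) C + c - 1$, with bucket $\ell C + c$ deliberately omitted. Hence $w(e) \geq \eps^{-(\ell C + c + 1)}$, whereas any $f \in \hat{M}^c_{\ell - k}$ with $k \geq 1$ satisfies $w(f) < \eps^{-((\ell - k + 1) C + c)}$, so $w(f)/w(e) < \eps^{(k-1) C + 1}$. The omitted bucket between consecutive levels is crucial: it is precisely what supplies the extra $+1$ in the exponent and yields geometric decay with ratio $\eps^C$ rather than only $\eps^{C-1}$.

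Combining the two ingredients,
\[
w(\mR^c(e)) \;\leq\; w(e) + 2\, w(e) \sum_{k \geq 1} \eps^{(k-1)C + 1} \;=\; w(e)\left(1 + \frac{2\eps}{1 - \eps^C}\right).
\]
Since $\eps < 1/2$ forces $C = \ceil{\eps^{-1}} \geq 3$ and therefore $\eps^C \leq 1/8$, the bracketed factor is at most $1 + 16\eps/7 < 1 + 3\eps$, which is the desired inequality.

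The main obstacle, such as it is, is bookkeeping rather than a genuine conceptual hurdle: one must ensure the bucket arithmetic is tight enough to extract the constant $3$ instead of a weaker $4$. Without exploiting the omitted bucket between levels the geometric sum would only be bounded by $\frac{2}{1 - \eps}$, giving $(1 + 4\eps)$; so the proof must use both the matching structure at each lower level (for the factor of $2$) and the skipped bucket (to make $\eps^C$ small enough to absorb the remaining slack under the hypothesis $\eps < 1/2$).
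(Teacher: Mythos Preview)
Your proof is correct and follows essentially the same route as the paper's own argument: bound the contribution from each lower level by two edges via the matching property, use the bucket arithmetic to get the ratio $\eps^{(k-1)C+1}$, sum the geometric series to obtain $1 + \tfrac{2\eps}{1-\eps^C}$, and then invoke $\eps < 1/2$ (hence $C \geq 3$ and $\eps^C \leq 1/8$) to conclude the factor is at most $1+3\eps$. Your write-up is slightly more explicit about the final numerical step and about the role of the omitted bucket, but the structure and the estimates are identical.
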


\begin{proof}
    Let $\ell$ be the level that $e$ is on. Then, each
    level $\ell' < \ell$ contains at most two edges that
    are incident to an endpoint of $e$. The maximum weight of
    any edge in level $\ell'$ is $\eps^{-((\ell' + 1)\cdot C
    + c)}$. Furthermore, edge $e$ has at least 
    $\eps^{-(\ell \cdot C + c + 1)}$ weight. Thus, we 
    can upper bound $w(\mR^c(e))$ by

    \begin{align*}
        w(\mR^c(e)) &\leq w(e) + \sum_{\ell' < \ell} 2\eps^{-((\ell' + 1)\cdot C + c)}\\
        &\leq w(e) + \sum_{\ell' < \ell} 2\eps^{-((\ell' - \ell + 1)\cdot C - 1)} \cdot \eps^{-(\ell \cdot C + c + 1)}\\
        &= w(e) + \sum_{\ell' < \ell} 2\eps^{-((\ell' - \ell + 1)\cdot C - 1)} \cdot w(e)\\
        &\leq w(e) + \frac{2\eps \cdot w(e)}{1-\eps^C}\\
        &\leq w(e)(1+3\eps).
    \end{align*}
\end{proof}

Now, we show the relation between $\hat{M}^c$ and $\mM^c$;
in particular, we show that $\hat{M}^c$ is close to $\mM^c$
in size up to a small multiplicative factor.

\begin{lemma}[Lemma 4.8 of~\cite{GP13}]\label{lem:copy-orig}
    Let $\hat{M}^c$ be the approximation produced by our 
    transformation and $\mM^c$ be a maximum weighted matching in copy $c$,
    then $(1+7\eps)w(\hat{M}^c) \geq w(\mM^c)$.
\end{lemma}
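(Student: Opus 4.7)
The plan is to compare $w(\hat{M}^c)$ to $w(\mM^c)$ in two stages: first relate $w(\mM^c)$ to the total weight of the per-level approximate matchings $\sum_{\ell} w(\hat{M}^c_\ell)$, and then relate that sum to $w(\hat{M}^c)$ using the charging bound from Lemma 4.7.

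First I would partition $\mM^c$ by level. Each edge of $\mM^c$ lies in exactly one level of copy $c$, so $\mM^c = \bigcup_{\ell} \mM^c_\ell$ where $\mM^c_\ell \defined \mM^c \cap E_\ell$. Since $\mM^c_\ell$ is itself a matching among the edges of level $\ell$, and $\hat{M}^c_\ell$ is a $(1+\eps)$-approximate maximum weighted matching on those edges, we have $w(\mM^c_\ell) \leq (1+\eps)\, w(\hat{M}^c_\ell)$. Summing over levels yields
\[
 w(\mM^c) \;\leq\; (1+\eps) \sum_{\ell} w(\hat{M}^c_\ell).
\]

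Next I would show $\sum_{\ell} w(\hat{M}^c_\ell) \leq \sum_{e \in \hat{M}^c} w(\mR^c(e))$ via a charging argument. Take any edge $f \in \hat{M}^c_\ell$. If $f \in \hat{M}^c$, then $f \in \mR^c(f)$ trivially. Otherwise, the greedy top-down procedure rejected $f$; since edges in $\hat{M}^c_\ell$ cannot conflict with each other (they form a matching on level $\ell$), the blocker must be some $e \in \hat{M}^c$ at a strictly higher level sharing an endpoint with $f$, and so $f \in \mR^c(e)$ by definition of $\mR^c$. Hence every edge of every per-level matching is accounted for (possibly with multiplicity) in the family $\{\mR^c(e)\}_{e \in \hat{M}^c}$, which gives the desired inequality. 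Combining this with Lemma 4.7 ($w(\mR^c(e)) \leq (1+3\eps) w(e)$ for every $e \in \hat{M}^c$) produces
\[
 \sum_{\ell} w(\hat{M}^c_\ell) \;\leq\; \sum_{e \in \hat{M}^c} w(\mR^c(e)) \;\leq\; (1+3\eps)\, w(\hat{M}^c).
\]

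Chaining the two bounds gives $w(\mM^c) \leq (1+\eps)(1+3\eps)\, w(\hat{M}^c) = (1 + 4\eps + 3\eps^2)\, w(\hat{M}^c)$, and for $\eps < 1$ this is at most $(1+7\eps)\, w(\hat{M}^c)$, proving the lemma. The only subtle step is the charging argument, and the main thing to verify carefully is that a rejected edge from $\hat{M}^c_\ell$ is indeed blocked by a strictly higher level edge in $\hat{M}^c$ (not a same-level edge) — this follows because $\hat{M}^c_\ell$ is internally conflict-free, so any conflict must originate from an edge added earlier in the top-down greedy sweep, which by construction sits at a higher level.
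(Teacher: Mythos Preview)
Your proof is correct and follows essentially the same approach as the paper: bound $w(\mM^c)$ by $(1+\eps)\sum_\ell w(\hat{M}^c_\ell)$ using the per-level approximation guarantee, then charge every edge of $\bigcup_\ell \hat{M}^c_\ell$ to some $\mR^c(e)$ with $e \in \hat{M}^c$ and apply Lemma~4.7. Your explicit justification that a rejected edge must be blocked by a strictly higher-level edge (because $\hat{M}^c_\ell$ is itself a matching) makes the charging step slightly cleaner than the paper's phrasing, but the argument is the same.
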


\begin{proof}
    By our algorithm, each $\hatM^{c}_{\ell}$ is a 
    $(1+\eps)$-approximate weighted matching of 
    $\mM^c_{\ell}$. Then, we have:

    \begin{align*}
        w(\mM^c_{\ell}) &\leq (1+\eps)w(\hatM^c_{\ell})\\
        w(\mM^c) &\leq (1+\eps)\sum_{\ell} w(\hatM^c_{\ell}).
    \end{align*}

    Consider an edge $e = (u, v) \in \hatM^{c}_{\ell}$, then
    either: $e \in \hatM^{c}$ and $e \in \mR^c(e)$ or 
    $e \not\in M^{c}$ and $e \in \mR^c(e')$ and/or 
    $e \in \mR^c(e'')$ where $u \in e'$ and $v \in e''$
    and $e', e'' \in \hat{M}^c$.
    This means that each $e$ is mapped to at least one
    $\mR^c(e')$ for at least one edge $e' \in \hat{M}^c$.
    Then, it holds that 

    \begin{align*}
        w(R^{c}(\hat{M}^c)) &\geq \sum_{\ell} w(\hatM^c_{\ell})\\
        (1+\eps) \cdot w(R^{c}(\hat{M}^c)) &\geq (1+\eps) \cdot \sum_{\ell} w(\hatM^c_{\ell})\\
        (1+\eps) \cdot  w(R^{c}(\hat{M}^c)) &\geq w(\mM^c).
    \end{align*}

    Combining the above with~\cref{lem:bounding-approx-matching} gives

    \begin{align*}
        w(\mM^c) \leq (1+\eps) \cdot w(R^{c}(\hat{M}^c)) &\leq (1+3\eps) \cdot (1+\eps) \cdot \sum_{e \in \hatM^c} w(e) = (1+3\eps)\cdot (1+\eps) \cdot w(\hatM^c) \leq (1+7\eps) \cdot w(\hatM^c).
    \end{align*}
\end{proof}

We now show that there is at least one copy $c$ where 
$w(\mM^c) \geq (1-1/C) \cdot w(\mM)$.

\begin{lemma}[Lemma 4.9 of~\cite{GP13}]\label{lem:ratio-copy}
    There exists a copy $c$ such that $w(\mM^c) \geq (1-1/C) \cdot w(\mM)$.
\end{lemma}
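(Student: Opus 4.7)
The plan is to use a straightforward averaging (pigeonhole) argument over the $C$ copies. For each copy $c \in [C]$, let $\mM|_c$ denote the sub-matching of $\mM$ obtained by keeping only those edges of $\mM$ whose bucket index $b$ satisfies $b \bmod C \neq c$. Since $\mM|_c$ uses only edges present in copy $c$ and is a sub-matching of $\mM$, it is itself a valid matching in copy $c$, so $w(\mM^c) \geq w(\mM|_c)$. Hence it suffices to show that some $c$ achieves $w(\mM|_c) \geq (1 - 1/C) \cdot w(\mM)$.

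Next I would compute the sum $\sum_{c=1}^{C} w(\mM|_c)$ by swapping the order of summation. Every edge $e \in \mM$, lying in some bucket $b(e)$, is excluded from exactly one copy, namely $c = b(e) \bmod C$, and retained in the remaining $C-1$ copies. Therefore
\[
\sum_{c=1}^{C} w(\mM|_c) \;=\; \sum_{e \in \mM} (C-1)\, w(e) \;=\; (C-1)\, w(\mM).
\]
By an averaging argument, there must exist some $c^{\ast} \in [C]$ for which
\[
w(\mM|_{c^{\ast}}) \;\geq\; \frac{C-1}{C}\, w(\mM) \;=\; \left(1 - \tfrac{1}{C}\right) w(\mM),
\]
and combining with $w(\mM^{c^{\ast}}) \geq w(\mM|_{c^{\ast}})$ yields the claim.

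There is no real obstacle here; the only subtlety is making sure that (i) each edge in $\mM$ is removed from exactly one of the $C$ copies (which follows directly from the construction that copy $c$ drops precisely those buckets with $b \bmod C = c$), and (ii) that $\mM|_c$ remains a valid matching in copy $c$, which is immediate since sub-matchings of a matching are matchings and all its edges are present in copy $c$ by construction.
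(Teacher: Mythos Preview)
Your proof is correct and follows essentially the same averaging argument as the paper: the paper defines $\bar{M}^c$ as the edges of $\mM$ removed in copy $c$ (your $\mM \setminus \mM|_c$), observes that $\sum_c w(\bar{M}^c) = w(\mM)$ since each edge is dropped from exactly one copy, and then averages $w(\mM^c) \geq w(\mM) - w(\bar{M}^c)$ over all $c$ to reach the same conclusion.
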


\begin{proof}
    Let $\bar{M}^c$ denote the set of edges in $\mM$
    that are not present in the $c$-th copy. By our algorithm,
    each bucket is removed in exactly one copy. Then, it holds
    that 

    \begin{align*}
        \bigcup_{c} \bar{M}^c &= \mM \\
        \sum_{c} w(\bar{M}^c) &= w(\mM) 
    \end{align*}

    Since $\mM \setminus \bar{M}^c$ is a matching in 
    the $c$-th copy, we have that $w(\mM^c) \geq w(\mM) - 
    w(\bar{M}^c)$. We can sum over all $c$ copies to 
    obtain

    \begin{align*}
        \sum_{c} w(\mM^c) &\geq \sum_{c} \left(w(\mM) - w(\bar{M}^c)\right)\\
        &= C \cdot w(\mM) - \left(\sum_{c} w(\bar{M}^c)\right)\\
        &= (C - 1) \cdot w(\mM).
    \end{align*}

    This means that the average of $w(\mM^c)$ is at least
    $(1-1/C) \cdot w(\mM)$ and so there must 
    exist at least one copy $c$ where $w(\mM^c) \geq (1-1/C) \cdot w(\mM)$.
\end{proof}

Combining the above, we obtain our final theorem.

\begin{theorem}[Modified from Theorem 4.10 of~\cite{GP13}]
    For any $\eps \in (0, 1/2)$, 
    the Gupta-Peng transformation 
    produces a $(1+\eps)$-approximate MWM by running
    $O\left(\frac{\log_{(1/\eps)} (W)}{\eps}\right)$ copies of 
    a $(1+\eps')$-approximate MWM algorithm on graphs
    with maximum weight ratio $W = \eps^{-O(\eps^{-1})}$.
\end{theorem}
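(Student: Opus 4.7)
The plan is to combine \cref{lem:bounding-approx-matching,lem:copy-orig,lem:ratio-copy} into a single approximation bound, then do bookkeeping on the number of blackbox invocations and on the weight ratio faced by each.

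First I would output the matching $\hatM = \arg\max_{c \in [C]} w(\hatM^c)$ across all $C = \ceil{1/\eps}$ copies. By \cref{lem:ratio-copy}, some copy $c^\star$ satisfies $w(\mM^{c^\star}) \geq (1 - 1/C)\, w(\mM) \geq (1-\eps)\, w(\mM)$. Applying \cref{lem:copy-orig} to that copy gives $(1+7\eps')\, w(\hatM^{c^\star}) \geq w(\mM^{c^\star})$, where $\eps'$ is the internal approximation parameter fed to the blackbox MWM routine (used in place of $\eps$ in the proof of \cref{lem:copy-orig}). Chaining these and using $w(\hatM) \geq w(\hatM^{c^\star})$ yields $w(\hatM) \geq \frac{1-\eps}{1+7\eps'}\, w(\mM)$, so choosing $\eps' = \Theta(\eps)$ (say $\eps/16$) converts this into a $(1+\eps)$-approximation in the sense of the theorem after the usual renaming of $\eps$ by a constant factor.

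Next I would bound the weight ratio seen by any single blackbox invocation. Level $\ell$ of copy $c$ contains buckets indexed in $[\ell C + c + 1,\ (\ell+1)C + c - 1]$, so the ratio between the largest and smallest edge weights present in that level is at most $\eps^{-((\ell+1)C + c - 1)} / \eps^{-(\ell C + c + 1)} = \eps^{-(C-2)} = \eps^{-O(1/\eps)}$, matching the $W = \eps^{-O(\eps^{-1})}$ claim. This is the key structural payoff of the transformation: the input graph can have arbitrarily many weight classes, but every subproblem handed to the blackbox has weight spread polynomial in $1/\eps$. To count invocations, note that the input occupies at most $\ceil{\log_{(1/\eps)} W_{\mathrm{input}}}$ buckets; in each of the $C$ copies these are grouped $C-1$ at a time into levels, producing $O(\log_{(1/\eps)} W_{\mathrm{input}}/C)$ levels per copy and hence $C \cdot O(\log_{(1/\eps)} W_{\mathrm{input}}/C) = O(\log_{(1/\eps)} W_{\mathrm{input}})$ blackbox calls overall, which sits comfortably inside the $O(\log_{(1/\eps)} W_{\mathrm{input}}/\eps)$ budget in the statement.

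The only real subtlety I anticipate is tracking the constant-factor rescaling between the target accuracy $\eps$ and the internal parameter $\eps'$: since $C$, the per-level weight ratio, the number of copies, and the final approximation bound are all polynomially coupled through the same small parameter, I must verify that replacing $\eps'$ with a sufficiently small constant multiple of $\eps$ preserves every stated asymptotic bound simultaneously. This is straightforward because $\log_{(1/\eps')} W_{\mathrm{input}}$ and $\eps'^{-O(1/\eps')}$ each change only by constants when $\eps'$ changes by a constant factor, so no new ideas beyond \cref{lem:bounding-approx-matching,lem:copy-orig,lem:ratio-copy} are required.
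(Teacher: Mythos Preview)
Your proposal is correct and follows essentially the same approach as the paper: combine \cref{lem:copy-orig} and \cref{lem:ratio-copy} to get a $(1+O(\eps))$-approximation from the best copy, observe that each level has weight ratio $\eps^{-O(1/\eps)}$, and count the number of blackbox calls. Your count of blackbox invocations is in fact slightly tighter than the paper's (you divide by $C$ when grouping buckets into levels, arriving at $O(\log_{(1/\eps)} W)$ total calls, whereas the paper simply multiplies $C$ copies by $O(\log_{(1/\eps)} W)$ buckets per copy to get the looser $O(\log_{(1/\eps)} W/\eps)$ stated in the theorem), but you correctly note this sits inside the claimed budget.
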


\begin{proof}
    There are at most $C = O(\eps^{-1})$ copies of the graph
    and in each copy that are at most 
    $O(\log_{(1/\eps)} (W))$ buckets. 
    We showed that in each level the weight ratio is upper
    bounded by $\eps^{-O(\eps^{-1})}$. Hence, we run 
    $O\left(\frac{\log_{(1/\eps)} (W)}{\eps}\right)$ 
    copies of our 
    baseline approximation algorithm on graphs with 
    weight ratios at most $\eps^{-O(\eps^{-1})}$.

    Combining~\cref{lem:copy-orig,lem:ratio-copy}, we 
    get 
    \begin{align*}
        (1+7\eps)\cdot w(\hat{M}^c) &\geq w(\mM^c) \geq 
        (1-1/C) \cdot w(\mM)\\
        \frac{1+7\eps}{1-1/C} \cdot w(\hat{M}^c) &\geq 
        w(\mM)\\
        \frac{1+7\eps}{1-\eps} \cdot w(\hat{M}^c) &\geq 
        w(\mM)\\
        (1+16\eps) \cdot w(\hat{M}^c) &\geq w(\mM).
    \end{align*}

    The final inequality holds since $\frac{1+7\eps}{1-\eps} \leq 1+16\eps$ when $\eps \in [0, 1/2]$.
    For any $(1+\eps')$-approximate MWM for $\eps' \in (0, 1/2)$, we can set $\eps$ to be appropriately small to 
    obtain that approximation.
\end{proof}

\subsection{Extensions of Gupta-Peng Transformation to Other
Models}

For the distributed and streaming settings, we use the 
transformations of Bernstein \etal~\cite{BDL21} and restate the key theorems in their paper. For the shared-memory
parallel and massively parallel computation settings, we give short proofs of how to adapt their
transformation for our settings. 

Let $W$ again be the maximum ratio between the largest weight edge and the smallest weight edge in the input graph. 
For the below theorems, whenever we write $\log_{(1/\eps)} (W)$, we assume the base of the logarithm is $1/\eps$.
The following two (modified) transformations are inspired by Bernstein \etal~\cite{BDL21}. For completeness, we present
the proofs of these transformations using our description
of the Gupta-Peng transformation above.

\begin{theorem}\label{thm:streaming-transformation}
    Given a $P(n, m, W, \eps)$-pass semi-streaming algorithm $\alg$ that computes an $(1+\eps)$-approximate maximum weight
    matching in a graph with maximum edge weight ratio $W$ and uses space $S(n, m, W, \eps)$, then 
    there exists either:
    \begin{enumerate}
        \item a $\frac{P(n, m, f(\eps), \eps)}{\eps}$-pass semi-streaming algorithm $\alg'$ that computes an $(1+16\eps)$-approximate
        maximum weight matching algorithm using $O(S(n, m, f(\eps), \eps) \cdot \log_{(1/\eps)} W)$ space,
        \item or a $P(n, m, f(\eps), \eps)$-pass semi-streaming algorithm $\alg'$ that computes an $(1+16\eps)$-approximate
        maximum weight matching algorithm using $O\left(\frac{S(n, m, f(\eps), \eps) \cdot \log_{(1/\eps)} (W)}{\eps}\right)$ space,
    \end{enumerate}
    where $f(\eps)$ is some function of $\eps$ and is
    independent of $n$ and $m$.
\end{theorem}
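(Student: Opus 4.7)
The plan is to apply the Gupta--Peng transformation described above directly in the streaming setting. Concretely, I would maintain $C = \ceil{1/\eps}$ copies of the input graph, where in copy $c$ we discard every edge whose weight bucket index $b$ satisfies $b \bmod C = c$. Within each surviving copy, edges are grouped into \emph{levels}, where level $\ell$ of copy $c$ contains exactly the buckets $b \in [\ell C + c + 1, (\ell+1)C + c - 1]$. As shown in the proof of the main Gupta--Peng theorem, each level has maximum weight ratio bounded by $f(\eps) \defined \eps^{-O(1/\eps)}$, independent of $W$, and the number of levels per copy is $O(\log_{(1/\eps)}(W))$. After computing a $(1+\eps)$-approximate MWM $\hatM^c_\ell$ in every level, I would recover $\hat{M}^c$ by the greedy top-down procedure from the transformation and return the heaviest $\hat{M}^c$ over $c \in [C]$; by \cref{lem:copy-orig,lem:ratio-copy} this is a $(1+16\eps)$-approximate MWM.

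To realize this in semi-streaming, the key observation is that bucket membership of an edge is a function of its weight alone, so as each edge arrives in the stream I can route it on-the-fly to the correct level of the correct copy (or discard it) without additional passes. I would then run the given black-box algorithm $\alg$ \emph{simultaneously} on all levels that are currently active, feeding each level only its own relabeled substream. Because every level has maximum weight ratio $f(\eps)$, a single instance of $\alg$ on level $\ell$ of copy $c$ uses $P(n,m,f(\eps),\eps)$ passes and $S(n,m,f(\eps),\eps)$ space. The final combination step over the $\hatM^c_\ell$'s is purely post-processing on objects of total size $O(n)$, which is dominated by $S$.

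The two trade-offs in the theorem come from how we schedule the $C \cdot O(\log_{(1/\eps)} W)$ instances of $\alg$. For the pass-efficient version, I would run all $C$ copies, and within each copy all $O(\log_{(1/\eps)} W)$ levels, \emph{in parallel} over the same stream: each arriving edge is delivered to every instance it belongs to, so all instances complete in the same $P(n,m,f(\eps),\eps)$ passes, at a cost of $O\bigl(S(n,m,f(\eps),\eps)\cdot \log_{(1/\eps)}(W)/\eps\bigr)$ space. For the space-efficient version, I would instead process the $C$ copies \emph{sequentially}: for each $c$, run all levels of copy $c$ in parallel in $P(n,m,f(\eps),\eps)$ passes, record the resulting $\hat{M}^c$ (which is $O(n)$ in size and can be retained across iterations), and discard the per-level state before moving on to copy $c+1$. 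This gives total passes $P(n,m,f(\eps),\eps)/\eps$ and peak space $O(S(n,m,f(\eps),\eps)\cdot \log_{(1/\eps)} W)$.

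The main subtlety, rather than a real obstacle, is verifying that sharing the stream across many simultaneous instances of $\alg$ does not inflate space or introduce dependencies among instances: since the level assignment of an edge depends only on its own weight, the $C\cdot O(\log_{(1/\eps)} W)$ substreams are deterministic partitions of the input, and each instance of $\alg$ runs obliviously on its own substream with its own internal state; the space contributions therefore add. A minor second point is ensuring that storing the per-copy matchings $\{\hat{M}^c\}_{c \in [C]}$ (to pick the best at the end in the sequential version) fits within the stated budget, which it does because each $\hat{M}^c$ has at most $n-1$ edges and we have $S(n,m,f(\eps),\eps) = \Omega(n)$ in any nontrivial semi-streaming regime. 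With these pieces in place, plugging the Gupta--Peng approximation guarantee yields the claimed $(1+16\eps)$-approximation under both scheduling choices.
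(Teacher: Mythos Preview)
Your proposal is correct and follows essentially the same approach as the paper: route each edge on-the-fly to its bucket/level in each copy, run $\alg$ on each level (whose weight ratio is $f(\eps)$), and schedule the $C \cdot O(\log_{(1/\eps)} W)$ instances either fully in parallel (giving option~2) or copy-by-copy with levels in parallel (giving option~1). Your write-up is in fact more careful than the paper's about the routing being weight-local and about the post-processing space, but the underlying argument is identical.
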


\begin{proof}
    When an edge passes in the stream, we calculate which
    buckets the edge belongs to in each of the copies and
    run $\alg$ on the edge assuming the edge is a new 
    edge in the stream for the corresponding bucket and 
    copy. Since we have $O\left(1/\eps\right)$ copies
    and $O(\log_{(1/\eps)} (W))$ buckets in each copy, we either:
    
    \begin{enumerate}
        \item increase the space bound by a factor of 
        $O\left(\frac{\log_{(1/\eps)} (W)}{\eps}\right)$ because 
        we keep each bucket and each copy in memory as 
        a separate instance of $\alg$, or
        \item increase the number of passes of our algorithm
        by a $O(1/\eps)$ factor where in each of the sets of 
        $P(n, m, f(\eps), \eps)$ passes, we compute the solution for
        each of the $O(\log_{(1/\eps)} (W))$ buckets as a separate 
        instance of $\alg$, increasing the space bound
        by a factor of $O(\log_{(1/\eps)} (W))$.
    \end{enumerate}

    Once we have all of the solutions for each of the buckets,
    computing the final approximate maximum matching
    can be done in memory (without using additional passes).
\end{proof}

\begin{theorem}\label{thm:distributed-transformation}
If there exists an $A(n, m, W, \eps)$-round blackboard distributed broadcast protocol $\alg$ that computes an $(1+\eps)$-approximate
maximum weight matching in a graph with maximum edge weight ratio $W$ with communication complexity 
$C(n, m, W, \eps)$ in bits, then there exists either:

\begin{enumerate}
    \item a $\frac{A(n, m, f(\eps), \eps)}{\eps}$-round 
    distributed broadcast protocol $\alg'$ that computes a $(1+16\eps)$-approximate
    maximum weight matching using $O(C(n, m, f(\eps), \eps) \cdot \log_{(1/\eps)}(W))$ bits of communication,
    \item or a $\left(A(n, m, f(\eps), \eps)\right)$-round distributed broadcast protocol $\alg'$ that computes a $(1+16\eps)$-approximate
    maximum weight matching using $O\left(\frac{C(n, m, f(\eps), \eps) \cdot \left(\log_{(1/\eps)} (W)\right)}{\eps}\right)$
    bits of communication,
\end{enumerate}
where $f(\eps)$ is some function of $\eps$.
\end{theorem}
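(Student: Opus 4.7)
The plan is to apply the Gupta--Peng transformation already proved above as a black-box reduction in the blackboard model, using $\alg$ as the base matching subroutine on subgraphs whose weight ratios are small. The first observation is that each player (bidder) can locally bucketize its incident edges with zero communication: player $i$ places edge $(i,j)$ with weight $w_{ij}$ in bucket $b$ whenever $w_{ij} \in [\eps^{-b}, \eps^{-(b+1)})$. We then form $C = \ceil{1/\eps}$ copies of the graph; in copy $c$ each player ignores incident edges whose bucket index is $\equiv c \pmod C$. Within each copy, consecutive groups of $C-1$ buckets form the $O(\log_{(1/\eps)}(W))$ levels, each containing edges with maximum weight ratio at most $f(\eps) = \eps^{-O(1/\eps)}$.

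Next, I would invoke $\alg$ once on each (copy, level) subgraph to obtain the per-level matchings $\hat{M}^c_{\ell}$. The coordinator then performs the greedy combine-from-highest-level procedure described in the transformation to form $\hat{M}^c$ for each copy, and finally outputs the best of the $C$ matchings. The $(1+16\eps)$ approximation factor is inherited directly from the combination of Lemma 4.7 through Lemma 4.9 stated above, since that analysis is purely combinatorial and oblivious to the model of computation.

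The only model-dependent step is scheduling the $O(\log_{(1/\eps)}(W)/\eps)$ instances of $\alg$. For the first guarantee, I would process the $C = O(1/\eps)$ copies sequentially but, within each copy, run all $O(\log_{(1/\eps)}(W))$ level instances in parallel by having every player concatenate its per-level messages into a single blackboard round. This gives $O(A(n,m,f(\eps),\eps)/\eps)$ rounds and multiplies the per-instance communication by $O(\log_{(1/\eps)}(W))$. For the second guarantee, I would run all $O(\log_{(1/\eps)}(W)/\eps)$ instances simultaneously, keeping the round count at $A(n,m,f(\eps),\eps)$ but multiplying communication by $O(\log_{(1/\eps)}(W)/\eps)$.

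The main point to verify is that the parallel simulation of many blackboard protocols is legal: each instance's behaviour must depend only on its own messages and not be contaminated by content from other instances sharing the blackboard. This is immediate because the (copy, level) subgraphs are edge-disjoint and each instance is a standalone execution of $\alg$ on its own subgraph; the coordinator simply maintains independent per-instance transcripts. Ensuring the stated bit counts then reduces to multiplying per-instance communication by the number of instances, and the final greedy combination step is performed entirely at the coordinator with no additional rounds.
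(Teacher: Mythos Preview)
Your proposal is correct and follows essentially the same approach as the paper: bucketize locally, run $\alg$ on each (copy, level) instance, and schedule these instances either copy-by-copy with levels in parallel or fully in parallel to obtain the two stated tradeoffs, with the final greedy combination done at the coordinator. One minor slip: the (copy, level) subgraphs are not edge-disjoint across copies (an edge lies in $C-1$ of the $C$ copies), but your actual justification---that each instance is an independent execution of $\alg$ with its own transcript---is the correct reason and suffices.
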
 

\begin{proof}
    Each endpoint of an edge calculates which
    buckets their adjacent edges belong to in each of the copies and
    run $\alg$ on each adjacent edge assuming the edge is part of the induced
    subgraph for the corresponding bucket and copy. 
    Since we have $O\left(1/\eps\right)$ copies
    and $O(\log_{(1/\eps)} (W))$ buckets in each copy, we either:
    
    \begin{enumerate}
        \item increase the communication complexity by a factor of 
        $O\left(\frac{\log_{(1/\eps)} (W)}{\eps}\right)$ because 
        we compute the maximum matching in each
        separate instance of $\alg$ simultaneously, or
        \item increase the number of rounds of our algorithm
        by a $O(1/\eps)$ factor where in each of the sets of 
        $P(n, m, f(\eps), \eps)$ rounds, we compute the solution for
        each of the $O(\log_{(1/\eps)} (W))$ buckets as a separate 
        instance of $\alg$ and then proceed with the next copy, 
        increasing the communication complexity        
        by a factor of $O(\log_{(1/\eps)} (W))$.
    \end{enumerate}

    Once we have all of the solutions for each of the buckets written on the blackboard, 
    then we can compute $\hat{M}^c$ for each copy and return the maximum among the copies.
\end{proof}

We give the following transformations for the shared-memory parallel and massively parallel computation models.

\begin{theorem}\label{thm:parallel}
If there exists a parallel algorithm $\alg$ that computes an $(1+\eps)$-approximate maximum weight matching 
in a graph with maximum edge weight ratio $W$ with $B(n, m, W, \eps)$ work and $D(n, m, W, \eps)$ depth, then
there exists a $O\left(\frac{W(n, m, f(\eps), \eps)\cdot \log_{(1/\eps)} (W)}{\eps}\right)$ work 
and $O\left(D(n, m, f(\eps), \eps) \cdot \log_{(1/\eps)}(W)\right)$ 
depth parallel
algorithm $\alg'$ that gives a $(1+16\eps)$-approximate maximum weight matching, 
where $f(\eps)$ is some function of $\eps$.
\end{theorem}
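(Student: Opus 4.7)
The plan is to adapt the Gupta--Peng transformation proved earlier for the sequential setting to the shared-memory work-depth model, essentially by running the base algorithm $\alg$ in parallel on each (copy, level) subgraph and then assembling the per-copy matchings greedily. First, I would compute the weight-based partition: for each edge, determine in parallel which bucket it lies in (by its weight) and, for each of the $C = \ceil{1/\eps}$ copies, whether it is excluded from that copy. This is a straightforward parallel scan taking $O(m)$ work and $O(\log n)$ depth, and produces $O\!\left(\frac{\log_{(1/\eps)} W}{\eps}\right)$ (copy, level) induced subgraphs, each with maximum weight ratio $\eps^{-O(1/\eps)} = f(\eps)$.

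Next, I would invoke $\alg$ on each (copy, level) induced subgraph to compute $\hatM^c_\ell$. Summed over all copies and levels, this accounts for the stated work bound of $O\!\left(\frac{B(n,m,f(\eps),\eps) \cdot \log_{(1/\eps)} W}{\eps}\right)$. For depth, I would run the $O(1/\eps)$ copies entirely in parallel with one another; within a single copy I would process the levels sequentially from highest level to lowest, so that the greedy combination into $\hatM^c$ can be interleaved with the base-algorithm calls level by level. Each level contributes depth $D(n,m,f(\eps),\eps)$ plus $O(\log n)$ depth to greedily add edges of $\hatM^c_\ell$ that are not incident to any edge already in $\hatM^c$ (via a parallel scan over $\hatM^c_\ell$ with conflict detection). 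Summing across $O(\log_{(1/\eps)} W)$ levels gives depth $O\!\left(D(n,m,f(\eps),\eps) \cdot \log_{(1/\eps)} W\right)$, matching the theorem statement.

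Finally, I would take the best of the $O(1/\eps)$ matchings $\hatM^c$ by a parallel reduction over their weights, which fits well inside the claimed bounds. The approximation ratio is inherited verbatim from the Gupta--Peng analysis already carried out above: Lemmas \ref{lem:bounding-approx-matching}, \ref{lem:copy-orig}, and \ref{lem:ratio-copy} together guarantee that $(1+16\eps)\cdot w(\hatM^c) \geq w(\mM)$ for some copy $c$, so the returned matching is a $(1+16\eps)$-approximation. The only real obstacle is making sure that the per-level greedy combination step is implemented within the depth budget; resolving this by executing levels sequentially inside each copy (while parallelizing across copies) is the cleanest fit and is what leads to the $\log_{(1/\eps)} W$ factor in the depth rather than an additive term.
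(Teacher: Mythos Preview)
Your proposal is correct and follows essentially the same approach as the paper: run the $O(1/\eps)$ copies in parallel, process the $O(\log_{(1/\eps)} W)$ levels within each copy sequentially (which is what forces the multiplicative $\log_{(1/\eps)} W$ factor in depth), and inherit the approximation guarantee from Lemmas~\ref{lem:bounding-approx-matching}--\ref{lem:ratio-copy}. Your write-up is in fact more detailed than the paper's own proof, which is little more than a sketch; in particular, you make explicit the parallel bucketing step and the per-level greedy combination, both of which the paper leaves implicit.
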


\begin{proof}
To obtain $\alg'$, we maintain each of the $C = O\left(\frac{1}{\eps}\right)$ subgraphs $\{G_1, \dots, 
G_C\}$ of the Gupta-Peng transformation in parallel incurring a factor of $O\left(\frac{\log_{(1/\eps)} (W)}{\eps}\right)$ 
additional total work. The depth is now $O\left(D(n, m, f(\eps), \eps) \cdot \log_{(1/\eps)}(W)\right)$ 
since we now need to compute the matching per 
level sequentially. The 
computation for each subgraph and within the levels in 
each subgraph can be done in parallel and the depth is a function of the maximum ratio of 
weights in the graph in each level which is $f(\eps)$.
\end{proof}

\begin{theorem}\label{thm:mpc}
    If there exists a MPC algorithm $\alg$ that computes an $(1+\eps)$-approximate maximum weight matching 
    in a graph with maximum edge weight ratio $W$ in $R(n, m, W, \eps)$ rounds, $S(n, m, W, \eps)$ space per machine
    and $T(n, m, W, \eps)$ total space, then
    there exists a $O\left(R(n, m, f(\eps), \eps)\right)$ rounds, 
    $O(S(n, m, f(\eps), \eps) + n \cdot \log_{(1/\eps)}(W))$ space per machine, and 
    $O\left(\frac{T(n, m, f(\eps), \eps) \cdot \log_{(1/\eps)} (W)}{\eps}\right)$ total space MPC
    algorithm $\alg'$ that gives a $(1+16\eps)$-approximate maximum weight matching, 
    where $f(\eps)$ is some function of $\eps$.
\end{theorem}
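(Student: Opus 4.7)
The plan is to leverage the parallelism of the MPC model to run all the instances of the baseline algorithm $\alg$ simultaneously on disjoint sets of machines, and then perform the Gupta-Peng reconciliation on a single machine per copy. First I would have every machine holding an edge $e$ compute its bucket index $b$ with $w(e) \in [\eps^{-b}, \eps^{-(b+1)})$, and for each of the $C = O(\eps^{-1})$ copies determine the level $\ell$ (if any) in which $e$ lies; since each level spans at most $C-1$ consecutive buckets, its maximum weight ratio is $\eps^{-O(\eps^{-1})} = f(\eps)$. Standard MPC sorting and prefix-sum primitives in $O(1)$ rounds suffice to route the edges to disjoint collections of machines reserved for each (copy, level) pair.

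Once the edges are distributed, I would run all $O(\log_{(1/\eps)}(W)/\eps)$ instances of $\alg$ in parallel. Because the instances act on disjoint machines and do not interact, the round complexity of this phase is $O(R(n, m, f(\eps), \eps))$, the per-instance total space is $T(n, m, f(\eps), \eps)$, and the per-machine space is $S(n, m, f(\eps), \eps)$; aggregating over all instances gives $O(T(n, m, f(\eps), \eps) \cdot \log_{(1/\eps)}(W)/\eps)$ total space for this phase. After all instances terminate, I would gather, for each copy $c$, the union $\bigcup_\ell \hatM_\ell^c$ onto a single machine. Since each $\hatM_\ell^c$ has at most $n$ edges and there are $O(\log_{(1/\eps)}(W))$ levels per copy, this union has size $O(n \log_{(1/\eps)}(W))$ and fits in the extra per-machine space accounted for by the $n \log_{(1/\eps)}(W)$ additive term in the statement. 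On that machine, I would execute the Gupta-Peng greedy directly, iterating from the highest level to the lowest and accepting each edge of $\hatM_\ell^c$ whose endpoints are still free, producing $\hatM^c$. The final output is the $\hatM^c$ of maximum weight over all $C$ copies, and the $(1+16\eps)$-approximation follows verbatim from \cref{lem:bounding-approx-matching}, \cref{lem:copy-orig}, and \cref{lem:ratio-copy}, whose analyses are model-agnostic.

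The main obstacle, and the reason the per-machine space bound explicitly carries the $n \log_{(1/\eps)}(W)$ additive term, is that the greedy reconciliation across levels is inherently sequential in the level index. A naive parallel implementation that resolves one level per MPC round would incur an additive $O(\log_{(1/\eps)}(W))$ factor in the round complexity and break the stated bound $O(R(n, m, f(\eps), \eps))$, which is critical for the downstream application in \cref{lem:weighted-mpc} where $R$ is only $O(\log\log n/\eps^7)$. Collapsing the reconciliation to a single machine via the $O(n \log_{(1/\eps)}(W))$ aggregation resolves this cleanly in $O(1)$ additional rounds and is the key design choice distinguishing the MPC transformation from the shared-memory parallel and blackboard-distributed variants proved above.
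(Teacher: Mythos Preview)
Your proposal is correct and follows essentially the same approach as the paper: run all $O(\log_{(1/\eps)}(W)/\eps)$ instances of $\alg$ in parallel on disjoint machines, then collect the per-level matchings $\hatM^c_\ell$ for each copy onto a single machine (which accounts for the additive $n\log_{(1/\eps)}(W)$ per-machine space) and execute the greedy level-by-level reconciliation locally. Your explicit discussion of why the greedy step must be collapsed onto one machine, rather than spread over $O(\log_{(1/\eps)}(W))$ rounds, is a nice clarification that the paper's proof leaves implicit.
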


\begin{proof}
    To obtain $\alg'$, we maintain each of the $C = O\left(\frac{1}{\eps}\right)$ subgraphs $\{G_1, \dots, 
    G_C\}$ of the Gupta-Peng transformation in parallel with each level 
    partitioned across machines in the same way as the original algorithm.
    The number of rounds is equal to the number of rounds for any particular instance so it is 
    equal to $O\left(R(n, m, f(\eps), \eps)\right)$ since each instance has maximum weight ratio $f(\eps)$. 
    Since each instance can be handled in parallel by the algorithm, the space per instance is $O(S(n, m, f(\eps), \eps))$. Once the matching
    per level is computed, all of the levels for the same copy are put
    onto one matching. Because each level is a matching and since there
    are $O\left(\log_{(1/\eps)}(W)\right)$ levels. The total space per
    machine that is used is $O\left(n \cdot \log_{(1/\eps)}(W)\right)$.
    The total space is now $O\left(\frac{T(n, m, f(\eps), \eps) \cdot \log_{(1/\eps)}(W)}{\eps}\right)$ since the 
    computation for each subgraph and within the levels in 
    each subgraph can be done in parallel and each requires $T(n, m, f(\eps), \eps)$ total space.
\end{proof}

\bibliographystyle{alpha}
\bibliography{ref}

\newcommand{\etalchar}[1]{$^{#1}$}
\begin{thebibliography}{FKM{\etalchar{+}}05}

\bibitem[AG11]{AG11}
Kook~Jin Ahn and Sudipto Guha.
\newblock Linear programming in the semi-streaming model with application to
  the maximum matching problem.
\newblock In {\em Proceedings of the 38th International Conference on Automata,
  Languages and Programming - Volume Part II}, ICALP'11, page 526–538,
  Berlin, Heidelberg, 2011. Springer-Verlag.

\bibitem[AG18]{AG18}
Kook~Jin Ahn and Sudipto Guha.
\newblock Access to data and number of iterations: Dual primal algorithms for
  maximum matching under resource constraints.
\newblock {\em ACM Trans. Parallel Comput.}, 4(4), Jan 2018.

\bibitem[AJJ{\etalchar{+}}22]{assadi2022semi}
Sepehr Assadi, Arun Jambulapati, Yujia Jin, Aaron Sidford, and Kevin Tian.
\newblock Semi-streaming bipartite matching in fewer passes and optimal space.
\newblock In {\em Proceedings of the 2022 Annual ACM-SIAM Symposium on Discrete
  Algorithms (SODA)}, pages 627--669. SIAM, 2022.

\bibitem[AKSY20]{AKSY20}
Sepehr Assadi, Gillat Kol, Raghuvansh~R. Saxena, and Huacheng Yu.
\newblock Multi-pass graph streaming lower bounds for cycle counting, max-cut,
  matching size, and other problems.
\newblock In {\em 2020 IEEE 61st Annual Symposium on Foundations of Computer
  Science (FOCS)}, pages 354--364, 2020.

\bibitem[ALT21]{ALT21}
Sepehr Assadi, S.~Cliff Liu, and Robert~E. Tarjan.
\newblock {\em An Auction Algorithm for Bipartite Matching in Streaming and
  Massively Parallel Computation Models}, pages 165--171.
\newblock 2021.

\bibitem[Ass22]{Assadi22}
Sepehr Assadi.
\newblock A two-pass (conditional) lower bound for semi-streaming maximum
  matching.
\newblock In Joseph~(Seffi) Naor and Niv Buchbinder, editors, {\em Proceedings
  of the 2022 {ACM-SIAM} Symposium on Discrete Algorithms, {SODA} 2022, Virtual
  Conference / Alexandria, VA, USA, January 9 - 12, 2022}, pages 708--742.
  {SIAM}, 2022.

\bibitem[BDL21]{BDL21}
Aaron Bernstein, Aditi Dudeja, and Zachary Langley.
\newblock A framework for dynamic matching in weighted graphs.
\newblock In {\em Proceedings of the 53rd Annual ACM SIGACT Symposium on Theory
  of Computing}, STOC 2021, page 668–681, New York, NY, USA, 2021.
  Association for Computing Machinery.

\bibitem[Ber81]{bertsekas1981new}
Dimitri~P Bertsekas.
\newblock A new algorithm for the assignment problem.
\newblock {\em Mathematical Programming}, 21(1):152--171, 1981.

\bibitem[BFS12]{BFS12}
Guy~E. Blelloch, Jeremy~T. Fineman, and Julian Shun.
\newblock Greedy sequential maximal independent set and matching are parallel
  on average.
\newblock In {\em {ACM} Symposium on Parallelism in Algorithms and
  Architectures (SPAA)}, pages 308--317, 2012.

\bibitem[BHH19]{BHH19}
Soheil Behnezhad, MohammadTaghi Hajiaghayi, and David~G. Harris.
\newblock Exponentially faster massively parallel maximal matching.
\newblock In David Zuckerman, editor, {\em 60th {IEEE} Annual Symposium on
  Foundations of Computer Science, {FOCS} 2019, Baltimore, Maryland, USA,
  November 9-12, 2019}, pages 1637--1649. {IEEE} Computer Society, 2019.

\bibitem[BKS17]{beame2017communication}
Paul Beame, Paraschos Koutris, and Dan Suciu.
\newblock Communication steps for parallel query processing.
\newblock {\em Journal of the ACM (JACM)}, 64(6):1--58, 2017.

\bibitem[BOS{\etalchar{+}}13]{Birn2013}
Marcel Birn, Vitaly Osipov, Peter Sanders, Christian Schulz, and Nodari
  Sitchinava.
\newblock Efficient parallel and external matching.
\newblock In {\em International Conference on Parallel Processing (Euro-Par)},
  page 659–670, 2013.

\bibitem[DGS86]{demange1986multi}
Gabrielle Demange, David Gale, and Marilda Sotomayor.
\newblock Multi-item auctions.
\newblock {\em Journal of political economy}, 94(4):863--872, 1986.

\bibitem[DNO14]{DNO14}
Shahar Dobzinski, Noam Nisan, and Sigal Oren.
\newblock Economic efficiency requires interaction.
\newblock In {\em Proceedings of the Forty-Sixth Annual ACM Symposium on Theory
  of Computing}, STOC '14, page 233–242, New York, NY, USA, 2014. Association
  for Computing Machinery.

\bibitem[Edm65a]{edmonds1965maximum}
Jack Edmonds.
\newblock Maximum matching and a polyhedron with 0, 1-vertices.
\newblock {\em Journal of Research of the National Bureau of Standards B},
  69:125--130, 1965.

\bibitem[Edm65b]{Edmonds1965PathsTA}
Jack Edmonds.
\newblock Paths, trees, and flowers.
\newblock {\em Canadian Journal of Mathematics}, 17:449 -- 467, 1965.

\bibitem[FKM{\etalchar{+}}05]{feigenbaum2005graph}
Joan Feigenbaum, Sampath Kannan, Andrew McGregor, Siddharth Suri, and Jian
  Zhang.
\newblock On graph problems in a semi-streaming model.
\newblock {\em Theoretical Computer Science}, 348(2-3):207--216, 2005.

\bibitem[FMU22]{FMU22}
Manuela Fischer, Slobodan Mitrovi\'{c}, and Jara Uitto.
\newblock Deterministic $(1+\eps)$-approximate maximum matching with
  poly($1/\eps$) passes in the semi-streaming model and beyond.
\newblock In {\em Proceedings of the 54th Annual ACM SIGACT Symposium on Theory
  of Computing}, STOC 2022, page 248–260, New York, NY, USA, 2022.
  Association for Computing Machinery.

\bibitem[FN18]{FischerN18}
Manuela Fischer and Andreas Noever.
\newblock Tight analysis of parallel randomized greedy {MIS}.
\newblock In {\em {ACM-SIAM} Symposium on Discrete Algorithms}, pages
  2152--2160, 2018.

\bibitem[GGM22]{GGM22}
Mohsen Ghaffari, Christoph Grunau, and Slobodan Mitrovi\'c.
\newblock Massively parallel algorithms for $b$-matching.
\newblock In {\em ACM Symposium on Parallelism in Algorithms and Architectures
  (SPAA)}, 2022.

\bibitem[GKK]{GKK12}
Ashish Goel, Michael Kapralov, and Sanjeev Khanna.
\newblock {\em On the communication and streaming complexity of maximum
  bipartite matching}, pages 468--485.

\bibitem[GKMS19]{GKMS19}
Buddhima Gamlath, Sagar Kale, Slobodan Mitrovic, and Ola Svensson.
\newblock Weighted matchings via unweighted augmentations.
\newblock In {\em Proceedings of the 2019 ACM Symposium on Principles of
  Distributed Computing}, PODC '19, page 491–500, New York, NY, USA, 2019.
  Association for Computing Machinery.

\bibitem[GP13]{GP13}
Manoj Gupta and Richard Peng.
\newblock Fully dynamic {(1+} e)-approximate matchings.
\newblock In {\em {FOCS}}, pages 548--557. {IEEE} Computer Society, 2013.

\bibitem[GSZ11]{goodrich2011sorting}
Michael~T Goodrich, Nodari Sitchinava, and Qin Zhang.
\newblock Sorting, searching, and simulation in the mapreduce framework.
\newblock In {\em International Symposium on Algorithms and Computation}, pages
  374--383. Springer, 2011.

\bibitem[Har06]{Harvey2006AlgebraicSA}
Nicholas J.~A. Harvey.
\newblock Algebraic structures and algorithms for matching and matroid
  problems.
\newblock {\em 2006 47th Annual IEEE Symposium on Foundations of Computer
  Science (FOCS'06)}, pages 531--542, 2006.

\bibitem[HK71]{HK71}
John~E. Hopcroft and Richard~M. Karp.
\newblock A n5/2 algorithm for maximum matchings in bipartite.
\newblock In {\em 12th Annual Symposium on Switching and Automata Theory (swat
  1971)}, pages 122--125, 1971.

\bibitem[HS22]{HS22}
Shang{-}En Huang and Hsin{-}Hao Su.
\newblock (1-{\(\epsilon\)})-approximate maximum weighted matching in
  poly(1/{\(\epsilon\)}, log n) time in the distributed and parallel settings.
\newblock {\em CoRR}, abs/2212.14425, 2022.

\bibitem[J{\'a}J92]{jaja1992introduction}
Joseph J{\'a}J{\'a}.
\newblock {\em An introduction to parallel algorithms}.
\newblock Addison Wesley Longman Publishing Co., Inc., 1992.

\bibitem[JLS19]{JLS19}
Arun Jambulapati, Yang~P. Liu, and Aaron Sidford.
\newblock Parallel reachability in almost linear work and square root depth.
\newblock In {\em 2019 IEEE 60th Annual Symposium on Foundations of Computer
  Science (FOCS)}, pages 1664--1686, 2019.

\bibitem[Kap13]{kapralov2013better}
Michael Kapralov.
\newblock Better bounds for matchings in the streaming model.
\newblock In {\em Proceedings of the twenty-fourth annual ACM-SIAM symposium on
  Discrete algorithms}, pages 1679--1697. SIAM, 2013.

\bibitem[KRZ21]{KRZ21}
Christian Konrad, Peter Robinson, and Viktor Zamaraev.
\newblock Robust lower bounds for graph problems in the blackboard model of
  communication.
\newblock {\em CoRR}, abs/2103.07027, 2021.

\bibitem[KSV10]{karloff2010model}
Howard Karloff, Siddharth Suri, and Sergei Vassilvitskii.
\newblock A model of computation for mapreduce.
\newblock In {\em Proceedings of the twenty-first annual ACM-SIAM symposium on
  Discrete Algorithms}, pages 938--948. SIAM, 2010.

\bibitem[Kö16]{Konig1916}
D.~König.
\newblock Über graphen und ihre anwendung auf determinantentheorie und
  mengenlehre.
\newblock {\em Mathematische Annalen}, 77:453--465, 1916.

\bibitem[LPSP15]{LPPS15}
Zvi Lotker, Boaz Patt-Shamir, and Seth Pettie.
\newblock Improved distributed approximate matching.
\newblock {\em J. ACM}, 62(5), nov 2015.

\bibitem[LS20]{LS20}
Yang~P. Liu and Aaron Sidford.
\newblock {\em Faster Energy Maximization for Faster Maximum Flow}, page
  803–814.
\newblock Association for Computing Machinery, New York, NY, USA, 2020.

\bibitem[LSZ20]{liu2020breaking}
S~Cliff Liu, Zhao Song, and Hengjie Zhang.
\newblock Breaking the $ n $-pass barrier: A streaming algorithm for maximum
  weight bipartite matching.
\newblock {\em arXiv preprint arXiv:2009.06106}, 2020.

\bibitem[Mad13]{madry13}
Aleksander Madry.
\newblock Navigating central path with electrical flows: From flows to
  matchings, and back.
\newblock In {\em 2013 IEEE 54th Annual Symposium on Foundations of Computer
  Science}, pages 253--262, 2013.

\bibitem[MS04]{MS04}
M.~Mucha and P.~Sankowski.
\newblock Maximum matchings via gaussian elimination.
\newblock In {\em 45th Annual IEEE Symposium on Foundations of Computer
  Science}, pages 248--255, 2004.

\bibitem[MV80]{MV80}
Silvio Micali and Vijay~V. Vazirani.
\newblock An $o(\sqrt{|v|} \cdot |e|)$ algorithm for finding maximum matching
  in general graphs.
\newblock In {\em 21st Annual Symposium on Foundations of Computer Science
  (sfcs 1980)}, pages 17--27, 1980.

\bibitem[Nis21]{nisan2021demand}
Noam Nisan.
\newblock The demand query model for bipartite matching.
\newblock In {\em Proceedings of the 2021 ACM-SIAM Symposium on Discrete
  Algorithms (SODA)}, pages 592--599. SIAM, 2021.

\bibitem[R{\etalchar{+}}90]{ramachandran1990parallel}
Vijaya RAMACHANDRAN et~al.
\newblock Parallel algorithms for shared-memory machines.
\newblock In {\em Algorithms and Complexity}, pages 869--941. Elsevier, 1990.

\bibitem[SV82]{shiloach1982n2log}
Yossi Shiloach and Uzi Vishkin.
\newblock An $o(n^2\log n)$ parallel max-flow algorithm.
\newblock {\em Journal of Algorithms}, 3(2):128--146, 1982.

\bibitem[SW17]{SW17}
Daniel Stubbs and Virginia~Vassilevska Williams.
\newblock Metatheorems for dynamic weighted matching.
\newblock In Christos~H. Papadimitriou, editor, {\em 8th Innovations in
  Theoretical Computer Science Conference, {ITCS} 2017, January 9-11, 2017,
  Berkeley, CA, {USA}}, volume~67 of {\em LIPIcs}, pages 58:1--58:14. Schloss
  Dagstuhl - Leibniz-Zentrum f{\"{u}}r Informatik, 2017.

\bibitem[ZH23]{ZH23}
Da~Wei Zheng and Monika Henzinger.
\newblock Multiplicative auction algorithm for approximate maximum weight
  bipartite matching.
\newblock {\em arXiv preprint arXiv:2301.09217}, 2023.

\end{thebibliography}

\end{document}